\newtheorem*{rep@theorem}{\rep@title}
\newcommand{\newreptheorem}[2]{%
\newenvironment{rep#1}[1]{%
 \def\rep@title{#2 \ref{##1}}%
 \begin{rep@theorem}}%
 {\end{rep@theorem}}}
\newtheorem*{rep@proposition}{\rep@title}
\newcommand{\newrepproposition}[2]{%
\newenvironment{rep#1}[1]{%
 \def\rep@title{#2 \ref{##1}}%
 \begin{rep@proposition}}%
 {\end{rep@proposition}}}
\newtheorem*{rep@lemma}{\rep@title}
\newcommand{\newreplemma}[2]{%
\newenvironment{rep#1}[1]{%
 \def\rep@title{#2 \ref{##1}}%
 \begin{rep@lemma}}%
 {\end{rep@lemma}}}
\def\removevs{}  
\newcommand{\forceremove}[1]{}
\newcommand{\removing}[1]{\textcolor{blue}{#1}}
\newenvironment{Solns}{{\bf Begin Solution:} \newline \newline}{{\bf End Solution.} \newline}
\newcommand{\honours}[1]{#1}
\newcommand{\removing}[1]{}
\newenvironment{Solns}{}{}
\newcommand{\honours}[1]{}  
\def\appendixvs{1}  
\newcommand{\appie}[1]{\textcolor{blue}{Move to appendix #1}}
\newenvironment{Solns}{{\bf Begin Solution:} \newline \newline}{{\bf End Solution.} \newline}
\newcommand{\honours}[1]{#1}
\newcommand{\appie}[1]{\textcolor{blue}{moved to appendix}}
\newcommand{\honours}[1]{}  
\definecolor{greytext}{gray}{0.5}
\titleformat*{\section}{\singlespacing\raggedright\bfseries\Large}
\titleformat*{\subsection}{\singlespacing\raggedright\bfseries\large}
\titleformat*{\subsubsection}{\singlespacing\raggedright\bfseries}
\titleformat*{\paragraph}{\singlespacing\raggedright\itshape}
\newtheorem{theorem}{Theorem}[section] %
\newtheorem{defn}[theorem]{Definition} 
\newtheorem{cor}[theorem]{Corollary}
\newtheorem{fact}[theorem]{Fact}
\newtheorem{lemma}[theorem]{Lemma}
\newtheorem{prop}[theorem]{Proposition}
\newtheorem{conj}[theorem]{Conjecture}
\newtheorem{prob}[theorem]{Problem}
\DeclareUrlCommand\DOI{}
\title{Diversity Embeddings and the Hypergraph Sparsest Cut}
\begin{document}



\begin{center}
{\Large\bf Diversity Embeddings and the Hypergraph Sparsest Cut}
\end{center}

\vspace*{0.5cm}

\begin{center}
    Adam D. Jozefiak\footnote{Massachusetts Institute of Technology, \url{jozefiak@mit.edu}}, \hspace{1cm} F. Bruce Shepherd\footnote{University of British Columbia, \url{fbrucesh@cs.ubc.ca}}
\end{center}

\vspace*{0.5cm}

\begin{abstract}
Good approximations have been attained for the sparsest cut problem by rounding solutions to convex relaxations via low-distortion metric embeddings \cite{linial1995geometry, aumann1998log}. Recently, Bryant and Tupper showed that this approach extends to the hypergraph setting by formulating a linear program whose solutions are so-called diversities which are rounded via diversity embeddings into $\ell_1$ \cite{BTGeom}. Diversities are a generalization of metric spaces in which the nonnegative function is defined on all subsets as opposed to only on pairs of elements. 

We show that this approach yields a polytime $O(\log{n})$-approximation when either the supply or demands are given by a graph. This result improves upon Plotkin et al.’s $O(\log{(kn)}\log{n})$-approximation \cite{plotkin1993bounds}, where $k$ is the number of demands, for the setting where the supply is given by a graph and the demands are given by a hypergraph. Additionally, we provide a polytime $O(\min{\{r_G,r_H\}}\log{r_H}\log{n})$-approximation for when the supply and demands are given by hypergraphs whose hyperedges are bounded in cardinality by $r_G$ and $r_H$ respectively. 

To establish these results we provide an $O(\log{n})$-distortion $\ell_1$ embedding for the class of diversities known as diameter diversities. This improves upon Bryant and Tupper’s $O(\log^2{n})$-distortion embedding \cite{BTGeom}. The smallest known distortion with which an arbitrary diversity can be embedded into $\ell_1$ is $O(n)$. We show that for any $\epsilon > 0$ and any $p>0$, there is a family of diversities which cannot be embedded into $\ell_1$ in polynomial time with distortion smaller than $O(n^{1-\epsilon})$ based on querying the diversities on sets of cardinality at most $O(\log^p{n})$, unless $P=NP$. This disproves (an algorithmic refinement of) Bryant and Tupper's conjecture that there exists an $O(\sqrt{n})$-distortion $\ell_1$ embedding based off a diversity's induced metric. In addition, we demonstrate via hypergraph cut sparsifiers that it is sufficient to develop a low-distortion embedding for diversities induced by sparse hypergraphs to obtain good approximations for the sparsest cut in hypergraphs.
\end{abstract}

\vspace*{1cm}
\newpage

\section{Introduction}\label{sec:IntroIntro}

The sparsest cut problem is a fundamental problem in theoretical computer science. In this paper we consider the sparsest cut problem in the most general hypergraph setting. That is, where an instance is defined by two hypergraphs, a supply hypergraph and a demand hypergraph.

Let $G = (V,E_G,w_G)$ be a hypergraph with node set $V$, hyperedge set $E_G$, and nonnegative hyperedge weights $w_G:E_G\to \mathbb{R}_+$ and $H = (V,E_H,w_H)$ be a hypergraph with node set $V$, hyperedge set $E_H$, and nonnegative hyperedge weights $w_H:E_H\to\mathbb{R}_+$. We refer to $G$ and $H$ as the {\em supply hypergraph} and the {\em demand hypergraph}, respectively. In the following, the {\em rank} of a hypergraph is the cardinality of a largest hyperedge. We use the notation $r_G$ for the rank of $G$ and $r_H$ for the rank $H$.
In settings where $G$ or $H$ is a graph ($r_G=2$ or $r_H=2$) we state explicitly that $G$ is a \emph{supply graph} and $H$ is a \emph{demand graph}, respectively.

Similar to graph settings, a cut is determined by $A \subseteq V$ such that $A \neq \emptyset, V$. We say that an edge $J$ of ($G$ or $H$) is {\em cut} by $A$ if $J \cap A, J \cap A^C \neq \emptyset$; the {\em cut induced by $A$} consists of these edges.  The {\em sparsity} of a cut,  relative to supply $G$ and demand $H$, is then the ratio of the supply edges cut by $A$ and the demand edges cut. 
More formally:
\begin{equation}
\phi(A) = \frac{\sum_{U\in E_G}w_G(U)\mathbbm{1}_{\{U\cap A \neq \emptyset, U\}}}{\sum_{S\in E_H}w_H(S)\mathbbm{1}_{\{S \cap A \neq \emptyset, S\}}}
\end{equation}
 The  {\em sparsest cut} of $G$ and $H$ is then defined as
\begin{equation}\label{eqn:SCDefnEqn}
\phi = \min_{A \subseteq V:A \neq \emptyset,V} \phi(A) 
\end{equation}

\noindent
If the underlying hypergraphs are ambiguous we may use the notation $\phi_{G,H}$ and $\phi_{G,H}(A)$.

Computing the sparsest cut is NP-hard, even when $G$ and $H$ are graphs \cite{matula1990sparsest}. Consequently, there is a rich history of approximation algorithms for the sparsest cut problem in the graph setting which has culminated in an $O(\sqrt{\log{n}}\log{\log{n}})$-approximation factor for general supply and demand graphs \cite{ALN}. A common approach to achieve such bounds begins with formulating a convex relaxation, such as a linear program (LP) or a semidefinite program (SDP), and then rounding its optimal solution to obtain an integral, but approximate, sparsest cut. Often, solutions of these convex relaxations form a metric space for the nodes $V$ and the rounding step involves embedding this metric into another metric space such as the $\ell_1$ or the $\ell_2$ metric. 

Bryant and Tupper \cite{Bryant_2012} have described a framework which extends such ``metric-relaxations'' to the hypergraph setting. This results in a convex relaxation whose solutions are vectors $\delta$ which assign nonnegative values to arbitrary subsets as opposed to only pairwise distances $d(u,v)$ from a metric space $(V,d)$. The vector $\delta$ does satisfy certain triangle inequalities and the ordered tuples $(V,\delta)$ are termed ``diversities'' by Bryant and Tupper
(see Definition~\ref{defn:Diversity}). Analogous to the approach based on metric embeddings into $\ell_1$ \cite{linial1995geometry, aumann1998log}, one can extract an approximate sparsest cut in the hypergraph setting via diversity embeddings into so-called $\ell_1$ diversities. Our work's focus is largely concerned with low-distortion embeddings of diversities into $\ell_1$ and their application to the sparsest cut in hypergraphs.

In the remainder of this section we list our contributions.
 Formal definitions for some of the objects are only  introduced later.

\subsection{Approximating the Sparsest Cut in Hypergraphs} 

Our contributions for the sparsest cut problem are in the setting where the supply and demand hypergraphs are in general hypergraphs, namely Theorem \ref{thm:GeneralAlgSparsestCut_orig}.
 This more general hypergraph setting has received relatively little investigation except for recent work in the setting where $G$ is a supply hypergraph and $H$ is a demand graph  \cite{LMSSE&SSVE,louis2015HgraphMarkov,kapralov2021towards}.

\begin{theorem}\label{thm:GeneralAlgSparsestCut_orig}(Section \ref{sec:AlgorithmicImplications})
Let $G=(V,E_G,w_G)$ be a supply hypergraph with rank $r_G$ and $H = (V,E_H,w_H)$ be a demand hypergraph with rank $r_H$. Then there is a randomized polynomial-time $O(\min\{r_G,r_H\}\log{n}\log{r_H})$-approximation algorithm for the sparsest cut of $G$ and $H$.
\end{theorem}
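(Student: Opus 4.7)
My approach follows the Bryant--Tupper paradigm \cite{BTGeom}: solve a diversity LP relaxation, embed the optimum into an $\ell_1$ diversity using the paper's earlier $O(\log n)$-distortion diameter-diversity embedding, and round the resulting cut decomposition. Concretely, I would first solve
\begin{equation*}
\min\Bigl\{\sum_{U\in E_G} w_G(U)\,\delta(U): \sum_{S\in E_H} w_H(S)\,\delta(S)\ge 1,\ \delta \text{ a diversity on } V\Bigr\}
\end{equation*}
in polynomial time via the ellipsoid method with a separation oracle for the diversity cone. Every cut $A\subseteq V$ yields a feasible cut diversity whose normalized objective equals $\phi(A)$, so the LP value lower-bounds $\phi$.

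Given the LP optimum $\delta^\star$, the induced distance $d(u,v):=\delta^\star(\{u,v\})$ is a metric, and the diameter diversity $\delta^{\mathrm{diam}}(S):=\max_{u,v\in S} d(u,v)$ satisfies $\delta^{\mathrm{diam}}(S)\le \delta^\star(S)\le (|S|-1)\,\delta^{\mathrm{diam}}(S)$, where the upper bound follows by iterating the diversity triangle inequality along a spanning path of $S$. Applying the paper's $O(\log n)$-distortion embedding to $\delta^{\mathrm{diam}}$ produces an $\ell_1$ diversity $\delta^{\ell_1}=\sum_A \lambda_A\,\delta_A$ with $\delta^{\mathrm{diam}}\le \delta^{\ell_1}\le O(\log n)\cdot \delta^{\mathrm{diam}}$. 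By averaging, some $A^\star$ in the support satisfies
\[
\phi(A^\star)\le \frac{\sum_A \lambda_A\,\mathrm{supply}(A)}{\sum_A \lambda_A\,\mathrm{demand}(A)}=\frac{\sum_U w_G(U)\,\delta^{\ell_1}(U)}{\sum_S w_H(S)\,\delta^{\ell_1}(S)},
\]
so that chaining the two sets of inequalities gives an approximation of order $O(\log n)$ times the multiplicative gap between the LP value and the diameter-diversity LP value.

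The main obstacle is bounding this gap by $O(\min\{r_G,r_H\}\log r_H)$ rather than the crude $O(r_H)$ one gets from the subadditivity bound $\delta^\star\le (|S|-1)\,\delta^{\mathrm{diam}}$. My plan is to replace the deterministic conversion of step two with a randomized partition of each demand hyperedge across $O(\log r_H)$ geometric scales of the metric $d$, so that only a logarithmic number of scales contribute to each hyperedge in expectation; this should convert the multiplicative $r_H$ into $O(\log r_H)$ at the cost of the extra logarithmic factor. A symmetric argument handling the supply hyperedges---paying $r_G$ in place of $r_H$---run in parallel, with the better of the two outputs returned, yields the $\min\{r_G,r_H\}$ factor. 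Verifying that this randomized refinement runs in polynomial time and composes cleanly with the $O(\log n)$ distortion of the diameter-diversity embedding, without incurring additional factors, is the technical crux.
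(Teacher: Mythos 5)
Your high-level architecture (diversity LP, embed into $\ell_1$, average over cuts) matches the paper, and your observation that $\delta^{\mathrm{diam}}\le\delta^\star\le(|S|-1)\delta^{\mathrm{diam}}$ is exactly how the paper produces the $r_H$ factor (via Theorem~\ref{thm:FlowCutGapDiversity} and Corollary~\ref{cor:kdiamdivembed}). But there are two substantive problems.

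First, and most importantly, you assert the LP is solvable exactly ``via the ellipsoid method with a separation oracle for the diversity cone,'' but separating the constraints $\sum_{U\in t} d_U \ge y_S$ for $t\in\mathcal{T}_{(G,S)}$ means solving a minimum-cost hypergraph Steiner problem, which is NP-hard. The paper's $\log r_H$ factor is not an optional refinement of the diameter bound --- it is precisely the price of the $O(\log|S|)$-approximate separation oracle (Theorem~\ref{thm:HSP_orig} and Corollary~\ref{cor:LPRelaxApproxAlg}). Your proposal simply does not address LP tractability, which is the central new technical ingredient needed to turn Bryant--Tupper's existential flow-cut-gap bound into an algorithm.

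Second, you have inverted the role of $\log r_H$. You read the target bound $O(\min\{r_G,r_H\}\log n\log r_H)$ as ``$\min\{r_G,r_H\}$ should be improvable to $\log r_H$ after losing $\log n$,'' and you sketch a geometric-scales randomization to achieve that. But the paper pays all three factors multiplicatively: $\alpha_{\mathrm{LP}}=O(\log r_H)$ from the separation oracle, $\alpha_{\mathrm{emb}}=O(\min\{r_G,r_H\}\log n)$ from embedding the LP optimum either as an $r_H$-diameter diversity or as a rank-$r_G$ hypergraph Steiner diversity, and $\alpha_{\mathrm{div}}=O(1)$ since the embeddings only need the induced metric. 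Your proposed randomized scale-decomposition is not present in the paper, is unsupported in your sketch, and is unnecessary: the claimed bound already allows the full $\min\{r_G,r_H\}$ loss from the diversity-to-$\ell_1$ step. Also, your ``symmetric argument'' for the $r_G$ factor is left entirely vague; the paper achieves it not by symmetrizing the diameter bound but by recognizing that the LP solution is itself a hypergraph Steiner diversity on a rank-$r_G$ hypergraph and invoking Corollary~\ref{cor:HSteinerDivEmbed} directly.
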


We follow the basic approach of Bryant and Tupper   \cite{BTGeom} who gave a similar but existential bound. In order to establish a polytime algorithm we must lose an $O(\log{r_H})$ factor in order to solve (separate) the LP relaxation in polytime (see Section~\ref{sec:micostHSP}).  This is the first  polynomial-time approximation algorithm in the general setting. 

We next consider the restricted case where the supply is a graph, but the demands arise from a hypergraph.

\begin{theorem}\label{ApproxAlgGaGraph}(Section \ref{sec:AlgorithmicImplications})
Let $G=(V,E_G,w_G)$ be a supply graph (equivalently rank $r_G = 2$) and $H = (V,E_H,w_H)$ be a demand hypergraph. Then there is a randomized polynomial-time $O(\log{n})$-approximation algorithm for the sparsest cut of $G$ and $H$.
\end{theorem}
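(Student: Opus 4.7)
The plan is to combine the diversity-LP-plus-$\ell_1$-embedding pipeline of Bryant and Tupper \cite{BTGeom} with the paper's new $O(\log n)$-distortion embedding of diameter diversities into $\ell_1$. The $r_G = 2$ hypothesis is what lets us avoid both the $O(\log r_H)$ separation loss from Section~\ref{sec:micostHSP} and the $O(r_H)$ demand-side slack that would otherwise appear when rounding against a diameter embedding.

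First I would write the LP relaxation $\min \sum_{uv \in E_G} w_G(uv)\, d(u,v)$ subject to $\sum_{S \in E_H} w_H(S)\, \delta(S) \ge 1$, over diversities $\delta$ on $V$ with $d(u,v) = \delta(\{u,v\})$. Every cut diversity is feasible after scaling, so LP-OPT $\le \phi$. Because the objective only sees pairwise values of $\delta$ when $r_G=2$, I would tighten the relaxation by forcing $\delta(S) = \delta_{\mathrm{diam}}^d(S) := \max_{u,v \in S} d(u,v)$ at each $S \in E_H$; this only raises LP-OPT (a stronger lower bound on $\phi$), and crucially the resulting program still admits a polytime separation oracle without the $O(\log r_H)$ overhead, since the remaining constraints are metric triangle inequalities on pairs plus the per-hyperedge demand constraint. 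Given an LP-optimal metric $d^*$, I would apply the paper's $O(\log n)$-distortion diameter-diversity embedding to obtain $f : V \to \ell_1^k$ satisfying $d^*(u,v) \le \|f(u)-f(v)\|_1 \le O(\log n)\, d^*(u,v)$ on pairs and $\delta_{\mathrm{diam}}^{d^*}(S) \le \delta_{\ell_1}(f(S)) \le O(\log n)\, \delta_{\mathrm{diam}}^{d^*}(S)$ on sets, where $\delta_{\ell_1}(T) := \sum_i \bigl(\max_{v \in T} f(v)_i - \min_{v \in T} f(v)_i\bigr)$.

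The rounding is then standard: the coordinate-by-coordinate cut decomposition of $f$ yields weights $\lambda_A \ge 0$ with $\|f(u)-f(v)\|_1 = \sum_A \lambda_A \mathbf{1}[A\ \text{cuts}\ (u,v)]$ and $\delta_{\ell_1}(f(S)) = \sum_A \lambda_A \mathbf{1}[A\ \text{cuts}\ S]$ simultaneously, so an averaging argument returns a cut $A$ in the support with
\[
\phi(A) \le \frac{\sum_{uv} w_G(uv)\, \|f(u)-f(v)\|_1}{\sum_{S} w_H(S)\, \delta_{\ell_1}(f(S))} \le \frac{O(\log n)\sum w_G d^*}{\sum w_H\, \delta_{\mathrm{diam}}^{d^*}(S)} \le O(\log n)\cdot \text{LP-OPT} \le O(\log n)\cdot \phi.
\]
Enumerating the polynomially many cuts in the support and returning the sparsest is polytime, with randomness entering only through the embedding. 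The main obstacle is the LP step, because the tightened constraint $\sum w_H\, \delta_{\mathrm{diam}}^d \ge 1$ is non-convex in $d$ (a convex function is required to be large); I would handle this with a two-phase approach that leans on $r_G=2$, first solving a convex Steiner-tree-diversity relaxation (separable by MST oracles on each $S \in E_H$) and then tightening the resulting metric back to the diameter ratio using that the supply objective only involves pairs. In the general setting neither tightening is available, which is precisely why Theorem~\ref{thm:GeneralAlgSparsestCut_orig} carries the $O(\min\{r_G, r_H\}\log r_H)$ overhead that is removed here.
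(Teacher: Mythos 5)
Your proposal correctly identifies that when $r_G=2$ the LP separation can be handled with Steiner-tree oracles and that the cut-decomposition rounding is standard, but the central move---replacing the LP diversity by the diameter diversity $\delta_{\mathrm{diam}}^d$---is the wrong extremalization for this case, and it creates a non-convex program that your two-phase workaround does not repair.

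When $r_G=2$ the supply objective only involves pairwise values of $\delta$, so one may replace the LP solution by any diversity sharing the same induced metric. The paper's Theorem~\ref{thm:GAGraph} replaces it by the \emph{Steiner} diversity $\delta_{\mathrm{Steiner}}^d$, the \emph{maximal} such diversity (Theorem~\ref{thm:ExtremalResults}). This preserves the supply sum, can only increase the demand sum $\sum_S w_H(S)\delta(S)$, and therefore can only decrease the ratio, so the replacement is always valid. Crucially, it keeps the feasible region convex: the constraints $\sum_{U\in t}d_U\ge y_S$ are linear in $(d,y)$, and separating them when $G$ is a graph is a Steiner tree problem, solvable approximately to a constant factor. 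The proof then embeds the resulting Steiner diversity via Theorem~\ref{thm:SteinerDivEmbedding} (FRT plus the isometric tree-diversity embedding) with distortion $O(\log n)$; the diameter-diversity embedding of Theorem~\ref{thm:DiamDivEmbeddingNew_orig} is not used here.

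You instead force the \emph{minimal} extremal diversity. First, as you observe yourself, $\sum_S w_H(S)\max_{u,v\in S}d(u,v)\ge 1$ is non-convex in $d$, which already defeats the tractability claim made two sentences earlier. Second, the two-phase fix does not close this gap. If you solve the convex Steiner LP and obtain an optimal metric $d^*$ with $\sum_S w_H(S)\,\delta_{\mathrm{Steiner}}^{d^*}(S)=1$, then passing to $\delta_{\mathrm{diam}}^{d^*}$ can shrink the demand sum by a factor up to $|S|-1\le r_H-1$: an MST on $S$ has $|S|-1$ edges each of length at most $\mathrm{diam}(S)$, so $\delta_{\mathrm{diam}}^{d^*}(S)\ge \delta_{\mathrm{Steiner}}^{d^*}(S)/(|S|-1)$, and this is tight on a path metric. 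Rescaling the metric leaves the ratio invariant, so the quantity $\bigl(\sum w_G d^*\bigr)/\bigl(\sum w_H\delta_{\mathrm{diam}}^{d^*}(S)\bigr)$ is bounded only by $O(r_H)\cdot\phi$, and the third inequality of your final display fails; the argument would give $O(r_H\log n)$, not $O(\log n)$. The diameter and $k$-diameter characterizations are the correct extremal objects when the \emph{demand} has bounded rank, underlying Corollary~\ref{cor:HAGraph} (for $r_H=2$) and the general bound of Theorem~\ref{thm:GeneralAlgSparsestCut_orig} via Corollary~\ref{cor:kdiamdivembed}, not when the supply is a graph.
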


   Theorem \ref{ApproxAlgGaGraph} is an improvement upon an existing polynomial time $O(\log{n}\log{(|E_H|r_H}))$-approximation algorithm due to Plotkin et al. \cite{plotkin1993bounds}. Specifically, the approximation factor guaranteed by Plotkin et al.'s algorithm has a logarithmic dependence on $|E_H|$, which is exponentially large in general, while Theorem \ref{ApproxAlgGaGraph} truly  guarantees a polylogarithmic approximation factor. We mention that  hypergraph demands arise in VLSI optimization research  \cite{grotschel1996packing}.

\subsection{Diversity Embeddings}

We shall see that Theorem \ref{thm:GeneralAlgSparsestCut_orig} relies critically on the following embedding result. It  improves upon the $O(\log^2{n})$-distortion embedding  of a diameter diversity (Definition~\ref{defn:DiameterDiversity}) into the $\ell_1$ diversity, $(\mathbb{R}^{O(\log{n})},\delta_1)$ (see Definition~\ref{defn:ell1Diversity}) in \cite{BTGeom}.

\begin{theorem}\label{thm:DiamDivEmbeddingNew_orig}(Section \ref{sec:DiamDivEmbed})
Let $(X,\delta_{\text{diam}})$ be a diameter diversity with induced metric $(X,d)$ and with $|X| = n$. Then there exists a randomized polynomial time embedding of $(X,\delta_{\text{diam}})$ into the $\ell_1$ diversity $(\mathbb{R}^{O(\log^2{n})},\delta_1)$ with distortion $O(\log{n})$.
\end{theorem}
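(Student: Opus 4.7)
The plan is to show that Bourgain's classical metric embedding, applied verbatim to the induced metric $(X,d)$, also serves as a diversity embedding of $(X,\delta_{\text{diam}})$ into $(\mathbb{R}^{O(\log^2 n)},\delta_1)$ with distortion only $O(\log n)$. Concretely, for $j = 1,\ldots,\lceil\log_2 n\rceil$ and $t = 1,\ldots,\Theta(\log n)$, I would sample independent random subsets $S_{j,t}\subseteq X$ by including each element of $X$ with probability $2^{-j}$, and define the coordinate $\phi_{j,t}(x) = (1/\log n)\, d(x,S_{j,t})$. The resulting map $\phi \colon X \to \mathbb{R}^k$ with $k = O(\log^2 n)$ is exactly Bourgain's embedding with a convenient rescaling; I would then argue that upper and lower bounds for it translate cleanly from the metric setting to the diversity setting.

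For the upper bound, each coordinate $\phi_{j,t}$ is $(1/\log n)$-Lipschitz on $(X,d)$, so for any $A\subseteq X$,
\[
\max_{x\in A}\phi_{j,t}(x)\;-\;\min_{x\in A}\phi_{j,t}(x)\;\leq\;\delta_{\text{diam}}(A)/\log n.
\]
Summing over all $k = O(\log^2 n)$ coordinates yields $\delta_1(\phi(A)) \leq O(\log n)\cdot \delta_{\text{diam}}(A)$. For the lower bound, the key observation is that the $\ell_1$ diversity of a set dominates the $\ell_1$ distance between any two of its points: $\delta_1(\phi(A))\geq \|\phi(x^*)-\phi(y^*)\|_1$ for every $x^*,y^*\in A$. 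I would then pick $x^*,y^*\in A$ with $d(x^*,y^*) = \delta_{\text{diam}}(A)$ and invoke Bourgain's guarantee --- which after $\Theta(\log n)$-fold amplification and a union bound over the $\binom{n}{2}$ pairs ensures $\|\phi(x)-\phi(y)\|_1 = \Omega(d(x,y))$ for all pairs simultaneously with high probability --- to conclude $\delta_1(\phi(A)) \geq \Omega(\delta_{\text{diam}}(A))$. Dividing the two bounds gives the claimed $O(\log n)$ distortion.

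The conceptual punchline is that diameter diversities are precisely the setting where the metric-to-diversity lift is essentially free: the upper bound on each coordinate spread piggybacks on the Lipschitz constant, while the lower bound is witnessed by the single diameter-realizing pair inside $A$. I do not foresee a substantive obstacle beyond making sure Bourgain's lower bound holds uniformly over all pairs with high probability, which is the standard amplification argument; polynomial runtime is then immediate from sampling the $S_{j,t}$ and computing the distances $d(\cdot,S_{j,t})$ via standard shortest-path routines. The improvement over the $O(\log^2 n)$ bound of \cite{BTGeom} amounts precisely to noting that the diversity upper bound need not route through the metric \emph{distortion} (costing an extra $\log n$) but only through the per-coordinate Lipschitz constant.
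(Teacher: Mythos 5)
Your proposal is correct and follows essentially the same route as the paper: the paper invokes the Aumann--Rabani/LLR scaled Fr\'echet version of Bourgain's embedding as a black box (their scaling is $1/O(\log^2 n)$ per coordinate rather than your $1/\log n$, a purely cosmetic shift of where the factor $O(\log n)$ sits in the two-sided bound), and then, exactly as you do, derives the upper bound on $\delta_1(f(A))$ from the per-coordinate Lipschitz constant rather than from the metric distortion, and derives the lower bound by restricting to the diameter-realizing pair in $A$. Your ``conceptual punchline'' paragraph is precisely the paper's insight.
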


In contrast, we  give bad news in terms of embedding a general diversity.

\begin{theorem}\label{thm:InapproxResult_orig}(Section \ref{sec:Inapprox})
For any $p \geq 0$ and for any $\epsilon > 0$, there does not exist a polynomial-time diversity $\ell_1$ embedding that queries a diversity on sets of cardinality at most $O(\log^p{n})$ and achieves a distortion of $O(n^{1-\epsilon})$, unless P=NP.
\end{theorem}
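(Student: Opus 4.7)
The strategy is a reduction from an NP-hard approximation problem known to be $n^{1-\epsilon}$-inapproximable, the natural candidates being \textsc{Max-Clique} (H\r{a}stad) or \textsc{Chromatic Number} (Zuckerman). The plan is to associate to each graph $G=(V_0,E_0)$ on $n$ vertices a diversity $(V,\delta_G)$ on a ground set $V$ of size polynomial in $n$, enjoying three properties: (i) it is a valid diversity (nonnegative, monotone, and satisfying the subset triangle inequality); (ii) for every $S\subseteq V$ with $|S|\le \log^{p}|V|$, the value $\delta_G(S)$ can be computed in polynomial time, so that any algorithm restricted to small-cardinality queries can be simulated efficiently in the reduction; and (iii) the single value $\delta_G(V)$ on the full ground set encodes the hard quantity of $G$ with a large multiplicative gap, the ratio between YES and NO instances being at least $|V|^{1-\epsilon'}$ for some $\epsilon'<\epsilon$.

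Given such a family, suppose for contradiction that $\mathcal{A}$ is a polynomial-time algorithm that, querying a diversity only on sets of cardinality $\le c\log^{p}|V|$, outputs an embedding into $(\mathbb{R}^{d},\delta_1)$ with distortion $O(|V|^{1-\epsilon})$. Running $\mathcal{A}$ on the input diversity $\delta_G$ produces, in polynomial time, vectors whose $\ell_1$ diversity on the whole ground set approximates $\delta_G(V)$ to within a factor $O(|V|^{1-\epsilon})$. Since this factor is strictly smaller than the constructed gap, thresholding this approximation distinguishes the YES and NO instances of the reduction in polynomial time, contradicting the assumed hardness and yielding $\mathrm{P}=\mathrm{NP}$.

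The main obstacle, and the combinatorial heart of the argument, is the construction in (i)--(iii). A natural candidate is a Steiner-type diversity in an auxiliary hypergraph $\mathcal{H}(G)$ whose cheap hyperedges correspond to cliques (or, dually, independent sets) of $G$: any $S$ of size $O(\log^{p}|V|)$ lies within the union of only a few hyperedges, so $\delta_G(S)$ has a canonical value that can be computed by a polynomial-time Steiner-type routine on small instances, while covering the full set $V$ requires roughly $|V|/\omega(G)$ hyperedges, producing the desired sensitivity to the clique number. Verifying the subset triangle inequality for this Steiner-cover diversity and designing the hyperedge family so that small-set queries admit a polynomial-time answer while the full-set value provably captures $\omega(G)$ (or $\chi(G)$) up to the claimed gap is the crux; one may alternatively attempt a direct combinatorial definition such as $\delta_G(S)=|S|-\omega(G[S])+1$ and check the axioms by hand, but the hypergraph Steiner route is more promising because it pairs cleanly with the small-cardinality query model.
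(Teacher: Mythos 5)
Your high-level plan is the same as the paper's: reduce from an $n^{1-\epsilon}$-inapproximable problem (you name Max-Clique / Chromatic Number, the paper uses Maximum Independent Set, which is equivalent under complementation), build a diversity whose small-set values are polytime-computable and whose full-set value $\delta(V)$ encodes the hard quantity, then observe that a low-distortion embedding would approximate $\delta(V)$ via $\delta_1(f(V))$ and hence solve the hard problem. Your properties (i)--(iii) and the thresholding/contradiction step are exactly the paper's logic. So the framework is right; the gap is entirely in the unresolved construction.

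Where you diverge is the construction itself, and this is where the proposal has a real hole. You lean toward a hypergraph Steiner diversity whose hyperedges are cliques (or independent sets) of $G$, but you do not verify that this is a diversity, nor that small-set Steiner values over an exponentially large implicit hyperedge family are polytime-computable, and you acknowledge these as ``the crux'' without resolving them. Your fallback direct candidate, $\delta_G(S)=|S|-\omega(G[S])+1$, fails on condition (iii): the full-set value $n-\omega(G)+1$ is only \emph{additively} sensitive to $\omega(G)$, so an $O(n^{1-\epsilon})$ \emph{multiplicative} approximation to it gives essentially no information about $\omega(G)$ when $\omega(G)=\Theta(n)$ (e.g., distinguishing $\omega=n/2$ from $\omega=n/4$ changes $n-\omega+1$ by only a constant factor). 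It also fails axiom 2 as written ($\delta_G(\{x\})=1\neq 0$), though that is a cosmetic patch.

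The paper instead takes the construction you set aside, and it is simpler than you anticipate: define $\delta_{IS}(A)=\alpha(G[A])$, the maximum independent set contained in $A$, with $\delta_{IS}(A)=0$ for $|A|\le 1$. This function is nonnegative, monotone, and subadditive (the paper's Lemma~\ref{lemma:ISLemma}), and the paper's Lemma~\ref{lemma:subadditive} shows that any nonnegative increasing subadditive set function, zeroed on singletons, is a pseudo-diversity --- so no case-by-case verification of the triangle inequality on your ad hoc formula is needed. Crucially, $\delta_{IS}(V)=\alpha(G)$ exactly, so the $n^{1-\epsilon}$-inapproximability of Maximum Independent Set transfers \emph{directly}, with no gap-amplification argument. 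Small-set queries $\delta_{IS}(A)$ for $|A|=O(\log^p n)$ are answered by brute-force enumeration of subsets of $A$. If you replace your candidate construction by $\alpha(G[A])$ and invoke the subadditive-set-function lemma, your plan becomes the paper's proof.
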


 Theorem \ref{thm:InapproxResult_orig} provides an inapproximability result for embedding diversities into $\ell_1$ based only on their induced metric space. This answers an algorithmic-refinement of Bryant and Tupper's conjecture that there exists an $O(\sqrt{n})$-distortion embedding into $\ell_1$ solely using the induced metric of a diversity \cite{BTopen}, see Conjecture \ref{conj:algBTconjecture}.

\subsection{The Minimum Cost Hypergraph Steiner Problem}
\label{sec:micostHSP}

Solving the diversity relaxation for hypergraph sparsest cuts requires a subroutine for separating over exponentially many ``hypergraph Steiner tree constraints''.  The following result is used to establish a $O(\log{r_H})$-approximate separation oracle.

\begin{theorem}\label{thm:HSP_orig}(Section \ref{ch:HSP})
There exists a polynomial time $O(\log{n})$-approximation algorithm for the minimum cost hypergraph Steiner problem. Specifically, for a hypergraph $G = (V,E,w)$ with nonnegative hyperedge weights $w:E\to\mathbb{R}_+$ and for a set of Steiner nodes $S\subseteq V$, $HSP(G,S)$ can be approximated up to a factor of $O(\log{|S|})$ in polynomial time.
\end{theorem}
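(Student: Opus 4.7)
The plan is to reduce the hypergraph Steiner problem to the \emph{node-weighted Steiner tree} problem (NWST) on a bipartite auxiliary graph, and then invoke the classical $O(\log k)$-approximation of Klein and Ravi for NWST, where $k$ is the number of terminals. This sidesteps having to analyze any LP for HSP directly, and reuses a well-understood graph result.

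Concretely, given an instance $(G,S)$ with $G=(V,E,w)$, I would construct a bipartite graph $G'=(V',E')$ with $V'=V\cup\{v_e:e\in E\}$ and $E'=\{\{u,v_e\}:u\in V,\ e\in E,\ u\in e\}$. Assign node weights $w'(u)=0$ for $u\in V$ and $w'(v_e)=w(e)$ for $e\in E$, and take the terminal set in $G'$ to be $S$. The structural claim to establish is that the optimal hypergraph Steiner value on $(G,S)$ equals the optimal NWST value on $(G',w',S)$, with feasible solutions mapping back and forth at equal cost. In one direction, any hyperedge set $F\subseteq E$ that connects $S$ in $G$ induces a connected subgraph of $G'$ on $V\cup\{v_e:e\in F\}$ (since a hyperpath $e_1,\dots,e_\ell$ in $F$ from $s$ to $t$ unfolds into an alternating walk $s,v_{e_1},u_1,v_{e_2},u_2,\dots,v_{e_\ell},t$ in $G'$); extracting a spanning tree yields an NWST solution of node weight exactly $w(F)$, because the $V$-side nodes contribute zero. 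Conversely, any feasible NWST tree $T$ in $G'$ yields the hyperedge set $F=\{e:v_e\in T\}$, and any $T$-path between two terminals $s,t\in S$ projects to a valid hyperpath from $s$ to $t$ in $F$, so $F$ connects $S$ with cost equal to the node weight of $T$.

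With this equivalence in hand, the second step is to run Klein and Ravi's polynomial-time $O(\log|S|)$-approximation algorithm for NWST on $(G',w',S)$, producing a tree of node weight at most $O(\log|S|)\cdot\mathrm{OPT}_{\mathrm{NWST}}(G',w',S)$. Converting back via $F=\{e:v_e\in T\}$ yields a feasible hypergraph Steiner solution of cost at most $O(\log|S|)\cdot\mathrm{OPT}_{\mathrm{HSP}}(G,S)$, establishing the stated approximation ratio. Running times are polynomial because $|V'|$ and $|E'|$ are linear in $|V|+\sum_{e}|e|$, which is polynomial in the input size.

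I do not expect a significant obstacle here; the construction is essentially syntactic and the NWST approximation is off-the-shelf. The one point requiring care is checking the bijective correspondence between feasibility on the two sides, and this is forced by the bipartite structure of $G'$: every path in $G'$ between two original vertices must alternate between $V$-nodes and $v_e$-nodes, so the internal Steiner ($v_e$) nodes of any NWST solution are in one-to-one correspondence with a set of hyperedges that witnesses connectivity of $S$ in the original hypergraph.
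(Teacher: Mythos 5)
Your proposal is correct and is essentially identical to the paper's proof: the same bipartite auxiliary graph with zero-weight original nodes and hyperedge-nodes weighted by $w(e)$, the same cost-preserving back-and-forth correspondence between feasible solutions, and the same final invocation of Klein and Ravi's $O(\log|S|)$-approximation for node-weighted Steiner tree.
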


To our knowledge, this is the first polynomial-time approximation algorithm for  the {\em minimum-cost hypergraph Steiner} problem, a generalization of the classical Steiner tree problem. One special case which has been studied   is the spanning connected sub-hypergraph problem for which a (tight) logarithmic approximation is known \cite{baudis2000approximating}.

\subsection{A Diversity Embedding Conjecture}
\label{sec:divembed}

While there exist polylogarithmic distortion embeddings for special types of diversities (such as diameter diversities) we are missing a result for general diversities. Indeed Theorem~\ref{thm:InapproxResult_orig} is an inapproximability result in that  case. We  propose a conjecture which offers a work-around. 

It turns out that any diversity can be implicitly defined by some (weighted) hypergraph as we describe now.   Given a hypergraph $H = (V, E)$ and $S \subseteq V$ we let $\mathcal{T}_{S}$ denote the collection of subsets of hyperedges that induce  connected subhypergraphs whose nodes include $S$.

\begin{defn}[Hypergraph Steiner Diversity]\label{defn:HSteinerDiversity}
Let $H = (V, E, w)$ be a hypergraph with node set $V$, hyperedge set $E$, and nonnegative hyperedge weights $w:E\to \mathbb{R}_+$. The hypergraph Steiner diversity $(V,\delta_\mathcal{H})$ is defined as

\begin{equation}
\delta_\mathcal{H}(A) = \begin{cases}
\min_{t\in\mathcal{T}_A}\sum_{U\in t}w(U) & \text{ if $|A| \geq 2$} \\
0 & \text{otherwise}.
\end{cases}
\end{equation} 
\end{defn}

We conjecture that there is a diversity embedding into $\ell_1$ with distortion polylogarithmic in the encoding size of any hypergraph which induces the embedding.  For our applications, the hypergraph weights are {\em polynomially bounded}, i.e., have encoding size polynomial in $n=|V(H)|$ and $m=|E(H)|$.

\begin{conj}\label{conj:OurConj}
There is an  algorithm which embeds an arbitrary diversity $(V,\delta_\mathcal{H})$ into the $\ell_1$ diversity with distortion polylogarithmic in the  encoding size of $\mathcal{H}$. If the weights are polynomially bounded, this distortion is $poly(\log(m,n))$.
\end{conj}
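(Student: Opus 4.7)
The plan is to port the Bartal / Fakcharoenphol-Rao-Talwar probabilistic tree embedding from the metric world to hypergraph Steiner diversities. In the graph case, this template already works: FRT produces a distribution over dominating trees with stretch $O(\log n)$; the Steiner diversity of a tree embeds isometrically into $\ell_1$ (each tree edge corresponds to a bipartition and hence a cut diversity, and the tree Steiner value is the sum of weights of cut edges). Taking the expected cut weight yields an $O(\log n)$-distortion $\ell_1$ embedding of the graph Steiner diversity. The goal is to execute this template on hypergraphs while paying only an extra $\text{polylog}(m)$ factor in the distortion.

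I would proceed as follows. First, reduce to polynomially bounded weights by rounding each $w(e)$ to the nearest power of $1+1/n$; under the conjecture's hypothesis this leaves at most $O(\log n + \log m)$ dyadic scales $\Delta = 2^i$ to consider. Second, at each scale $\Delta$ construct a random partition of $V$ whose parts all have $\delta_\mathcal{H}$-diameter $\leq \Delta$, using a ball-growing scheme over the diversity (balls of the form $B_x(r) = \{y : \delta_\mathcal{H}(\{x,y\}) \leq r\}$, with radii chosen from a truncated-exponential or FRT-style distribution). Third, nest these partitions across scales into a laminar family, producing a random tree diversity $\delta_\mathcal{T}$ that dominates $\delta_\mathcal{H}$ (by construction the cost assigned to a cluster at level $\Delta$ is $\Theta(\Delta)$, so any Steiner structure for $A$ in $\mathcal{T}$ lifts to one of no greater $\delta_\mathcal{H}$-cost). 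Finally, since each realized tree diversity embeds isometrically into $\ell_1$, the overall embedding is obtained by concatenating the cut coordinates, weighted appropriately, yielding $\text{poly}(\log n, \log m)$ distortion. Theorem~\ref{thm:HSP_orig} is invoked here to make the partition scheme algorithmic, since the ball-growing procedure needs approximate oracles for $\delta_\mathcal{H}$ on the sets it queries. As an alternative route, one could first apply a hypergraph cut sparsifier to replace $\mathcal{H}$ by an $\widetilde{O}(n)$-edge weighted hypergraph that preserves all cuts, and hence $\delta_\mathcal{H}$ up to polylog factors, reducing the problem to the case $m = \widetilde{O}(n)$.

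The main obstacle lies in step two, proving the hypergraph analogue of the FRT padding lemma: for any set $A \subseteq V$ with $\delta_\mathcal{H}(A) \leq \Delta$, the probability that the $\Delta$-scale random partition splits $A$ should be $O(\log n) \cdot \delta_\mathcal{H}(A)/\Delta$. In the graph case the cut probability of a pair is controlled by how narrow a window of radii $r$ separates $u$ from $v$. For a general set $A$, the correct invariant is the minimum-weight connected subhypergraph containing $A$, and a single hyperedge of weight $w$ can simultaneously bridge up to $r_H$ vertices, so one cannot simply union-bound pairwise splits. The plan is to fix an approximately optimal Steiner subhypergraph for $A$ (obtainable via Theorem~\ref{thm:HSP_orig}), and then argue that this fixed structure survives a random partition except with probability proportional to the total weight of hyperedges intersecting the partition boundary, divided by $\Delta$. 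Making this accounting tight, and in particular ensuring the dependence is $\text{polylog}(n, m)$ rather than scaling with $r_H$ or $|E|$ in a catastrophic way, is the principal technical hurdle; getting this right also dictates whether the final bound is $O((\log n)^2)$, $O(\log n \log m)$, or worse.
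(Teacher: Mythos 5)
This is a conjecture in the paper; the authors state it without proof and offer only informal motivation, so there is no ``paper proof'' to match your sketch against. Your submission is a plan of attack, and you are upfront that it is incomplete, so the review has to be on whether the plan closes the gap, and it does not.

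The FRT template you describe --- hierarchical random decompositions nested into a laminar family, each level giving a dominating tree diversity, tree diversities embedding isometrically into $\ell_1$ via Theorem~\ref{thm:TreeDiv} --- is the natural thing to try and is consistent with how Theorem~\ref{thm:SteinerDivEmbedding} handles the graph case. But the entire burden of the argument sits in the step you flag at the end: a hypergraph analogue of the FRT/CKR padding lemma for the functional $\delta_\mathcal{H}$. In the metric case, padding controls $\Pr[\text{$u,v$ separated at scale }\Delta]$ by $O(\log n)\cdot d(u,v)/\Delta$, and the union bound over $O(\log \Delta_{\max}/\Delta_{\min})$ scales gives $O(\log n)$ expected stretch. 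For $\delta_\mathcal{H}$ there is no clean analogue: the ball-growing that drives the graph analysis is a metric construction, and the quantity you need to preserve for a set $A$ is a min-cost connected subhypergraph, which is not a sum of pairwise distances along a path. Your idea of fixing an approximately optimal Steiner subhypergraph for $A$ and bounding the probability it is cut is reasonable, but the standard argument (union bound over the $O(|t|)$ hyperedges of the witness $t$, each cut with probability proportional to its weight over $\Delta$) only charges the local growth rate of the partition to pairs inside a hyperedge, and a single large hyperedge straddling many clusters is exactly the pathology the rank-dependent bounds of Theorem~\ref{thm:HSteinerToSteiner}/Corollary~\ref{cor:HSteinerDivEmbed} run into. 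Without a concrete lemma that beats the naive $r_H$ dependence, the sketch does not establish $\mathrm{poly}(\log n,\log m)$ and is not a proof. Note also that the approximate oracle from Theorem~\ref{thm:HSP_orig} only estimates $\delta_\mathcal{H}$ within a factor $O(\log n)$, so every place the partition radii or the witness $t$ depend on oracle values you incur additional loss that must be accounted for explicitly.

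Separately, the ``alternative route'' via hypergraph cut sparsifiers overclaims. A cut sparsifier of \cite{khannacutsparsifier} preserves, for every $A\subseteq V$, the total weight of hyperedges crossing $A$; it does \emph{not} preserve the Steiner diversity $\delta_\mathcal{H}(A)$, which is the minimum weight of a connected spanning subhypergraph of $A$, a quantity of a fundamentally different (covering, not cut) type. Preserving all cuts does not imply preserving all Steiner values up to polylog factors, and I do not think this is known. The paper's own use of sparsifiers is more modest and correct: sparsify the supply hypergraph $G$ in the sparsest-cut \emph{instance} so that the LP (\ref{eqn:LP1}) produces a diversity defined by a sparse hypergraph, i.e., sparsification is applied before the LP, not as an approximation-preserving transformation of $\delta_\mathcal{H}$ itself. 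If you want to pursue that route you would need a genuinely new sparsification statement for $\delta_\mathcal{H}$, which would itself be a substantial result.
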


 This conjecture is partly motivated by the fact that hypergraph cut sparsifiers allow us to replace an instance of the sparsest cut on dense hypergraphs by an instance on sparse hypergraphs. This can be achieved in polynomial time and with an arbitrarily small approximation loss using \cite{khannacutsparsifier}. Hence the diversity that arises as the solution  to the sparsest-cut convex-relaxation is one that is defined by a sparse hypergraph. We believe that this sparse structure may be easier to obtain low-distortion embeddings. 

\section{Related Work and History}\label{sec:RelatedWork}

The most significant work for general (hypergraph) sparsest cut is for the case when $G$ is a hypergraph and $H$ is a demand graph (in contrast to Theorem~\ref{ApproxAlgGaGraph} where the supply and demand roles are reversed).
In this setting, Louis \cite{louis2015HgraphMarkov} shows a randomized polytime  $O(\sqrt{\log{n}\log{r_G}}\log{\log{n}})$ approximation, where $r_G$ is the rank of the supply hypergraph $G$.

{\bf Expansion.} A closely related property to the sparsity of a cut is the {\em expansion} of the cut. This is defined as the ratio of the weight of the (hyper)edges crossing the cut to the number of nodes in the smaller partition of the cut. Formally,

\begin{defn}\label{defn:ExpansionOfCut}
Let $G=(V,E,w)$ be a (hyper)graph with nonnegative hyperedge weights $w:E\to \mathbb{R}_+$ and let $A\subseteq V$ such that $A\neq \emptyset,V$. Then the expansion of the cut $A$, $\Phi(A)$, is defined as
\begin{equation}
\Phi(A) = \frac{\sum_{U\in E}w(U)\mathbbm{1}_{\{U\cap A\neq \emptyset,U\}}}{\min\{|A|,|A^C|\}}
\end{equation}
\end{defn} Then the (hyper)graph expansion of $G$ is defined as

\begin{defn}\label{defn:ExpansionProb}
Let $G=(V,E)$ be (hyper)graph with nonnegative hyperedge weights $w:E\to \mathbb{R}_+$. Then the expansion of the (hyper)graph $G$, denoted by $\Phi$, is defined as
\begin{equation}
\Phi = \min_{A\subseteq V:A\neq\emptyset ,V}\Phi(A) = \min_{A\subseteq V:A\neq\emptyset,V}\frac{\sum_{U\in E}w(U)\mathbbm{1}_{\{U\cap A\neq \emptyset,U\}}}{\min\{|A|,|A^C|\}}
\end{equation}
\end{defn}

Often, when discussing (hyper)graph expansion, the supply (hyper)graph $G$ has unit weights and the expansion of a cut is simply the ratio of the number of (hyper)edges crossing the cut to the number of nodes in the smaller partition of the cut. The expansion of a cut is closely related to the sparsity of the cut when the demand network is a complete graph with unit weights,  called \emph{uniform demands}. The problem of computing the minimum expansion cut, even with uniform demands, is NP-hard \cite{matula1990sparsest}.
It is easy to verify that for uniform demands, and $A\subseteq V$ we have that
\begin{equation}
\sum_{S\in E_H}w_H(S)\mathbbm{1}_{\{S\cap A\neq \emptyset,S\}} = |A||A^C|
\end{equation}

\noindent
Thus for each $A\subseteq V$ with $|A| \leq |A^C|$, that $\frac{n}{2}|A| \leq |A||A^C| \leq n |A|$. Hence up to a factor of $2$, approximability of the sparsest cut and the minimum expansion cut is equivalent. In particular,  existence of an $O(\alpha)$-approximation algorithm for the sparsest cut is also an $O(\alpha)$-approximation algorithm for the (hyper)graph expansion problem and vice-versa.

{\bf Sparsest cut when $G/H$ are graphs.}  Leighton and Rao gave an $O(\log n)$ approximation algorithm for the sparsest cut problem with uniform demands \cite{leighton1999multicommodity}. Moreover, they show that the flow-cut gap in this setting is $\Omega(\log{n})$ due to an example where the supply graph is a unit-capacity constant-degree expander graph. This established that  the flow-cut gap for the uniform demand multicommodity flow problem and the sparsest cut problem  is $\Theta(\log{n})$. 
Later, Linial, London, and Rabinovich \cite{linial1995geometry} and Aumann and Rabani \cite{aumann1998log} generalize Leighton and Rao's result to the setting of  arbitrary demand graphs. In particular, they show that the flow-cut gap is bounded by the minimum distortion of embedding a finite metric space into the $\ell_1$ metric. According to Bourgain, the minimum distortion of embedding a finite metric into $\ell_1$ is $O(\log{n})$. However, before \cite{linial1995geometry} it remained  open  as to whether this distortion factor is tight; this is answered in the affirmative due to  Linial et al.'s work  coupled with Leighton and Rao lower bound. Moreover, Linial et al.  also gave the first polynomial time implementation of Bourgain's $O(\log{n})$-distortion $\ell_1$ embedding.

One can strengthen the metric relaxation of the sparsest cut problem,  by adding  negative-type metric constraints. This leads to an SDP relaxation. $(X,d)$ is a \emph{metric of the negative-type} if $(X,\sqrt{d})$ is a subset of the $\ell_2$ metric space. We let $\ell_2^2$  denote  the class of metrics of the negative-type. The minimum distortion of embedding these metrics into  $\ell_1$  yields the integrality gap for this SDP approach, and a polynomial time algorithm embedding algorithm   yields a polynomial time approximation algorithm for the sparsest cut problem. This is the approach of Arora, Rao, and Vazirani \cite{ARV} who give a $O(\sqrt{\log{n}})$-approximation algorithm for  sparsest cut problem with uniform demands. However, their approach is a randomized rounding scheme of their SDP relaxation as opposed to directly embedding an arbitrary $\ell_2^2$ metric into $\ell_1$.
 Chawla, Gupta, and R\"{a}cke prove that a finite $\ell_2^2$ metric can indeed be polytime embedded into the $\ell_1$ metric with distortion $O(\log^{\frac{3}{4}}{n})$  \cite{chawla2005embeddings}.   Arora, Lee, and Naor \cite{ALN} further improve this by showing that a finite $\ell_2^2$ metric can be embedded into $\ell_1$ with distortion $O(\sqrt{\log{n}}\log{\log{n}})$; this yields a corresponding approximation for  sparsest cut.

{\bf $G$ hypergraph and $H$ a graph.}
In this setting, Kapralov et al.  \cite{kapralov2021towards} give a polynomial time $O(\log n)$ approximation algorithm for the hypergraph expansion problem and the sparsest cut problem with uniform demands.  They use a rounding scheme of an LP relaxation of the hypergraph expansion problem that optimizes over pseudo metrics. We note that Kapralov et al.'s result actually comes after Louis and Makarychev's $O(\sqrt{\log{n}})$ approximation  \cite{LMSSE&SSVE}.
Kapralov et al.'s approximation algorithm is an LP (as opposed to SDP) and  is motivated by the construction of spectral sparsifiers for hypergraphs.
The randomized polytime $O(\sqrt {\log{n}})$-approximation algorithm of 
Louis and Makarychev  \cite{LMSSE&SSVE} for
hypergraph expansion (and hence  the sparsest cut problem with uniform demands)  matches the approximation factor for graph expansion and sparsest cut with uniform demand.
Similar to the progression of \cite{ARV} to \cite{ALN},  Louis gives a randomized polynomial time approximation algorithm for the sparsest cut problem that obtains an $O(\sqrt{\log{r_G}\log{n}}\log{\log{n}})$ approximation factor \cite{louis2015HgraphMarkov}. 

{\bf $G$ a graph and $H$ a hypergraph.}  Plotkin et al. provided a polynomial time $O(\log{n}\log{(|E_H|r_H)})$-approximation algorithm. Their  approach rounds a fractional solution of an LP relaxation without the use of metric embeddings. To our knowledge, there is no polynomial-time algorithm in the hypergraph demand setting that utilizes metric embeddings.  We are not aware of any other polynomial-time approach in this setting, or indeed the  general setting where $G$ and $H$ are arbitrary hypergraphs.

{\bf $G/H$ are both hypergraphs.} Bryant and Tupper \cite{Bryant_2012} introdcue diversities as a generalization of metric spaces where instead of a nonnegative function defined on pairs of elements, it is defined on arbitrary finite sets of elements. They have  developed a substantial theory on diversities \cite{Bryant_2012, BTGeom, BTopen, bryant2016constant} including around the notion of  diversity embedding. For instance, they  demonstrate that several types of diversities attain polynomial-time low-distortion embeddings into $\ell_1$. Most pertinent to our discussion, in \cite{BTGeom} they generalized the work of Linial et al. \cite{linial1995geometry} whereby  the flow-cut gap in the hypergraph setting is equivalent to the minimum distortion of embedding some finite diversity into the $\ell_1$ diversity. Notably, this work provides an approach to approximating the sparsest cut in hypergraphs However, Bryant and Tupper did not focus on the tractability of this framework. In particular, whether   this approach  even has a polynomial-time implementation for  the hypergraph setting.

\subsection{Organization}\label{sec:Organization}

In Chapter \ref{ch:diversities} we introduce the notion of a diversity and their properties relevant  to this work.
In Chapter \ref{ch:embeddings} we introduce   diversity embeddings;    this chapter includes proofs of Theorem \ref{thm:DiamDivEmbeddingNew_orig}, in Section \ref{sec:DiamDivEmbed}, and Theorem \ref{thm:InapproxResult_orig}, in Section \ref{sec:Inapprox}.

In Chapter \ref{ch:ApproxAlg} we describe the approach of approximating the hypergraph sparsest cut via 
an LP relaxation whose solutions are diversities. One then  extracts an approximate sparsest cut via embedding such diversities into $\ell_1$. In Section \ref{sec:characterizingSolutions} we characterize the optimal solutions of the LP relaxation. This is used  in the proofs of Theorems \ref{thm:GeneralAlgSparsestCut_orig} and \ref{ApproxAlgGaGraph} (see Section \ref{sec:AlgorithmicImplications}).
In Section \ref{ch:HSP} we establish tractability of the LP by showing an algorithm for the minimum-cost hypergraph Steiner problem (Theorem \ref{thm:HSP_orig}). 

\section{Diversities}
\label{ch:diversities}

Recently introduced by Bryant and Tupper \cite{Bryant_2012}, diversities are a generalization of metric spaces where instead of a nonnegative function defined on pairs of elements, it is defined on arbitrary finite sets of elements. 

\begin{defn}\label{defn:Diversity}
A diversity is a pair $(X, \delta)$ where $X$ is a set and $\delta$ is a real-valued function defined over the finite subsets of $X$ satisfying the following three axioms:

\begin{enumerate}
    \item $\forall A\in \mathcal{P}(X), \delta(A) \geq 0$
    \item $\delta(A) = 0 $ if and only if $|A| \leq 1$
    \item $\forall A,B,C\in \mathcal{P}(X), C\neq \emptyset \Rightarrow \delta(A\cup B) \leq \delta(A\cup C) + \delta(B\cup C)$
\end{enumerate}
\end{defn}

We refer to the third axiom  as the \emph{triangle inequality}. It is the property that makes diversities a generalization of  metrics.
Similar to the notion of a pseudo-metric there is a definition of a pseudo diversity. Formally, $(V,\delta)$ is a \emph{pseudo diversity} if $(X,\delta)$ satisfies the three axioms of Definition \ref{defn:Diversity} with the second axiom being weakened to

\begin{equation}
\delta(A) = 0 \text{ if $|A| \leq 1$}
\end{equation} 

\noindent
Henceforth when we refer to $(X, \delta)$ as being a ``diversity'' we mean that $(X, \delta)$ is a pseudo diversity; we make this assumption since  the  LP relaxation optimizes over pseudo diversities.

We note that any (pseudo) diversity $(X,\delta)$ yields an \emph{induced} (pseudo) metric space $(X,d)$ where for any $x,y\in X$ we define $d$ as \begin{equation}
    d(x,y) = \delta(\{x,y\})
\end{equation} Conversely, a metric can induce an infinite number of diversities for whom it is the induced metric. 
 
\begin{defn}[$\mathcal{D}_{(X,d)}$]\label{defn:MetricFamily}
For a metric space $(X,d)$, $\mathcal{D}_{(X,d)}$ is the family of diversities for whom $(X,d)$ is their induced metric. For contexts where the set $X$ is unambiguous the notation $\mathcal{D}_d$ may be used.
\end{defn}

A consequence of the triangle inequality is that (pseudo) diversities are monotone increasing set functions.

\begin{prop}\label{prop:Monotonicity}[see Appendix \ref{sec:omitted}]
(Pseudo) diversities are monotone increasing. That is, if $(X, \delta)$ is a (pseudo) diversity, then for any $A,B\in \mathcal{P}(X)$ we have that

\begin{equation}
    A\subseteq B \Rightarrow \delta(A) \leq \delta(B)
\end{equation}
\end{prop}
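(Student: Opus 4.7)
The plan is to reduce the general monotonicity statement to a one-element step: show that for any finite $A \subseteq X$ and any $b \notin A$, we have $\delta(A) \le \delta(A \cup \{b\})$. Given such a step, monotonicity follows by enumerating $B \setminus A = \{b_1, \dots, b_k\}$ and telescoping:
\[
\delta(A) \le \delta(A \cup \{b_1\}) \le \delta(A \cup \{b_1,b_2\}) \le \cdots \le \delta(B).
\]
The trivial case $A = \emptyset$ is separate and immediate: Axioms 1 and 2 give $0 = \delta(\emptyset) \le \delta(B)$.

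For the one-element step, the key idea is to invoke the triangle inequality (Axiom 3) with a well-chosen ``bridging'' set $C$. Specifically, I would take the first set to be $A$, the second set to be $\emptyset$, and $C = \{b\}$. Axiom 3 as stated in Definition \ref{defn:Diversity} only requires $C \neq \emptyset$, so this substitution is legal. It yields
\[
\delta(A) \;=\; \delta(A \cup \emptyset) \;\le\; \delta(A \cup \{b\}) + \delta(\emptyset \cup \{b\}) \;=\; \delta(A \cup \{b\}) + \delta(\{b\}).
\]
Now $\delta(\{b\}) = 0$ by the (weakened) second axiom for pseudo diversities, since $|\{b\}| \le 1$. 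Hence $\delta(A) \le \delta(A \cup \{b\})$, as desired.

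There is no real obstacle in this argument; the only subtlety worth verifying is that Axiom 3 is indeed stated to allow $A$ or $B$ (rather than $C$) to be empty, which is the case here. Putting the one-element step together with the telescoping argument and the $A = \emptyset$ boundary case completes the proof.
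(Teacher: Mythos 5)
Your proof is correct and follows essentially the same route as the paper's: reduce to a one-element step via telescoping, and establish that step by invoking the triangle inequality (Axiom 3) with the second set empty and $C = \{b\}$ a singleton, then using $\delta(\{b\}) = 0$. The only cosmetic difference is that you handle the $A = \emptyset$ boundary case separately, whereas the paper lets it fall out of the telescoping argument.
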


\subsection{Examples of Diversities}

As described in Section~\ref{sec:divembed}, any diversity can be defined 
by some (sometimes many) hypergraph.   In this section, we define several important  examples of diversities with a considerably simpler structure.

\subsubsection{The $\ell_1$ and Cut Diversities}

Just as the hypergraph Steiner diversity is a generalization of the shortest path metric of a graph, the $\ell_1$ diversity is a natural generalization of the $\ell_1$ metric. For context, we first define the $\ell_1$ metric.
Recall that a metric space $(\mathbb{R}^m, d)$  is the $\ell_1$ metric if for any $x,y \in \mathbb{R}^m$ $d(x,y) = \sum_{i=1}^m |x_i - y_i|$.

Similarly, the $\ell_1$ diversity is defined as follows 

\begin{defn}[$\ell_1$ Diversity]\label{defn:ell1Diversity}
$(\mathbb{R}^m,\delta_1)$ is an \emph{$\ell_1$ diversity} if for any  $A\in\mathcal{P}(\mathbb{R}^m)$
\begin{equation}
\delta_1(A) = \sum_{i=1}^m\max_{a,b\in A}|a_i-b_i|
\end{equation}
\end{defn}

Likewise, the cut pseudo metric is generalized by the cut pseudo diversity. 

\begin{defn}[Cut Diversity]\label{defn:CutDiversity}
Let $X$ be a set and let $U\subseteq X$ be a nonempty subset. Then $(X,\delta_U)$ is a \emph{cut pseudo diversity} (induced by $U$) if for any $A\in \mathcal{P}(X)$
\begin{equation}
\delta_U(A) = \begin{cases}
1 & \text{ if $A\cap U\neq \emptyset, A$ } \\
0 &  \text{ otherwise}
\end{cases}
\end{equation}
\end{defn}

A classic property of the $\ell_1$ metric is that it can be represented as sum of cut pseudo metrics. Similarly, there is an analogous generalization, due to Bryant and Tupper \cite{BTGeom}, for the $\ell_1$ diversity and cut pseudo diversities as Theorem \ref{thm:Ell1CutDiv}.

\subsubsection{Extremal Diversities}

 Given a (pseudo) metric space, $(X, d)$, a natural extension of the (pseudo) metric space to a (pseudo) diversity is to define a diversity $(X, \delta)$ where $\delta(A)$ is simply the diameter of the set $A$ in the metric space $(X,d)$. Unsurprisingly,  these  are referred to as diameter diversities.  

\begin{defn}[Diameter Diversity]\label{defn:DiameterDiversity}
Given a (pseudo) metric space $(X,d)$, a (pseudo) \emph{diameter diversity} $(X,\delta_{\text{diam}})$ satisfies for each $A\in\mathcal{P}(X)$

\begin{equation}
    \delta_{\text{diam}}(A) = \max_{x,y\in A}d(x,y)
\end{equation}
\end{defn}

Alternatively, one case also extend the (pseudo) metric $(X, d)$ to a (pseudo) diversity based on the minimum cost Steiner trees in the complete graph induced by the weights $d(u,v)$. These are a special case of the hypergraph Steiner diversity in which the hyperedge-weighted hypergraph $H = (V, E, w)$ is simply a graph.

\begin{defn}[Steiner Diversity]\label{defn:SteinerDiversity}
Let $G=(V,E,w)$ be a graph with node set $V$, edge set $E$, and nonnegative edge weights $w:E\to \mathbb{R}_+$. Then the \emph{Steiner diversity} $(V,\delta_{\text{Steiner}})$ is a Steiner diversity if for any $A\in\mathcal{P}(V)$ we have 

\begin{equation}
\delta_{\text{Steiner}}(A) = \begin{cases}
\min_{t\in\mathcal{T}_A}\sum_{e\in t}w(e) & \text{ if $|A| \geq 2$} \\
0 & \text{otherwise}
\end{cases}
\end{equation} where $\mathcal{T}_A$ is the collection of subsets of edges that correspond to connected subgraphs of $G$ that contain the nodes $A$ (cf. Section~\ref{sec:divembed}). Or in other words, $\delta_{\text{Steiner}}(A)$ is defined to be the minimum weight of a subtree of $G$ containing the nodes $A$. 
\end{defn}

A special case of Steiner diversities and a generalization of tree metrics are the tree diversities; Bryant and Tupper introduced these and refer to them as   \emph{phylogenetic diversities}.

\begin{defn}[Tree Diversity]\label{defn:TreeDiversity}
Let $G=(V,E,w)$ be a graph with node set $V$, edge set $E$, and nonnegative edge weights $w:E\to \mathbb{R}_+$. Let $(V,\delta_{\text{Steiner}})$ be the corresponding Steiner diversity. If $(V,E)$ is a tree then $(V,\delta_{\text{Steiner}})$ is a \emph{tree diversity}, for which we may use the notation $(V, \delta_{\text{tree}})$.
\end{defn}

The following extremal result characterizes the (pseudo) diameter diversity as being the ``minimal" (pseudo) diversity and characterizes the (pseudo) Steiner diversity as being the ``maximal" (pseudo) diversity among the family of (pseudo) diversities sharing an induced (pseudo) metric.

\begin{theorem}[Bryant and Tupper in \cite{BTGeom}]
\label{thm:ExtremalResults}
let $(X,d)$ be a (pseudo) metric space. Let $\mathcal{D}_{(X,d)}$ be the family of diversities for whom $(X,d)$ is their induced (pseudo) metric space. Let $(X,\delta_{\text{diam}})\in\mathcal{D}_{(X,d)}$ be the (pseudo) diameter diversity and $(X,\delta_{\text{Steiner}}) \in \mathcal{D}_{(X,d)}$.  Then for any $(X,\delta)\in \mathcal{D}_{(X,d)}$ and any $A\in\mathcal{P}(X)$ it follows that

\begin{equation}
\delta_{\text{diam}}(A) \leq \delta(A) \leq \delta_{\text{Steiner}}(A)
\end{equation} 
\end{theorem}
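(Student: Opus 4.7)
The plan is to establish the two inequalities separately; each reduces to a direct application of a single diversity axiom.

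The lower bound follows immediately from monotonicity (Proposition~\ref{prop:Monotonicity}): for every $x,y \in A$, $d(x,y) = \delta(\{x,y\}) \leq \delta(A)$, and taking the maximum over pairs in $A$ yields $\delta_{\text{diam}}(A) \leq \delta(A)$.

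For the upper bound, let $G = (V,E,w)$ be the graph defining $\delta_{\text{Steiner}}$ (so $X = V$) and let $T$ be a minimum-weight Steiner tree for $A$, so that $A \subseteq V(T)$ and $\delta_{\text{Steiner}}(A) = \sum_{e \in E(T)} w(e)$. By monotonicity it suffices to prove the sharper claim $\delta(V(T)) \leq \sum_{e \in E(T)} w(e)$ for every subtree $T$ of $G$, and I would do this by induction on $|E(T)|$. The base cases $|E(T)| \leq 1$ reduce to $\delta(\{u,v\}) = d(u,v) \leq w(\{u,v\})$, where the equality uses that $(X,\delta) \in \mathcal{D}_{(X,d)}$ and the inequality uses that $d$ is the shortest-path metric of $G$ (since $(X,\delta_{\text{Steiner}}) \in \mathcal{D}_{(X,d)}$ as well). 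For the inductive step, I would pick a leaf $\ell$ of $T$ with unique neighbor $u$, let $T' = T - \ell$, and apply the triangle inequality of Definition~\ref{defn:Diversity} with the pair $V(T'),\{\ell\}$ and nonempty pivot $\{u\}$; since $u \in V(T')$ and hence $V(T') \cup \{u\} = V(T')$, this yields
\[
\delta(V(T)) \;=\; \delta(V(T') \cup \{\ell\}) \;\leq\; \delta(V(T')) + \delta(\{u,\ell\}) \;\leq\; \sum_{e \in E(T')} w(e) + w(\{u,\ell\}) \;=\; \sum_{e \in E(T)} w(e),
\]
using the inductive hypothesis on $T'$ and $\delta(\{u,\ell\}) \leq w(\{u,\ell\})$ in the last two steps.

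Neither direction poses a serious obstacle. The only conceptual point worth highlighting is the reformulation of the upper bound as a statement about arbitrary subtrees of $G$ rather than directly about $A$: this is what allows a leaf-peeling induction in which the nonempty pivot $C$ in the triangle inequality is chosen to be the leaf's unique neighbor, making the recursion align exactly with deleting one edge from $T$ and paying exactly its weight.
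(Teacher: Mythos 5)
Your proof is correct and follows the same overall strategy as the paper: the lower bound is immediate from monotonicity, and the upper bound is obtained by bounding $\delta$ on (the vertex set of) a minimum-weight Steiner tree via the triangle inequality. The paper's version works with trees in the metric closure of $(X,d)$ and compresses the upper-bound step into a single invocation of ``the triangle inequality,'' whereas you make the implicit leaf-peeling induction explicit; that is a welcome elaboration rather than a different route, and both arguments coincide once one notes that the Steiner diversity defined by $G$ and the one defined by the metric closure of $d$ agree.
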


\forceremove{
\begin{proof}
Let $(X,\delta)\in \mathcal{D}_{(X,d)}$ and $A\in\mathcal{P}(X)$ be arbitrary. Let $x,y \in A$ be such that 

\begin{equation}
    d(x,y) = \max_{u,v\in A}d(u,v)
\end{equation} Let $t\subseteq X\times X$ such that 

\begin{equation}
    \sum_{(u,v)\in t} d(u,v) = \min_{t'\in \mathcal{T}_A} \sum_{(u,v)\in t'} d(u,v)
\end{equation}

Then it follows that

\begin{align}
    \delta_{\text{diam}}(A) & = \max_{u,v\in A}d(u,v) & \text{by definition of $\delta_{\text{diam}}$} \\
    & = d(x,y) & \text{by choice of $x$ and $y$} \\
    & = \delta(\{x,y\}) & \text{by $(X,\delta)\in\mathcal{D}_{(X,d)}$} \\
    & \leq \delta(A) & \text{by $\delta$ being increasing, Proposition \ref{prop:Monotonicity}} \\
    & \leq \sum_{(u,v)\in t}d(u,v) & \text{by the triangle inequality, Proposition \ref{prop:Monotonicity}} \\
    & = \min_{t'\in \mathcal{T}_A} \sum_{(u,v)\in t'} d(u,v) & \text{by choice of $t$} \\
    & = \delta_{\text{Steiner}}(A) & \text{by definition of $\delta_{\text{Steiner}}$}
\end{align}
This completes the proof.
\end{proof}
}

We also use the notion of $k$-diameter diversity which generalizes the diameter diversity and satisfies a similar extremal result to Theorem \ref{thm:ExtremalResults}. Bryant and Tupper introduced this definition and used  the name \emph{truncated diversity}. 

\begin{defn}[$k$-Diameter Diversity]\label{defn:kdiamdiv}
Given a (pseudo) diversity $(X,\delta)$ and $k\in \mathbb{Z}_{\geq 0}$, we say that $(X,\delta_{\text{$k$-diam}})$ is the \emph{$k$-diameter (pseudo) diversity} of $(X,\delta)$ if for any $A\in\mathcal{P}(X)$

\begin{equation}\label{eqn:kdiamdiv}
\delta_{\text{$k$-diam}}(A) = \max_{B\subseteq A: |B| \leq k} \delta(B)
\end{equation} Furthermore, if a (pseudo) diversity satisfies equation (\ref{eqn:kdiamdiv}), then it is referred to as a $k$-diameter (pseudo) diversity.
\end{defn}

 It is easy to see that a diameter diversity is a 2-diameter diversity.

\begin{fact}\label{fact:2DiamDiv}
A 2-diameter (pseudo) diversity is a diameter (pseudo) diversity. Furthermore if $(X,\delta)$ is a diversity with the induced metric space $(X,d)$ then $\mathcal{D}_{(X,d)} = \mathcal{D}_{(X,\delta,2)}$.
\end{fact}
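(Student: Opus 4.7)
The plan is to unpack the definitions and observe that a 2-diameter diversity is uniquely determined by the induced metric, so it must coincide with the diameter diversity of that metric.

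For the first assertion, I would start from an arbitrary (pseudo) diversity $(X,\delta)$ and its 2-diameter (pseudo) diversity $(X,\delta_{\text{2-diam}})$. Fix $A \in \mathcal{P}(X)$. If $|A| \leq 1$, then every subset $B \subseteq A$ with $|B| \leq 2$ has $|B| \leq 1$, so $\delta(B)=0$ by axiom 2 of Definition \ref{defn:Diversity} and both $\delta_{\text{2-diam}}(A)$ and the would-be diameter value are $0$. If $|A| \geq 2$, the subsets $B \subseteq A$ with $|B|\leq 2$ consist of $\emptyset$, singletons (which contribute $0$), and pairs $\{x,y\}$. The maximum is therefore attained on a pair, and since $\delta(\{x,y\}) = d(x,y)$ by the definition of the induced metric, we obtain
\begin{equation}
\delta_{\text{2-diam}}(A) \;=\; \max_{x,y\in A} \delta(\{x,y\}) \;=\; \max_{x,y\in A} d(x,y) \;=\; \delta_{\text{diam}}(A),
\end{equation}
which is precisely the diameter diversity of $(X,d)$ per Definition \ref{defn:DiameterDiversity}. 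Hence any 2-diameter (pseudo) diversity is a diameter (pseudo) diversity, with the induced metric of $\delta$ as its defining metric.

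For the second assertion, I would interpret $\mathcal{D}_{(X,\delta,2)}$ as the family of (pseudo) diversities on $X$ whose 2-diameter (pseudo) diversity coincides with $(X,\delta_{\text{2-diam}})$. The containment $\mathcal{D}_{(X,d)} \subseteq \mathcal{D}_{(X,\delta,2)}$ follows because any $(X,\delta') \in \mathcal{D}_{(X,d)}$ shares the same induced metric as $\delta$, so by the computation above, $\delta'_{\text{2-diam}}(A) = \max_{x,y\in A} d(x,y) = \delta_{\text{2-diam}}(A)$ for every $A$. Conversely, if $(X,\delta') \in \mathcal{D}_{(X,\delta,2)}$, then evaluating the equality of 2-diameter diversities on pairs $A=\{x,y\}$ gives $\delta'(\{x,y\}) = \delta(\{x,y\}) = d(x,y)$, so $(X,\delta')$ has induced metric $(X,d)$ and lies in $\mathcal{D}_{(X,d)}$.

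The only genuine obstacle here is pinning down the notation $\mathcal{D}_{(X,\delta,2)}$ (which is not spelled out in the excerpt) and then being careful that the $|A|\leq 1$ case is handled correctly by axiom 2; otherwise the proof is a direct unwinding of Definitions \ref{defn:Diversity}, \ref{defn:MetricFamily}, \ref{defn:DiameterDiversity}, and \ref{defn:kdiamdiv}.
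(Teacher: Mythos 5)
Your proof is correct. The paper states Fact~\ref{fact:2DiamDiv} without proof (the surrounding text merely remarks that it is ``easy to see''), so there is no paper argument to compare against, but your unwinding of Definitions~\ref{defn:Diversity}, \ref{defn:DiameterDiversity}, and \ref{defn:kdiamdiv} is the intended computation and you handle the $|A|\le 1$ degenerate case correctly.

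One wrinkle worth flagging: the paper does define $\mathcal{D}_{(X,\delta,2)}$ (Definition~\ref{defn:kDiameterFamily}) as the family of (pseudo) diversities $\delta'$ that agree with $\delta$ \emph{pointwise} on all subsets of cardinality at most $2$, rather than the family whose $2$-diameter diversity coincides with $\delta_{\text{2-diam}}$ as you posited. The two descriptions are equivalent --- if $\delta'$ agrees with $\delta$ on small sets then their $2$-diameter diversities coincide, and conversely for $|A|\le 2$ monotonicity (Proposition~\ref{prop:Monotonicity}) forces the maximum in the $2$-diameter formula to be attained at $B=A$, so $\delta'_{\text{2-diam}}(A)=\delta'(A)$ and agreement of $2$-diameter diversities gives back pointwise agreement on small sets --- so no content is lost. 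With the paper's actual definition in hand, the second assertion becomes even more immediate: membership in $\mathcal{D}_{(X,\delta,2)}$ is exactly the statement that $\delta'(\{x,y\})=\delta(\{x,y\})=d(x,y)$ for all pairs (the $|A|\le 1$ cases being automatic from axiom 2), which is precisely membership in $\mathcal{D}_{(X,d)}$, with no detour through the $2$-diameter operator required.
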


Additionally, we  generalize  $\mathcal{D}_{(X,d)}$ to a $k$-diameter setting. 

\begin{defn}[$\mathcal{D}_{(X,\delta,k)}$]\label{defn:kDiameterFamily}
For a (pseudo) diversity $(X,\delta)$ and $k\in \mathbb{Z}_{\geq 0}$, $\mathcal{D}_{(X,\delta,k)}$ is the family of (pseudo) diversities which are equivalent to $\delta$ on subsets of $X$ of cardinality at most $k$. For contexts where the set $X$ is unambiguous the notation $\mathcal{D}_{(\delta,k)}$ may be used.
\end{defn}

Next, we prove that a $k$-diameter diversity $(X,\delta)$ is the minimal diversity of the family $\mathcal{D}_{(X,\delta,k)}$, generalizing Theorem \ref{thm:ExtremalResults} due to Bryant and Tupper. We use this extremal property of the $k$-diameter diversity in order to characterize the optimal solutions of the diversity-relaxation for the sparsest cut, Theorem \ref{thm:FlowCutGapDiversity}.

\begin{theorem}\label{thm:MinimalityOfkDiameterDiversity}[see Appendix \ref{sec:omitted}]
Given a (pseudo) diversity $(X,\delta)$ and $k\in \mathbb{Z}_{\geq 0}$ we let $(X,\delta_{\text{$k$-diam}})$ be the $k$-diameter (pseudo) diversity of $(X,\delta)$. Then for any $(X,\delta')\in \mathcal{D}_{(X,\delta,k)}$ and any $A\in\mathcal{P}(X)$ it follows that

\begin{equation}
    \delta_{\text{$k$-diam}}(A) \leq \delta'(A)
\end{equation} Or in other words $(X,\delta_{\text{$k$-diam}})$ is the minimal (pseudo) diversity of the family $\mathcal{D}_{(X,\delta,k)}$.
\end{theorem}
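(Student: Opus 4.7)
The plan is to establish the inequality $\delta_{k\text{-diam}}(A) \leq \delta'(A)$ pointwise for each set $A$, by exploiting the only two structural facts available: first, that every $\delta' \in \mathcal{D}_{(X,\delta,k)}$ agrees with $\delta$ on subsets of cardinality at most $k$ (by definition of the family), and second, that every pseudo diversity is monotone increasing under set inclusion (Proposition~\ref{prop:Monotonicity}).

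First I would fix an arbitrary $(X,\delta') \in \mathcal{D}_{(X,\delta,k)}$ and an arbitrary $A \in \mathcal{P}(X)$. Let $B \subseteq A$ be any subset with $|B| \leq k$. By the defining property of the family $\mathcal{D}_{(X,\delta,k)}$ we have $\delta'(B) = \delta(B)$. Since $B \subseteq A$ and $\delta'$ is a pseudo diversity, monotonicity gives $\delta'(B) \leq \delta'(A)$. Chaining these yields $\delta(B) \leq \delta'(A)$.

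The final step is to take the maximum of the left-hand side over all subsets $B \subseteq A$ with $|B| \leq k$. Since the right-hand side $\delta'(A)$ does not depend on $B$, we obtain
\begin{equation}
\delta_{k\text{-diam}}(A) \;=\; \max_{B \subseteq A,\ |B| \leq k} \delta(B) \;\leq\; \delta'(A),
\end{equation}
which is exactly the claimed inequality, and, since $A$ was arbitrary, establishes that $(X,\delta_{k\text{-diam}})$ is the minimal pseudo diversity in $\mathcal{D}_{(X,\delta,k)}$.

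There is no real obstacle here: the argument is a direct two-line consequence of monotonicity together with the fact that membership in $\mathcal{D}_{(X,\delta,k)}$ pins down the values of $\delta'$ on all sets of size at most $k$. The only thing worth double-checking is that $(X,\delta_{k\text{-diam}})$ itself lies in $\mathcal{D}_{(X,\delta,k)}$ — that is, that the maximum in the definition reproduces $\delta(A)$ whenever $|A| \leq k$ (which it does, since $A$ is then an admissible choice of $B$) — so that the ``minimal element'' statement is meaningful.
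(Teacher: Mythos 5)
Your proof is correct and follows essentially the same route as the paper's: both arguments combine the defining property of $\mathcal{D}_{(X,\delta,k)}$ on small sets with monotonicity (Proposition~\ref{prop:Monotonicity}), the only cosmetic difference being that you bound $\delta(B)$ uniformly and take the maximum at the end, whereas the paper fixes the maximizing $B$ up front. Your observation that one should verify $(X,\delta_{k\text{-diam}}) \in \mathcal{D}_{(X,\delta,k)}$ is also exactly the preliminary step the paper makes.
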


\subsection{Diversity Embeddings}
\label{ch:embeddings}

In this section we introduce the notion of  diversity embeddings since low distortion embedding into the $\ell_1$ diversity play a key role in finding an approximation for sparsest cuts. 

\begin{defn}[Diversity Embedding]\label{defn:EmbeddingOfDiv}
Let $(X,\delta_X)$ and $(Y,\delta_Y)$ be two diversities. Let $f$ be a map from $X$ to $Y$. We say that $f$ is an embedding of the diversity $(X,\delta_X)$ into the diversity $(Y,\delta_Y)$ with \emph{distortion} $c$ if there are constants $c_1,c_2 \geq 1$ such that $c = c_1c_2$ and for any $A\in \mathcal{P}(X)$ we have that

\begin{equation}\label{eqn:EmbeddingofDiv}
\frac{1}{c_1}\delta_X(A) \leq \delta_Y(f(A)) \leq c_2\delta_X(A)
\end{equation} For convenience, when referring to such an embedding we may use the notation $(X,\delta_X)\to (Y,\delta_Y)$ and $(X,\delta_X)\xrightarrow{c}(Y,\delta_Y)$, where the latter specifies the distortion of the embedding. We may also precede such notation with ``$f:$'' in order to specify the map. 
\end{defn}

If we restrict (\ref{eqn:EmbeddingofDiv}) in Definition \ref{defn:EmbeddingOfDiv} to sets $A$ of cardinality two, then we recover the exact definition of a metric embedding. This is significant since facts about metric embeddings can  extend to facts about diversity embeddings since each diversity ``encodes'' an induced metric space.

 We use the term \emph{$\ell_1$-embeddable} to mean embedding into $\ell_1$ isometrically, i.e.,  distortion $c=1$. The following analogue for the metric case is also key for our results.

\begin{theorem}[Bryant and Tupper \cite{BTGeom}] \label{thm:Ell1CutDiv}
Let $(X,\delta)$ be a diversity where $|X| = n < \infty$. Then the following are equivalent.
\begin{enumerate}
\item $(X,\delta)$ is embeddable into $\ell_1^m$.
\item $(X,\delta)$ is a nonnegative combination of $O(nm)$ cut diversities.
\end{enumerate}
\end{theorem}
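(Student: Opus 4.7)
The plan is to prove the two directions of the equivalence separately. The direction $(2)\Rightarrow(1)$ is essentially a packaging step: embed each cut diversity into one real coordinate and stack. The direction $(1)\Rightarrow(2)$ is the content of the theorem and reduces to a one-dimensional ``threshold sweep'' that decomposes each coordinate of an $\ell_1^m$ embedding into at most $n-1$ cut diversities.

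For $(2)\Rightarrow(1)$, first observe that every cut diversity $(X,\delta_U)$ is isometrically embeddable into $\ell_1^1$ via the indicator map $f_U:X\to\mathbb{R}$ defined by $f_U(x)=\mathbbm{1}_{x\in U}$, since for any $A\subseteq X$ we have $\max_{a,b\in A}|f_U(a)-f_U(b)|=1$ exactly when $A\cap U,A\cap U^c\neq\emptyset$, matching $\delta_U(A)$. Given a decomposition $\delta=\sum_{i=1}^{k}\alpha_i\delta_{U_i}$ with $\alpha_i\geq 0$ and $k=O(nm)$, stacking the scaled maps yields $f=(\alpha_1 f_{U_1},\dots,\alpha_k f_{U_k}):X\to\mathbb{R}^k$, and because $\delta_1$ sums contributions coordinate-wise, $f$ is an isometric embedding into $(\mathbb{R}^k,\delta_1)$.

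For $(1)\Rightarrow(2)$, I would take an isometric embedding $f=(f_1,\dots,f_m):X\to\mathbb{R}^m$ and, independently for each coordinate $i$, sort the values so that $f_i(x_{\sigma_i(1)})\leq\cdots\leq f_i(x_{\sigma_i(n)})$. Define the cuts $U_{i,j}=\{x_{\sigma_i(j+1)},\dots,x_{\sigma_i(n)}\}$ and the weights $\alpha_{i,j}=f_i(x_{\sigma_i(j+1)})-f_i(x_{\sigma_i(j)})\geq 0$ for $j=1,\dots,n-1$. The key one-dimensional identity is that for every $A\subseteq X$,
\[
\max_{a,b\in A}|f_i(a)-f_i(b)|\;=\;\sum_{j=1}^{n-1}\alpha_{i,j}\,\delta_{U_{i,j}}(A),
\]
because exactly those gaps $\alpha_{i,j}$ whose cut $U_{i,j}$ separates $A$ (equivalently, those gaps lying between $\min_A f_i$ and $\max_A f_i$) are summed on both sides. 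Summing over $i$ and using $\delta(A)=\delta_1(f(A))=\sum_i \max_{a,b\in A}|f_i(a)-f_i(b)|$ yields a nonnegative combination of $m(n-1)=O(nm)$ cut diversities, as required.

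The only step that is not purely bookkeeping is the one-dimensional telescoping identity, and this is the main obstacle in the sense that the entire theorem rests on it. It can be verified by observing that both sides equal $\max_{a\in A}f_i(a)-\min_{a\in A}f_i(a)$: on the right side a gap $\alpha_{i,j}$ contributes with its full weight precisely when $A$ straddles the threshold between $x_{\sigma_i(j)}$ and $x_{\sigma_i(j+1)}$. Every other ingredient — nonnegativity of the weights $\alpha_{i,j}$, the coordinate-wise additivity of $\delta_1$, and the closure of $\ell_1$-embeddable diversities under nonnegative combinations — is routine, so this single observation is the heart of the argument and also explains why the cut count scales as $O(nm)$ rather than any larger function of $n$ and $m$.
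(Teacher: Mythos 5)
The paper cites this result from Bryant and Tupper without reproducing a proof, so there is no in-paper argument to compare against; your proposal stands on its own. Your argument is correct, and it is exactly the natural lift of the classical fact that every finite $\ell_1$ metric is a nonnegative combination of cut pseudometrics to the diversity setting: the coordinate-wise threshold sweep produces the sorted gaps $\alpha_{i,j}$, and the one-dimensional identity $\max_{a,b\in A}|f_i(a)-f_i(b)|=\sum_j\alpha_{i,j}\delta_{U_{i,j}}(A)$ holds because both sides equal $\max_A f_i-\min_A f_i$ (the right side telescopes over precisely those consecutive gaps lying between the minimum and maximum sorted index of $A$). The converse via the indicator map $\mathbbm{1}_{x\in U}$ and coordinate-wise additivity of $\delta_1$ is also correct. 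One minor point worth flagging: as stated, the theorem's two items implicitly use different dimensions (a combination of $O(nm)$ cuts embeds isometrically into $\ell_1^{O(nm)}$, not $\ell_1^m$), which your proof handles correctly; this is a small informality in the theorem statement rather than a gap in your argument.
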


As for the time-complexity of a diversity embedding, we can assume that we have access to an oracle $\alpha_{(X,\delta_X)}:\mathcal{P}(X)\to\mathbb{R}_+$ which computes $\delta_X(A)$ for some subset of $\mathcal{A}\subseteq \mathcal{P}(X)$ in polynomial time with respect to the size of the representation of $(X,\delta_X)$ that we are given. For instance, $(X,\delta_X)$ may be a hypergraph Steiner diversity given by a hyperedge-weighted hypergraph or as a $|\mathcal{P}(X)|$-long enumeration of the values of $\delta_X$. We say that an embedding $f:(X,\delta_X)\to (Y,\delta_Y)$ is a polynomial time embedding if we have an algorithm that computes $f$ in time that is  polynomial of the representation of $(X,\delta_X)$ while querying the oracle $\alpha_{(X,\delta_X)}$ a polynomial number of times on the subsets $\mathcal{A}$. As we demonstrate in Section \ref{sec:Inapprox}, one can construct a family of diversities and a corresponding representation such that any low-distortion embedding is necessarily polytime-intractable.

\subsubsection{A Steiner Diversity Embedding}\label{sec:SteinerDivEmbed}

Bryant and Tupper give a Steiner diversity embedding utilizing
two ingredients. First is Fakcharoenphol, Rao, and Talwar's probabilistic embedding of $n$-node metric spaces into dominating tree metrics with distortion $O(\log{n})$ in expectation \cite{FRT}, which we refer to as the \emph{FRT algorithm}.  Given that tree metrics are isometrically embeddable into the $\ell_1$ metric, the FRT algorithm provides an alternate $O(\log{n})$-distortion metric embedding into $\ell_1$. 

The second ingredient is an argument that tree diversities embed isometrically into $\ell_1$.
\begin{theorem}\label{thm:TreeDiv}
Let $(X,\delta_\text{tree})$ be a tree diversity. Then there exists an embedding of $(X,\delta_\text{tree})$ into the $\ell_1$ diversity with distortion $1$. Moreover, this embedding is computable in polynomial-time with respect to $|X|$.
\end{theorem}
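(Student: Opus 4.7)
The plan is to realize the tree diversity as a nonnegative combination of cut diversities, one per edge of the tree, and then convert this decomposition into an explicit isometric $\ell_1$ embedding using Theorem \ref{thm:Ell1CutDiv}. The central observation is that in a tree $T=(X,E,w)$, the minimum-weight connected subtree of $T$ containing a set $A\subseteq X$ is unique and consists of exactly those edges $e$ such that both components of $T\setminus e$ meet $A$. This is the standard characterization of Steiner subtrees in trees, and it follows by a short induction on $|A|$ (or by observing that any edge violating the condition can be pruned without disconnecting $A$).

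Concretely, for each edge $e\in E$, removing $e$ splits $X$ into two pieces; let $U_e$ be one of them. By the above, $e$ is used in the minimal subtree spanning $A$ if and only if $A\cap U_e\neq\emptyset$ and $A\cap U_e\neq A$, which is exactly the condition $\delta_{U_e}(A)=1$ in the cut diversity of Definition~\ref{defn:CutDiversity}. Summing with the edge weights yields
\begin{equation}
\delta_{\text{tree}}(A)=\sum_{e\in E}w(e)\,\mathbbm{1}[\text{$e$ is in the Steiner subtree for $A$}]=\sum_{e\in E}w(e)\,\delta_{U_e}(A),
\end{equation}
so $(X,\delta_{\text{tree}})$ is a nonnegative combination of $|E|=|X|-1$ cut diversities. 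By the (easy) direction of Theorem \ref{thm:Ell1CutDiv}, this implies $(X,\delta_{\text{tree}})$ is $\ell_1$-embeddable, i.e.\ distortion $1$.

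To make the embedding explicit, enumerate the edges $e_1,\dots,e_{n-1}$ and define $f:X\to\mathbb{R}^{n-1}$ by
\begin{equation}
f(x)_i=\begin{cases}w(e_i)&\text{if }x\in U_{e_i},\\ 0&\text{otherwise.}\end{cases}
\end{equation}
For every $A\subseteq X$, the $i$-th coordinate contributes $\max_{a\in A}f(a)_i-\min_{a\in A}f(a)_i$ to $\delta_1(f(A))$, which equals $w(e_i)$ precisely when $A$ meets both sides of the cut induced by $e_i$, and $0$ otherwise. Thus $\delta_1(f(A))=\delta_{\text{tree}}(A)$ for all $A$.

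For polynomial time, note that the map $f$ is determined once we know, for each pair $(x,e_i)$, which side of $T\setminus e_i$ contains $x$. A single traversal of $T$ suffices to compute this for all edges in $O(|X|^2)$ time, and evaluating $f$ on a point takes $O(|X|)$ time. The only step that required genuine argument was the Steiner-subtree-in-a-tree characterization; everything else is bookkeeping, so I do not anticipate a substantive obstacle.
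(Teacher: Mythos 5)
Your proof is correct and follows the same approach the paper attributes to Bryant and Tupper: decompose the tree diversity as a nonnegative combination of cut pseudo-diversities, one per tree edge, and read off the isometric $\ell_1$ embedding via Theorem \ref{thm:Ell1CutDiv}. The key fact you supply — that in a tree the Steiner subtree for $A$ consists exactly of those edges whose removal separates two points of $A$ — is the right justification, and your explicit coordinate map $f(x)_i = w(e_i)\,\mathbbm{1}[x\in U_{e_i}]$ realizes the decomposition directly, so nothing is missing.
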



Theorem~\ref{thm:TreeDiv} and the FRT algorithm 
are then combined to prove the following.

\begin{theorem}[Restatement of Theorem 2 in \cite{BTopen}]\label{thm:SteinerDivEmbedding}
Let $(X,\delta_\text{Steiner})$ be a Steiner diversity where $|X| = n$. Then there is a randomized polynomial-time algorithm that embeds $(X,\delta_\text{Steiner})$ into the $\ell_1$ diversity with $O(\log{n})$ distortion.
\end{theorem}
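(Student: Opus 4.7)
The plan is to combine the FRT algorithm with Theorem~\ref{thm:TreeDiv} in direct analogy with the classical construction of an $\ell_1$ embedding of a metric via a random tree metric. First I would extract the induced metric $(X, d)$ of the given Steiner diversity $(X, \delta_\text{Steiner})$ and apply FRT to produce, in randomized polynomial time, a random dominating tree $T$ with shortest-path metric $d_T$ satisfying $d(x,y) \le d_T(x,y)$ deterministically and $\mathbb{E}[d_T(x,y)] \le O(\log n)\, d(x,y)$ for every pair. Letting $(X, \delta_\text{tree})$ denote the tree diversity defined by $T$, Theorem~\ref{thm:TreeDiv} then embeds $(X, \delta_\text{tree})$ isometrically and in polynomial time into the $\ell_1$ diversity. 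Composition yields a candidate embedding of $(X, \delta_\text{Steiner})$ into $\ell_1$, and the remaining work is to promote FRT's per-pair guarantee to a per-subset guarantee on diversity values.

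For the contraction direction I would show $\delta_\text{Steiner}(A) \le \delta_\text{tree}(A)$ deterministically via a tree-to-graph lift: given a minimum-weight subtree $T'$ of $T$ spanning $A$, replace each edge $(u,v) \in T'$ with a shortest $u$-$v$ path in the graph defining $\delta_\text{Steiner}$; the resulting connected subgraph still spans $A$ and, since $d \le d_T$, has total weight at most $\sum_{(u,v) \in T'} d(u,v) \le \sum_{(u,v) \in T'} d_T(u,v) = \delta_\text{tree}(A)$. For the expansion direction I would fix a minimum Steiner tree $T^\star$ in the ground graph for $A$ of weight $\delta_\text{Steiner}(A)$ and consider the union over edges $(u,v) \in T^\star$ of the unique $u$-$v$ paths inside $T$; this is a connected subhypergraph of $T$ containing every node of $T^\star$, hence containing $A$, and its weight is at most $\sum_{(u,v) \in T^\star} d_T(u,v)$. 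Taking expectations and invoking FRT's per-pair bound yields $\mathbb{E}[\delta_\text{tree}(A)] \le O(\log n) \sum_{(u,v) \in T^\star} d(u,v) = O(\log n)\, \delta_\text{Steiner}(A)$. Composing the resulting probabilistic domination $\delta_\text{Steiner}(A) \le \delta_\text{tree}(A)$ with the isometric tree-to-$\ell_1$ map of Theorem~\ref{thm:TreeDiv} then gives the advertised $O(\log n)$-distortion $\ell_1$ embedding in expectation.

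The main obstacle is the expansion direction: FRT supplies bounds only on pairwise distances, whereas a diversity embedding must control $\delta$ on arbitrarily large subsets, of which there are $2^{|X|}$. The key observation that avoids this apparent blow-up is that the Steiner-tree structure already decomposes $\delta_\text{Steiner}(A)$ as a sum of at most $|X|-1$ pairwise distances (the edge weights of $T^\star$), so linearity of expectation lifts FRT's pairwise bound directly to a subset-level bound without any need for a union bound over subsets or Chernoff-style concentration. The polynomial running time is then inherited from FRT's randomized polytime guarantee and the polytime construction of the isometric $\ell_1$ representation of the tree diversity.
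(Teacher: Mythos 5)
Your proposal follows exactly the route the paper sketches (and that Bryant and Tupper take): apply FRT to the induced metric, embed the resulting tree diversity isometrically into $\ell_1$ via Theorem~\ref{thm:TreeDiv}, and use the Steiner-tree decomposition of $\delta_\text{Steiner}(A)$ to lift FRT's pairwise guarantee to all subsets by linearity of expectation. Both directions are argued correctly, with one small slip: $\sum_{(u,v)\in T^\star}d(u,v)$ equals $\delta_\text{Steiner}(A)$ only when $T^\star$ is read as a Steiner tree in the complete graph under the metric $d$; if $T^\star$ lives in the weighted graph defining the diversity the correct relation is $\le$, which still suffices. The one substantive caveat is that a single sampled tree only gives a per-subset expectation bound $\mathbb{E}[\delta_\text{tree}(A)] \le O(\log n)\,\delta_\text{Steiner}(A)$, which you acknowledge by ending with ``distortion in expectation,'' whereas the theorem asks for an actual $O(\log n)$-distortion embedding. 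Your key observation already defeats the $2^{n}$ blow-up, so either standard upgrade closes this: existentially, replace $\delta_\text{tree}$ by the averaged diversity $\bar\delta(A)=\mathbb{E}[\delta_\text{tree}(A)]$, a convex combination of tree diversities and hence still $\ell_1$-embeddable, with $\delta_\text{Steiner}\le\bar\delta\le O(\log n)\,\delta_\text{Steiner}$; algorithmically, average polynomially many independent FRT trees and apply concentration over only the $O(n^2)$ pairs (the Steiner-tree decomposition then controls every subset for free). For the sparsest-cut application a single tree plus Markov's inequality is already enough, which is the form the paper actually uses downstream.
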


\subsubsection{A Hypergraph Steiner Diversity Embedding}\label{sec:HSteinerDivEmbed}

 Bryant and Tupper \cite{BTopen} show  that a hypergraph Steiner diversity corresponding to a rank $k$ hypergraph can be embedded into the $\ell_1$ diversity with distortion $O(k\log{n})$. 
At a high level, this $O(k\log{n})$-distortion embedding follows by approximating a hyperedge-weighted rank-$k$ hypergraph with an edge-weighted graph while incurring an $O(k)$ approximation factor. Or in other words, embedding a hypergraph Steiner diversity into a Steiner diversity with distortion $O(k)$. Then, the additional $O(\log{n})$ factor is incurred by embedding this Steiner diversity into the $\ell_1$ diversity. 

\begin{theorem}\label{thm:HSteinerToSteiner}
Let $H = (V,E,w)$ be a rank $k$ hypergraph with node set $V$, edge set $E$, and nonnegative hyperedge weights $w:E\to \mathbb{R}_+$. Let $(V,\delta_\mathcal{H})$ be the corresponding hypergraph Steiner diversity. Then there is a Steiner diversity $(V,\delta_\text{Steiner})$ into which $(V,\delta_{\mathcal{H}})$ can be polynomial-time embedded with distortion $O(k)$.
\end{theorem}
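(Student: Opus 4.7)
The plan is to build an edge-weighted graph $G' = (V', E', w')$ explicitly from $H$ and take the embedding to be the inclusion map. Concretely, set $V' = V \cup \{v_U : U \in E\}$, where each $v_U$ is a fresh Steiner vertex associated with the hyperedge $U$; for each $U \in E$ and each $u \in U$, add an edge $\{u, v_U\}$ of weight $w(U)$. Let $(V', \delta_{\text{Steiner}})$ be the Steiner diversity on the resulting graph and let $f : V \to V'$ be the inclusion. Constructing $G'$ and using it as an oracle is clearly polynomial in $|V|+|E|$, so the only content is proving the two-sided distortion bound $\delta_\mathcal{H}(A) \leq \delta_{\text{Steiner}}(f(A)) \leq k \, \delta_\mathcal{H}(A)$ for every $A \subseteq V$.

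For the upper bound I would start with an optimal hypergraph Steiner collection $t \in \mathcal{T}_A$ achieving $\delta_\mathcal{H}(A) = \sum_{U \in t} w(U)$ and replace each $U \in t$ by the star consisting of the edges $\{u, v_U\}$ with $u \in U$. This star has weight $|U| \cdot w(U) \leq k \cdot w(U)$, and the union of these stars over all $U \in t$ contains $A$ and is connected because the stars intersect wherever the hyperedges do. Contracting this union to a spanning tree shows $\delta_{\text{Steiner}}(f(A)) \leq k \, \delta_\mathcal{H}(A)$.

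For the lower bound I would take an optimal Steiner tree $T$ in $G'$ spanning $A$ and read off the subset $t = \{U \in E : v_U \in V(T)\}$. Every Steiner vertex $v_U$ in $T$ is incident to at least one edge of $T$, so the contribution of $v_U$ to $w'(T)$ is at least $w(U)$, yielding $\sum_{U \in t} w(U) \leq w'(T) = \delta_{\text{Steiner}}(f(A))$. It remains to verify that $t \in \mathcal{T}_A$: for any two nodes $a, b \in A$ the $a$-to-$b$ path in $T$ alternates between vertices of $V$ and Steiner nodes $v_{U_1}, v_{U_2}, \ldots, v_{U_m}$, and each consecutive pair of original vertices on the path lies in the corresponding hyperedge $U_i$. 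Hence the hyperedges $U_1, \ldots, U_m$ form a connected sub-hypergraph joining $a$ to $b$, so $t$ is a connected sub-hypergraph containing $A$ and $\delta_\mathcal{H}(A) \leq \delta_{\text{Steiner}}(f(A))$.

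The main subtlety to handle carefully is the connectivity argument in the lower-bound step: one must be sure that replacing each Steiner vertex $v_U$ by the hyperedge $U$ preserves reachability inside the underlying graph-theoretic model of hypergraph connectivity used in Definition \ref{defn:HSteinerDiversity}. Once this is spelled out, combining the two inequalities gives the claim with $c_1 = 1$ and $c_2 = k$, hence overall distortion $O(k)$, and since the hypergraph Steiner diversity of $A$ equals the Steiner diversity of $f(A)$ in the induced metric only in the pair case ($|A|=2$), the amplification by $k$ is tight precisely for evaluating a single wide hyperedge.
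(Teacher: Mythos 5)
Your construction and the two-sided distortion argument are sound, but the target diversity you produce is a Steiner diversity on the enlarged ground set $V' = V \cup \{v_U : U \in E\}$, whereas the theorem asserts the existence of a Steiner diversity on $V$ itself. This is a genuine gap: the restriction of a Steiner diversity to a subset of its ground set need not be a Steiner diversity on that subset. For example, take the star $K_{1,3}$ with center $s$, leaves $a,b,c$, and unit edge weights; restricted to $\{a,b,c\}$ one gets all pairwise values $2$ and $\delta(\{a,b,c\}) = 3$, but any graph on three vertices realizing pairwise distances $2$ forces every spanning tree to have weight at least $4$, so no Steiner diversity on $\{a,b,c\}$ matches. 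Thus your ``inclusion map into $(V',\delta_{\text{Steiner}})$'' does not literally exhibit the object the theorem promises, even though it would suffice for the downstream Corollary~\ref{cor:HSteinerDivEmbed} as long as $|E|$ is polynomially bounded, since there $\log |V'| = O(\log(n+m))$.

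The fix is to build the graph directly on $V$: for each hyperedge $U$, fix an anchor $u_0 \in U$ and add edges $\{u_0,u\}$ for all $u \in U\setminus\{u_0\}$, each with weight $w(U)$ (a clique on $U$ with edge weights $w(U)$ also works). The two inequalities then run exactly as you wrote them. For the upper bound, replacing each $U$ in an optimal hypergraph Steiner collection by its anchored star costs $(|U|-1)w(U) \leq (k-1)w(U)$ per hyperedge, and the stars still glue along shared hypergraph vertices. For the lower bound, each edge $e$ of an optimal Steiner tree in $G''$ is contained in some hyperedge $U_e$ with $w''(e) = w(U_e)$; consecutive tree edges share a vertex, and that vertex lies in both parent hyperedges, so $t = \{U_e : e \in T\}$ is a connected sub-hypergraph spanning $A$ of cost at most $w''(T)$. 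This yields the theorem with the correct ground set and distortion $k-1$. One further small correction: in your $V'$-construction the induced pair distances are \emph{not} preserved exactly --- if $u,v$ both lie in a cheapest hyperedge $U$, the $\ell_1$ path $u \to v_U \to v$ has weight $2w(U)$, so pairwise distortion is $2$, not $1$. The distortion is driven to $\Theta(k)$ on a full hyperedge, but the parenthetical claim about exactness for pairs should be dropped.
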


Then the main result of this section follows as a corollary of Theorems \ref{thm:SteinerDivEmbedding} and \ref{thm:HSteinerToSteiner}.

\begin{cor}[Bryant and Tupper in \cite{BTopen}]\label{cor:HSteinerDivEmbed}
Let $H = (V,E,w)$ be a rank-$k$ hypergraph with node set $V$, edge set $E$, and nonnegative hyperedge weights $w:E\to \mathbb{R}_+$. Let $(V,\delta_\mathcal{H})$ be the corresponding hypergraph Steiner diversity. Then $(V,\delta_\mathcal{H})$ can be embedded into the $\ell_1$ diversity with distortion $O(k\log{n})$, in randomized polynomial-time. 
\end{cor}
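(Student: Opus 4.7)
The plan is to obtain the corollary as an essentially immediate composition of the two ingredients stated just above it: Theorem~\ref{thm:HSteinerToSteiner} which converts the hypergraph Steiner diversity into an ordinary Steiner diversity with multiplicative loss $O(k)$, and Theorem~\ref{thm:SteinerDivEmbedding} which then embeds that Steiner diversity into the $\ell_1$ diversity with multiplicative loss $O(\log n)$ in randomized polynomial time. So the map will be $g \circ f$, where $f : (V,\delta_\mathcal{H}) \to (V,\delta_\text{Steiner})$ is the map from Theorem~\ref{thm:HSteinerToSteiner} and $g : (V,\delta_\text{Steiner}) \to (\mathbb{R}^m, \delta_1)$ is the map from Theorem~\ref{thm:SteinerDivEmbedding}.

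The key observation is that distortions of diversity embeddings multiply under composition, in exactly the same way as for metric embeddings. Concretely, writing the guarantee of Definition~\ref{defn:EmbeddingOfDiv} with constants $(c_1,c_2)$ for $f$ and $(c'_1,c'_2)$ for $g$, for any $A\in\mathcal{P}(V)$ we chain
\[
\tfrac{1}{c_1 c'_1}\,\delta_\mathcal{H}(A) \;\leq\; \tfrac{1}{c'_1}\,\delta_\text{Steiner}(f(A)) \;\leq\; \delta_1(g(f(A))) \;\leq\; c'_2\,\delta_\text{Steiner}(f(A)) \;\leq\; c_2 c'_2\,\delta_\mathcal{H}(A),
\]
so the composed distortion is $(c_1 c'_1)(c_2 c'_2) = O(k)\cdot O(\log n) = O(k\log n)$, as required. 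I would state and verify this composition lemma once, since it is the only nonroutine checking needed.

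For tractability, both building blocks are polynomial-time: Theorem~\ref{thm:HSteinerToSteiner} is deterministic polynomial-time, and Theorem~\ref{thm:SteinerDivEmbedding} is randomized polynomial-time (with the $O(\log n)$ factor typically obtained in expectation via an FRT-style step). Composing a deterministic with a randomized polynomial-time algorithm preserves randomized polynomial time, and the expectation bound on the inner embedding combines multiplicatively with the worst-case bound on the outer embedding by linearity of expectation. Thus querying the oracle for $\delta_\mathcal{H}$ a polynomial number of times suffices to produce both $f$ and $g$.

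There is essentially no hard step here; the only thing to be careful about is bookkeeping the one-sided inequalities (each embedding satisfies a non-contraction bound on one side and a non-expansion bound on the other), and making sure that the same $A$-indexed set is used on both sides of the chain so that the composed map is indeed a valid diversity embedding in the sense of Definition~\ref{defn:EmbeddingOfDiv}. Once that is written out, the corollary statement follows immediately.
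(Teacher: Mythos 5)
Your proposal is correct and matches the paper's proof exactly: compose the $O(k)$-distortion embedding of Theorem~\ref{thm:HSteinerToSteiner} with the $O(\log n)$-distortion embedding of Theorem~\ref{thm:SteinerDivEmbedding}, and observe that distortions multiply under composition. The extra bookkeeping you spell out (the chained inequalities and the polynomial-time composition) is precisely the routine verification the paper leaves implicit.
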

\begin{proof}
We let $f:(V,\delta_\mathcal{H})\to (V,\delta_\text{Steiner})$ be a polynomial-time $O(k)$-distortion embedding of $(V,\delta_\mathcal{H})$ into some Steiner diversity $(V,\delta_\text{Steiner})$, due to Theorem \ref{thm:HSteinerToSteiner}. We let $g:(V,\delta_\text{Steiner})\to (\mathbb{R}^m,\delta_1)$ be a randomized polynomial-time $O(\log{n})$-distortion embedding of $(V,\delta_\text{Steiner})$ into the $\ell_1$ diversity for some dimension $m$, due to Theorem \ref{thm:SteinerDivEmbedding}. Then the map $g\cdot f$ is an embedding of $(V,\delta_\mathcal{H})$ into the $\ell_1$ diversity with distortion $O(k\log{n})$. Moreover, this embedding is computable in randomized polynomial-time. This completes the proof.
\end{proof}

\subsection{Subadditive Set Functions as Diversities}\label{sec:SubadditiveSetFunc}

In this section we answer two questions. First, when is a subadditive set function a diversity? Secondly, when can a subadditive set function be modified into a diversity?  This is a necessary technical lemma for the inapproximability result Theorem \ref{thm:InapproxResult_orig}.

Recall that a set function
$f:2^X\to \mathbb{R}$ is {\em subadditive} if for any $A,B\subseteq X$, $f$ satisfies:

\begin{equation}
f(A\cup B) \leq f(A) + f(B).    
\end{equation}

\noindent
Subadditive functions are closely related to diversities if we consider the third axiom of diversities, the triangle inequality. Specifically, that axiom could be recast as requiring subadditivity for sets $A,B$ with non-empty intersection.  (In that case $A \cap B \neq \emptyset$ plays the role of $C$.)  Replacing the third axiom with this condition does not quite give an equivalent definition of diversity however. There are minor nuances, but  nonetheless, there is a close relationship between diversities and subadditive set functions. 

Due to Proposition \ref{prop:Monotonicity},  in order for a subadditive set function to be, or even ``resemble'', a diversity it must be nonnegative and increasing. 

\begin{lemma}\label{lemma:subadditive}[see Appendix \ref{sec:omitted}]
Let $X$ be a set and let $f:2^X\to\mathbb{R}$ be a nonnegative, increasing, and subadditive set function. Then $(X,\delta)$ is a pseudo-diversity where $\delta$ is defined as 

\begin{equation}
    \delta(A) = \begin{cases}  f(A) & \text{if $|A| \geq 2$} \\ 0 & \text{otherwise} \end{cases}
\end{equation}
\end{lemma}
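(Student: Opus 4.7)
The plan is to verify the three axioms of Definition \ref{defn:Diversity} (with the second one relaxed to the pseudo-diversity version) for $(X,\delta)$. Axioms 1 and 2 are immediate: $\delta \geq 0$ because $f \geq 0$ and $\delta$ is set to $0$ on singletons by fiat, and by construction $\delta(A) = 0$ whenever $|A| \leq 1$. The entire content of the lemma is therefore the triangle inequality
\[
\delta(A \cup B) \;\leq\; \delta(A \cup C) + \delta(B \cup C), \qquad C \neq \emptyset.
\]

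First I would dispose of the trivial case $|A \cup B| \leq 1$, where the left-hand side is $0$ and there is nothing to show. So assume $|A \cup B| \geq 2$, giving $\delta(A \cup B) = f(A \cup B)$. The key observation is that $A \cup B \subseteq (A \cup C) \cup (B \cup C)$, so by monotonicity and subadditivity of $f$,
\[
f(A \cup B) \;\leq\; f\bigl((A \cup C) \cup (B \cup C)\bigr) \;\leq\; f(A \cup C) + f(B \cup C).
\]
In the main case where both $|A \cup C| \geq 2$ and $|B \cup C| \geq 2$, we have $\delta(A \cup C) = f(A \cup C)$ and $\delta(B \cup C) = f(B \cup C)$, so this chain of inequalities directly gives the triangle inequality.

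The only delicate part is the boundary subcase where $\delta$ and $f$ differ on a singleton: say $|A \cup C| = 1$. Since $C \neq \emptyset$, this forces $C = \{c\}$ and $A \subseteq \{c\}$, hence $A \cup B \subseteq B \cup \{c\} = B \cup C$. Because $|A \cup B| \geq 2$, we also get $|B \cup C| \geq 2$, so $\delta(B \cup C) = f(B \cup C)$, and monotonicity alone yields
\[
\delta(A \cup B) \;=\; f(A \cup B) \;\leq\; f(B \cup C) \;=\; \delta(B \cup C) \;\leq\; \delta(A \cup C) + \delta(B \cup C).
\]
The case $|B \cup C| = 1$ is symmetric. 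There is no real obstacle; the only mild annoyance is tracking the case analysis forced by the fact that $\delta$ is truncated to $0$ on singletons while $f$ need not be.
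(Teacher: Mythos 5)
Your proof is correct. Axioms 1 and 2 are indeed immediate, and your verification of the triangle inequality covers all cases: the trivial case $|A\cup B|\leq 1$, the main case where $|A\cup C|,|B\cup C|\geq 2$, the boundary cases where one of these is a singleton (the other then being forced to have size $\geq 2$ since $A\cup B\subseteq B\cup C$ or symmetrically), and the case where both are singletons, which collapses into the trivial case.

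The underlying mechanism is the same as the paper's — reduce the triangle inequality to monotonicity and subadditivity of $f$ — but you organize the case analysis around the cardinalities $|A\cup B|$, $|A\cup C|$, $|B\cup C|$, which are exactly the arguments at which the truncated definition of $\delta$ matters. The paper instead cases on $|A|$ and $|B|$ directly (empty, singleton, $\geq 2$) with several sub-subcases, and also proves monotonicity of $\delta$ as a separate intermediate step. Your decomposition is leaner: a single inequality chain $f(A\cup B)\leq f(A\cup B\cup C)\leq f(A\cup C)+f(B\cup C)$ handles the generic case, and the only residual work is the two symmetric boundary singleton cases. (The paper's chain applies subadditivity first to $A,B$ and then monotonicity; yours applies monotonicity first and then subadditivity to $A\cup C,B\cup C$ — both orderings are valid, yours avoids the need to know $\delta$ is increasing.) The trade-off is minor: the paper's version also records that $\delta$ itself is increasing, which you leave implicit (it is already Proposition \ref{prop:Monotonicity}), at the cost of heavier bookkeeping.
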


\section{Approximating Sparsest Cuts via Diversity Embeddings}\label{ch:ApproxAlg}

We present the framework of Bryant and Tupper \cite{BTGeom}
for studying sparsest cuts (flow-cut gaps) when  both  the supply and demand are hypergraphs.  This is based on an LP relaxation that optimizes over pseudo-diversities  and then gives a low-distortion embedding  of the diversity corresponding to an optimal LP solution into the $\ell_1$ diversity. An approximate sparsest cut is then extracted due to the fact that $\ell_1$ diversities are a nonnegative sum of cut diversities. Ultimately one can show that the flow-cut gap of the sparsest cut and the maximum concurrent multicommodity flow
(we do not formally define here - see \cite{BTGeom})
 is bounded above by the  distortion of the $\ell_1$ embedding.
We first give an overview of the ingredients.

In Section~\ref{sec:lp} we introduce the relaxation and point out the missing elements to make it polytime solvable.  Section~\ref{sec:rounding} discusses  rounding a solution (a diversity) into an approximate cut and Section~\ref{sec:algoverview} puts the pieces together to give a generic approximation algorithm for sparsest cut. The quality of the approximation depends on how well one can round the diversity produced by the LP.
In Section \ref{sec:characterizingSolutions} we show that  the optimal solution for the  LP relaxation when $G$ is a graph and $H$ is a hypergraph is actually a diameter diversity. In Theorem \ref{thm:DiamDivEmbeddingNew_orig} we then show that such diversities have an $O(\log n)$ polytime embedding into $\ell_1$. A similar bound is achievable for $G$ hypergraph and $H$ a graph due to the optimal diversity being a Steiner diversity.  We do not focus on this since in this case, bounds for sparsest cut due to Louis are superior. Section \ref{sec:AlgorithmicImplications} then gives  algorithmic details which show that this results in  a randomized polynomial-time $O(\log{n})$-approximation algorithm for the sparsest cut problem in the setting where either the supply or demand hypergraphs is a graph.

Throughout this chapter we consider an instance of the sparsest cut problem defined by a supply hypergraph $G=(V,E_G,w_G)$ with rank $r_G$ and a demand hypergraph $H=(V,E_H,w_H)$ with rank $r_H$.

\subsection{A Linear Programming Relaxation}
\label{sec:lp}

According to the definition of  sparsest cut $\phi$, the definition of a cut pseudo diversity, and Definition \ref{defn:CutDiversity}, we have that the sparsest cut $\phi$ is equivalent to

\begin{equation}\label{eqn:ApproxAlgEqn1}
    \phi = \min_{A\subseteq V} \frac{\sum_{U\in E_G}w_G(U)\delta_{A}(U)}{\sum_{S\in E_H}w_H(S)\delta_A(S)}
\end{equation}

This can be further relaxed to  optimizing over  all pseudo diversities
(not just cuts):

\begin{equation}\label{eqn:ApproxAlgEqn2}
    \min_{(V,\delta) \text{ is a pseudo-diversity}} \frac{\sum_{U\in E_G}w_G(U)\delta(U)}{\sum_{S\in E_H}w_H(S)\delta(S)}
\end{equation} We call (\ref{eqn:ApproxAlgEqn2}) the \emph{sparsest cut diversity-relaxation}. Moreover, since (\ref{eqn:ApproxAlgEqn2}) is a relaxation of (\ref{eqn:ApproxAlgEqn1}) we have that \begin{equation}\label{eqn:ApproxAlgEqn3}
    \phi \geq \min_{(V,\delta) \text{ is a pseudo-diversity}} \frac{\sum_{U\in E_G}w_G(U)\delta(U)}{\sum_{S\in E_H}w_H(S)\delta(S)}
\end{equation}

It can be shown (see \cite{BTGeom}) that (\ref{eqn:ApproxAlgEqn2}) is equivalent to the following optimization problem

\begin{equation}\label{eqn:ApproxAlgEqn4}
\begin{aligned}
\min & \sum_{U\in E_G} w_G(U)\delta(U) \\
\text{s.t.} & \sum_{S\in E_H} w_H(S)\delta(S) \geq 1 \\
& (V,\delta) \text{ is a pseudo-diversity}
\end{aligned}
\end{equation}

Recall that $\mathcal{T}_{(G,S)}$ is the collection of subsets of hyperedges of $G$ that correspond to connected subhypergraphs that contain the nodes $S$. Then (\ref{eqn:ApproxAlgEqn4}) can be shown to be equivalent to the following LP 

\begin{equation}\label{eqn:LP1}
\begin{aligned}
\min & \sum_{U\in E_G}w(U)d_U \\
\text{s.t.} & \sum_{S\in E_H}w_H(S)y_S \geq 1 \\
& \sum_{U\in t} d_U \geq y_S & \forall S\in E_H, t\in \mathcal{T}_{(G,S)} \\
& d_U \geq 0 & \forall U\in E_G \\
& y_S \geq 0 & \forall S\in E_H \\
\end{aligned}
\end{equation} We can take a feasible solution of (\ref{eqn:LP1}), $\{d_U\}_{U\in E_G}$ and $\{y_S\}_{S\in E_H}$, and define a corresponding feasible solution to (\ref{eqn:ApproxAlgEqn4}) with an equivalent objective value. This solution corresponds to  a hypergraph Steiner diversity (see Definition~\ref{defn:HSteinerDiversity})
\begin{equation}
\label{eqn:ApproxAlgEqn5}
    \mbox{$(V,\delta_\mathcal{G})$ where $\mathcal{G}=(V,E_G,w)$ with
  $w:E_G\to\mathbb{R}_+$ is defined as $w(U) = d_U$.}
\end{equation}

This LP relaxation has polynomially many variables\footnote{If $G$ and $H$ are defined explicitly.} but in general  has exponentially many constraints. Specifically, for $S\in E_H$, $\mathcal{T}_{(G,S)}$ may have  exponentially many subsets.  Hence, the set of constraints $\{\sum_{U\in t}d_U \geq y_S\}_{t\in \mathcal{T}_{(G,S)}}$ is exponentially large in general. We can approximately solve this LP however if there is a polynomial time (approximate) separation oracle for \begin{equation}
    \min_{t\in \mathcal{T}_{(G,S)}}\sum_{U\in t}d_U \geq y_S
\end{equation} for each $S \in E_H$. This is the \underline{hypergraph Steiner problem (HSP)} which  is NP-complete \cite{baudis2000approximating}. We give an $O(\log{r_H})$-approximate separation oracle (the first as far as we know) in  Chapter \ref{ch:HSP}. In the remainder, we assume that the solutions we obtain for the LP has an encoding size which is polynomially bounded in $m,n$.

\subsection{Rounding the Linear Programming Relaxation}
\label{sec:rounding}

Given a solution to (\ref{eqn:LP1}) we immediately have a solution to the sparsest cut diversity-relaxation (\ref{eqn:ApproxAlgEqn2}), namely
the diversity $(V,\delta)$ from (\ref{eqn:ApproxAlgEqn5}). An approximate sparsest cut can  be extracted as follows.  Let $f:V\to \mathbb{R}^m$ be an embedding from $(V,\delta)$ to the $\ell_1$ diversity $(\mathbb{R}^m,\delta_1)$ with distortion $c \geq 1$. This yields the following inequalities

\begin{equation}\label{eqn:ApproxAlgEqn6}
\frac{\sum_{U\in E_G}w_G(U)\delta_1(f(U))}{\sum_{S\in E_H}w_H(S)\delta_1(f(S))} \leq c \frac{\sum_{U\in E_G}w_G(U)\delta(U)}{\sum_{S\in E_H}w_H(S)\delta(S)} \leq c\phi
\end{equation} where the first inequality follows by the fact that $f$ is an embedding with distortion $c$ and the second inequality follows by the fact that $(V,\delta)$ is a feasible solution to (\ref{eqn:ApproxAlgEqn2}) which is a relaxation of $\phi$ according to Equation (\ref{eqn:ApproxAlgEqn3}).

By Theorem \ref{thm:Ell1CutDiv}, $\delta_1$ is a nonnegative sum of $O(nm)$ cut pseudo-diversities. Therefore, there exists some collection
$\mathcal{F}$ of subsets of $2^{f(V)}$,  where $|\mathcal{F}| \in O(nm)$, such that for any $A\subseteq V$ \begin{equation}
    \delta_1(f(A)) = \sum_{B\in \mathcal{F}}\alpha_B\delta_B(f(A))
\end{equation} where $\{\alpha_B\}_{B\in \mathcal{F}}$ are positive scalars and $\{(f(V),\delta_B)\}_{B\in \mathcal{F}}$ is a collection of cut pseudo-diversities. This immediately yields that \begin{equation}\label{eqn:ApproxAlgEqn7}
 \frac{\sum_{U\in E_G}w_G(U)\delta_1(f(U))}{\sum_{S\in E_H}w_H(S)\delta_1(f(S))} = \frac{\sum_{U\in E_G}w_G(U)\big[\sum_{B\in \mathcal{F}}\alpha_B\delta_B(f(U))\big]}{\sum_{S\in E_H}w_H(S)\big[\sum_{B\in \mathcal{F}}\alpha_B\delta_B(f(S))\big]}
\end{equation} Next, rearranging the order of the summations, we have that (\ref{eqn:ApproxAlgEqn7}) is equivalent to \begin{equation}\label{eqn:ApproxAlgEqn7.5}
    \frac{\sum_{B\in \mathcal{F}}\alpha_B\big[\sum_{U\in E_G}w_G(U)\delta_B(f(U))\big]}{\sum_{B\in \mathcal{F}}\alpha_B\big[\sum_{S\in E_H}w_H(S)\delta_B(f(S))\big]}
\end{equation}

Finally, it can be shown that there exists some cut $B_0\subseteq V$ where $f(B_0)\in\mathcal{F}$ such that \begin{equation}\label{eqn:ApproxAlgEqn8}
\frac{\sum_{U\in E_G}w_G(U)\delta_{f(B_0)}(f(U))}{\sum_{S\in E_H}w_H(S)\delta_{f(B_0)}(f(S))} \leq  \frac{\sum_{B\in \mathcal{F}}\alpha_B\big[\sum_{U\in E_G}w_G(U)\delta_B(f(U))\big]}{\sum_{B\in \mathcal{F}}\alpha_B\big[\sum_{S\in E_H}w_H(S)\delta_B(f(S))\big]}
\end{equation} We note that the left side of the inequality (\ref{eqn:ApproxAlgEqn8}) is simply \begin{equation}\label{eqn:ApproxAlgEqn9}
    \phi(B_0) = \frac{\sum_{U\in E_G}w_G(U)\delta_{B_0}(U)}{\sum_{S\in E_H}w_H(S)\delta_{B_0}(S)} = \frac{\sum_{U\in E_G}w_G(U)\delta_{f(B_0)}(f(U))}{\sum_{S\in E_H}w_H(S)\delta_{f(B_0)}(f(S))} 
\end{equation} where the second equality follows by the fact that cuts are preserved under the map $f:V\to \mathbb{R}^m$.  

Hence putting together (\ref{eqn:ApproxAlgEqn6}), (\ref{eqn:ApproxAlgEqn7}), (\ref{eqn:ApproxAlgEqn7.5}), (\ref{eqn:ApproxAlgEqn8}), (\ref{eqn:ApproxAlgEqn9}), and the fact that $\phi \leq \phi(B_0)$ we have that \begin{equation}\label{eqn:ApproxAlgEqn10}
    \phi \leq \phi(B_0) \leq c\phi
\end{equation}


\subsection{An Approximation Algorithm for the Sparsest Cut}
\label{sec:algoverview}

\begin{theorem}\label{thm:SparsestCutAlGGeneral}
Let $G=(V,E_G,w_G)$ be a supply hypergraph with rank $r_G$ and $H = (V,E_H,w_H)$ be a demand hypergraph with rank $r_H$. 
\begin{enumerate}
    \item Let $\alpha_{\text{LP}}$ be the approximation factor for some (randomized) polynomial-time algorithm that (approximately) solves LP Relaxation (\ref{eqn:LP1}). 
    \item Let $\alpha_{\text{div}}$ be the approximation factor for some (randomized) polynomial-time algorithm that computes the hypergraph Steiner diversity corresponding to the optimal solution of LP Relaxation \ref{eqn:LP1}, as defined according to Equation (\ref{eqn:ApproxAlgEqn5}).
    \item  Let $\alpha_{\text{emb}}$ be the distortion of embedding the aforementioned diversity into the $\ell_1$ diversity for some (randomized) polynomial time algorithm.
\end{enumerate} Then the approach outlined in this section forms a (randomized) polynomial-time $O(\alpha_{\text{LP}}\alpha_{\text{div}}\alpha_{\text{emb}})$-approximation algorithm for the sparsest cut problem in hypergraphs.
\end{theorem}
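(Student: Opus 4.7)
The plan is to execute the three subroutines in the order of Sections \ref{sec:lp}--\ref{sec:rounding} and track the multiplicative blow-ups; the argument is essentially a formal chaining of the inequalities already derived in those sections. First, invoke the $\alpha_{\text{LP}}$-approximate LP solver on (\ref{eqn:LP1}) to obtain a feasible pair $(\{d_U^*\}_{U\in E_G},\{y_S^*\}_{S\in E_H})$. Because (\ref{eqn:LP1}) is a relaxation of (\ref{eqn:ApproxAlgEqn1}), its optimum is at most $\phi$; after rescaling so that $\sum_{S} w_H(S) y_S^* = 1$, we obtain $\sum_{U} w_G(U) d_U^* \leq \alpha_{\text{LP}} \cdot \phi$.

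Next, form the hypergraph Steiner diversity $(V,\delta)$ of Equation (\ref{eqn:ApproxAlgEqn5}) with weights $w(U) = d_U^*$, and compute its values using the $\alpha_{\text{div}}$-approximate oracle, yielding $\widehat{\delta}$ with $\delta(A) \leq \widehat{\delta}(A) \leq \alpha_{\text{div}} \delta(A)$ on every queried set. Two observations bound the ratio: for each $U\in E_G$ the single-hyperedge tree $\{U\}\in\mathcal{T}_U$ witnesses $\delta(U) \leq d_U^*$, while for each $S\in E_H$ the LP constraints $\sum_{U\in t} d_U^* \geq y_S^*$ over all $t\in\mathcal{T}_{(G,S)}$ force $\delta(S) \geq y_S^*$. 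Chaining these with the previous step,
\begin{equation*}
\frac{\sum_{U} w_G(U)\,\widehat{\delta}(U)}{\sum_{S} w_H(S)\,\widehat{\delta}(S)} \;\leq\; \alpha_{\text{div}}\cdot\frac{\sum_{U} w_G(U)\,\delta(U)}{\sum_{S} w_H(S)\,\delta(S)} \;\leq\; \alpha_{\text{div}}\cdot\frac{\sum_{U} w_G(U)\,d_U^*}{\sum_{S} w_H(S)\,y_S^*} \;\leq\; \alpha_{\text{div}}\alpha_{\text{LP}}\,\phi.
\end{equation*}

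Finally, apply the $\alpha_{\text{emb}}$-distortion embedding $f:V\to\mathbb{R}^m$ to push $(V,\delta)$ into the $\ell_1$ diversity. The rounding argument of Section \ref{sec:rounding} then goes through verbatim: inequality (\ref{eqn:ApproxAlgEqn6}) costs a factor $\alpha_{\text{emb}}$, Theorem \ref{thm:Ell1CutDiv} decomposes $\delta_1\circ f$ into a nonnegative combination of $O(nm)$ cut pseudo-diversities indexed by a polynomially-sized family $\mathcal{F}$, and the averaging step (\ref{eqn:ApproxAlgEqn8})--(\ref{eqn:ApproxAlgEqn9}) produces $B_0\in\mathcal{F}$ satisfying $\phi(B_0) \leq \alpha_{\text{emb}}\alpha_{\text{div}}\alpha_{\text{LP}}\phi$. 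Since $|\mathcal{F}|$ is polynomial, scanning $\mathcal{F}$ and returning the sparsest cut gives the claimed $O(\alpha_{\text{LP}}\alpha_{\text{div}}\alpha_{\text{emb}})$-approximation in randomized polynomial time.

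The main obstacle I anticipate is reconciling the interaction between the LP solver and the HSP oracle: any polynomial-time solver for (\ref{eqn:LP1}) must separate over Steiner-tree constraints, so in concrete instantiations $\alpha_{\text{LP}}$ itself is realised \emph{through} an HSP oracle, and one should verify the analysis above does not silently double-count this loss (in the applications it is exactly the $\alpha_{\text{div}}$ oracle from Theorem \ref{thm:HSP_orig} that supplies $\alpha_{\text{LP}}$). A minor subtlety is that $\widehat{\delta}$ need not satisfy the triangle inequality, so to apply the embedding theorems as black boxes one should embed the true diversity $(V,\delta)$ and plug $\widehat{\delta}$ in only at the final numerator/denominator comparison, where the oracle's one-sided estimate is all that is required.
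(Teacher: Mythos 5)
Your high-level structure matches the paper's intent: the theorem is stated as a summary of Sections~\ref{sec:lp}--\ref{sec:rounding} (the paper gives no separate proof), and chaining the three losses is exactly what is meant. The normalization step, the observations $\delta(U)\leq d_U^*$ and $\delta(S)\geq y_S^*$, and the rounding via Theorem~\ref{thm:Ell1CutDiv} and inequalities (\ref{eqn:ApproxAlgEqn6})--(\ref{eqn:ApproxAlgEqn9}) are all on target.

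The one place where your argument does not quite close is the accounting of $\alpha_{\text{div}}$. Your displayed chain bounds $\frac{\sum_U w_G(U)\widehat{\delta}(U)}{\sum_S w_H(S)\widehat{\delta}(S)}\leq\alpha_{\text{div}}\alpha_{\text{LP}}\phi$, but this quantity is never fed into the rounding: in the next paragraph you embed the \emph{true} $\delta$, and the rounding then gives only $\phi(B_0)\leq\alpha_{\text{emb}}\cdot\frac{\sum_U w_G(U)\delta(U)}{\sum_S w_H(S)\delta(S)}\leq\alpha_{\text{emb}}\alpha_{\text{LP}}\phi$, with $\alpha_{\text{div}}$ appearing nowhere. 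The factor $\alpha_{\text{div}}$ must enter because the embedding algorithm cannot query $\delta$ directly; it is run on the oracle values, so it produces a map $f$ whose $\alpha_{\text{emb}}$-distortion guarantee holds relative to $\widehat{\delta}$, and composing $\frac{1}{c_1}\widehat{\delta}(A)\leq\delta_1(f(A))\leq c_2\widehat{\delta}(A)$ with $\delta(A)\leq\widehat{\delta}(A)\leq\alpha_{\text{div}}\delta(A)$ yields distortion $\alpha_{\text{emb}}\alpha_{\text{div}}$ relative to $\delta$, which is what the rounding argument then consumes. Your proposed workaround (``embed the true $(V,\delta)$ and plug $\widehat{\delta}$ in only at the end'') is therefore not operational: the algorithm has no access to $\delta$ other than through the oracle. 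You are right that $\widehat{\delta}$ need not itself be a diversity; this is a genuine wrinkle in the abstract statement, but in the paper's instantiations (Theorems~\ref{thm:GeneralAlgSparsestCut_orig} and~\ref{ApproxAlgGaGraph}) the embedding queries only pairwise values, where the oracle is exact (or constant-factor), so $\widehat{\delta}$ restricted to pairs is a metric and the issue does not bite. Your final observation that $\alpha_{\text{LP}}$ is itself realized through an HSP oracle but does not double-count with $\alpha_{\text{div}}$ is correct and worth stating.
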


\subsection{Characterizing the Optimal Solutions of the Sparsest Cut Diversity-Relaxation}\label{sec:characterizingSolutions}

In this section we provide characterizations for the optimal solutions of the sparsest cut diversity-relaxation, Equation (\ref{eqn:ApproxAlgEqn2}). The algorithmic implications are discussed in the next section.
We show that if the supply hypergraph is a graph, then the optimal diversity is a Steiner diversity and if the demand hypergraph is a graph, then the optimal diversity is a diameter diversity. Notably, both of these diversities have $O(\log{n})$-distortion polynomial-time embeddings into $\ell_1$.

Key structural properties of optimal diversities are derived in Theorems~\ref{thm:FlowCutGapDiversity} and \ref{thm:GAGraph}.


\begin{theorem}\label{thm:FlowCutGapDiversity}
Let $G = (V,E_G,w_G)$ be a supply hypergraph and let $H=(V,E_H,w_H)$ be a demand hypergraph with rank $r_H$. Let $(V,\delta)$ be a (pseudo) diversity attaining (some approximation to) the optimal objective value to the sparsest cut diversity-relaxation, Equation (\ref{eqn:ApproxAlgEqn2}). Let $(V,\delta_{\text{$r_H$-diam}})$ be the $r_H$-diameter diversity of $(V,\delta)$. Then $(V,\delta)$ can be asummed to be $(V,\delta_{\text{$r_H$-diam}})$.  
\end{theorem}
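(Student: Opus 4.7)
The plan is to exploit the extremal property of the $r_H$-diameter diversity established in Theorem~\ref{thm:MinimalityOfkDiameterDiversity}, together with the rank bound on the demand hyperedges. Concretely, I will argue that replacing $(V,\delta)$ by $(V,\delta_{\text{$r_H$-diam}})$ only decreases the numerator of the sparsest cut diversity-relaxation objective while leaving the denominator unchanged, so the resulting diversity attains an objective value no larger than that of $(V,\delta)$.

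The first step is to examine the denominator $\sum_{S\in E_H} w_H(S)\delta(S)$. For every $S \in E_H$ we have $|S|\le r_H$, and by Definition~\ref{defn:kdiamdiv},
\[
\delta_{\text{$r_H$-diam}}(S) \;=\; \max_{B\subseteq S,\,|B|\le r_H} \delta(B) \;=\; \delta(S),
\]
since $S$ itself is an admissible choice of $B$ and $\delta$ is monotone (Proposition~\ref{prop:Monotonicity}). Hence the denominator is identical under $\delta$ and $\delta_{\text{$r_H$-diam}}$.

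Next I would handle the numerator $\sum_{U\in E_G} w_G(U)\delta(U)$. By construction $\delta$ and $\delta_{\text{$r_H$-diam}}$ agree on all subsets of cardinality at most $r_H$, so $\delta \in \mathcal{D}_{(V,\delta,r_H)}$ (Definition~\ref{defn:kDiameterFamily}). Theorem~\ref{thm:MinimalityOfkDiameterDiversity} then gives $\delta_{\text{$r_H$-diam}}(U) \le \delta(U)$ for every $U \in E_G$, and since the weights $w_G(U)$ are nonnegative we obtain
\[
\sum_{U\in E_G} w_G(U)\,\delta_{\text{$r_H$-diam}}(U) \;\le\; \sum_{U\in E_G} w_G(U)\,\delta(U).
\]
Dividing by the (common, positive) denominator shows the objective value at $\delta_{\text{$r_H$-diam}}$ is at most that at $\delta$, so without loss of generality the optimum (or approximately optimum) diversity may be taken to be the $r_H$-diameter diversity of itself.

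The only subtle point, and the one I would be careful about, is checking that $(V,\delta_{\text{$r_H$-diam}})$ is itself a valid pseudo-diversity, so that it is actually feasible for the relaxation. This is where I expect the only real work: one must verify nonnegativity, the degenerate condition on singletons (inherited from $\delta$ since $|B|\le 1$ forces $\delta(B)=0$), and, crucially, the triangle inequality $\delta_{\text{$r_H$-diam}}(A\cup B) \le \delta_{\text{$r_H$-diam}}(A\cup C) + \delta_{\text{$r_H$-diam}}(B\cup C)$ for $C\neq\emptyset$. To argue the last inequality, I would pick a witness subset $B^*\subseteq A\cup B$ with $|B^*|\le r_H$ achieving $\delta_{\text{$r_H$-diam}}(A\cup B)$, partition $B^*$ into its parts inside $A$ and $B$, and apply the triangle inequality for $\delta$ together with any element of $C$, then bound each term on the right by the corresponding $r_H$-diameter quantity via monotonicity of the $\max$. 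Once well-definedness is in hand, the ratio comparison above closes out the proof.
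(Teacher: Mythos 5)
Your argument matches the paper's proof essentially line for line: the denominator is preserved because every $S\in E_H$ has $|S|\le r_H$, and the numerator can only decrease by the minimality of $\delta_{\text{$r_H$-diam}}$ within $\mathcal{D}_{(V,\delta,r_H)}$ (Theorem~\ref{thm:MinimalityOfkDiameterDiversity}). Your concern about verifying that $\delta_{\text{$r_H$-diam}}$ is itself a pseudo-diversity is a legitimate (and slightly underemphasized in the paper) prerequisite that the paper folds into the statement of Theorem~\ref{thm:MinimalityOfkDiameterDiversity}; your sketch of the triangle-inequality check is the right idea, provided you handle the case where the witness set lies entirely in $A$ or entirely in $B$ separately so that adjoining an element of $C$ does not push the cardinality past $r_H$.
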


\begin{proof}
According to Theorem \ref{thm:MinimalityOfkDiameterDiversity}, $(V,\delta)$ and $(V,\delta_{\text{$r_H$-diam}})$ are both members of $\mathcal{D}_{(V,\delta,r_H)}$, and moreover, $(V,\delta_{\text{$r_H$-diam}})$ is the minimal diversity of this family. By definition of $\mathcal{D}_{(V,\delta,r_H)}$ and by the fact that $\forall S\in E_H, |S| \leq r_H$, it follows that

\begin{equation}\label{eqn:CharacterEqn1}
\sum_{S\in E_H}w_H(S)\delta(S) = \sum_{S\in E_H}w_H(S)\delta_{\text{$r_H$-diam}}(S)
\end{equation}

Now edges $U$ of $G$ may be larger than $r_H$, but in this case we use the minimality of $(V,\delta_{\text{$r_H$-diam}})$ among the family $\mathcal{D}_{(V,\delta,r_H)}$. We have:

\begin{equation}\label{eqn:CharacterEqn2}
\sum_{U\in E_G}w_G(U)\delta_{\text{$r_H$-diam}}(U) \leq \sum_{U\in E_G}w_G(U)\delta(U) 
\end{equation}

\noindent
Then (\ref{eqn:CharacterEqn1}) and (\ref{eqn:CharacterEqn2}) imply that 

\begin{equation}
\frac{\sum_{U\in E_G}w_G(U)\delta_{\text{$r_H$-diam}}(U)}{\sum_{S\in E_H}w_H(S)\delta_{\text{$r_H$-diam}}(S)}\leq \frac{\sum_{U\in E_G}w_G(U)\delta(U)}{\sum_{S\in E_H}w_H(S)\delta(S)}
\end{equation}

Therefore, the optimal diversity for the sparsest cut diversity-relaxation, $(V,\delta)$, can be assumed to be its  $r_H$-diameter diversity $(V,\delta_{\text{$r_H$-diam}})$, thus completing the proof.
\end{proof}

\begin{cor}\label{cor:HAGraph}
Let $G = (V,E_G,w_G)$ be a supply hypergraph and let $H=(V,E_H,w_H)$ be a demand graph, that is it has rank $r_H=2$. Let $(V,\delta)$ be a (pseudo) diversity attaining (an approximation to) the optimal objective value to the sparsest cut diversity-relaxation, Equation (\ref{eqn:ApproxAlgEqn2}). Let $(V,d)$ be the induced metric space of $(V,\delta)$ and let $(V,\delta_{\text{diam}}) \in \mathcal{D}_{(V,d)}$ be the diameter diversity whose induced metric space is $(V,d)$. Then $(V,\delta)$ can be assumed to be $(V,\delta_{\text{diam}})$, a diameter diversity.
\end{cor}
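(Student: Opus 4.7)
The plan is to obtain Corollary \ref{cor:HAGraph} as a direct specialization of Theorem \ref{thm:FlowCutGapDiversity} to the case $r_H = 2$, combined with Fact \ref{fact:2DiamDiv}. The work is essentially to verify that the $r_H$-diameter diversity produced by that theorem coincides with the diameter diversity $(V,\delta_{\text{diam}})$ associated with the induced metric $(V,d)$.

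First, I would invoke Theorem \ref{thm:FlowCutGapDiversity} directly. Since $H$ is a graph, every demand edge $S \in E_H$ satisfies $|S| \leq 2$, so the rank is $r_H = 2$. The theorem then says that the optimal diversity $(V,\delta)$ to the relaxation (\ref{eqn:ApproxAlgEqn2}) may be replaced by its $2$-diameter diversity $(V,\delta_{\text{$2$-diam}})$ without increasing the objective value, where $\delta_{\text{$2$-diam}}(A) = \max_{\{x,y\} \subseteq A}\delta(\{x,y\})$ for $|A|\geq 2$.

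Next, I would identify $(V,\delta_{\text{$2$-diam}})$ with $(V,\delta_{\text{diam}})$. By the definition of the induced metric, $d(x,y) = \delta(\{x,y\})$ for every $x,y \in V$, so
\begin{equation}
\delta_{\text{$2$-diam}}(A) \;=\; \max_{x,y \in A}\delta(\{x,y\}) \;=\; \max_{x,y \in A} d(x,y) \;=\; \delta_{\text{diam}}(A),
\end{equation}
which is exactly Definition \ref{defn:DiameterDiversity}. Equivalently, this is the content of Fact \ref{fact:2DiamDiv}: any $2$-diameter diversity is a diameter diversity, and the family $\mathcal{D}_{(V,\delta,2)}$ coincides with $\mathcal{D}_{(V,d)}$, so both $(V,\delta)$ and its $2$-diameter truncation live in $\mathcal{D}_{(V,d)}$, and the minimal member of that family is $(V,\delta_{\text{diam}})$.

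I do not anticipate any real obstacle here; the only thing to be careful about is the consistency of the two minimality statements (minimality within $\mathcal{D}_{(V,\delta,2)}$ from Theorem \ref{thm:MinimalityOfkDiameterDiversity}, which was used inside Theorem \ref{thm:FlowCutGapDiversity}, versus minimality within $\mathcal{D}_{(V,d)}$ from Theorem \ref{thm:ExtremalResults}). These agree because Fact \ref{fact:2DiamDiv} identifies the two families when $k=2$, so the replacement is well-defined and the objective value of the relaxation does not increase. This completes the proof.
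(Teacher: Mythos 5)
Your argument is correct and is exactly the paper's proof, which simply cites Fact \ref{fact:2DiamDiv} together with Theorem \ref{thm:FlowCutGapDiversity}; you have merely spelled out the verification that the $2$-diameter truncation of $\delta$ coincides with the diameter diversity of the induced metric. No gap, and no meaningful divergence from the paper's route.
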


\begin{proof}
This corollary follows by Fact \ref{fact:2DiamDiv} and Theorem \ref{thm:FlowCutGapDiversity}.
\end{proof}

\begin{theorem}\label{thm:GAGraph}
Let $G = (V,E_G,w_G)$ be a supply graph, that is it has rank $r_G=2$, and let $H=(V,E_H,w_H)$ be a demand hypergraph. Let $(V,\delta)$ be a (pseudo) diversity attaining (an approximation to) the optimal objective value to the sparsest cut diversity-relaxation, Equation (\ref{eqn:ApproxAlgEqn2}). Let $(V,d)$ be the induced metric space of $(V,\delta)$ and let $(V,\delta_\text{Steiner})\in\mathcal{D}_{(V,d)}$ be the Steiner diversity whose induced metric space is $(V,d)$. Then $(V,\delta)$ can be assumed to be $(V,\delta_\text{Steiner})$,  a Steiner diversity. 
\end{theorem}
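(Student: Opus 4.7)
My plan is to mirror the structure of the proof of Theorem \ref{thm:FlowCutGapDiversity}, but invoke the \emph{maximality} side of Theorem \ref{thm:ExtremalResults} rather than the minimality of the $k$-diameter diversity. Concretely, I will show that replacing $(V,\delta)$ by the Steiner diversity $(V,\delta_\text{Steiner})$ sharing its induced metric leaves the numerator of the objective unchanged and only increases the denominator, so the ratio can only improve.

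First, I would exploit the assumption $r_G = 2$: since every supply hyperedge $U \in E_G$ has $|U|=2$, the value $\delta(U)$ depends only on the induced metric $d$. Because $(V,\delta)$ and $(V,\delta_\text{Steiner})$ both lie in $\mathcal{D}_{(V,d)}$, they agree on every pair, so
\begin{equation}
\sum_{U\in E_G} w_G(U)\,\delta(U) \;=\; \sum_{U\in E_G} w_G(U)\,\delta_\text{Steiner}(U).
\end{equation}
This pins down the numerator of the objective in (\ref{eqn:ApproxAlgEqn2}) regardless of which diversity in $\mathcal{D}_{(V,d)}$ one uses.

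Next, I would handle the denominator using the extremal Theorem \ref{thm:ExtremalResults}, which states that $\delta_\text{Steiner}$ is the pointwise maximum over $\mathcal{D}_{(V,d)}$. In particular, for every demand hyperedge $S \in E_H$ (which can be large), $\delta(S) \le \delta_\text{Steiner}(S)$, so
\begin{equation}
\sum_{S\in E_H} w_H(S)\,\delta(S) \;\le\; \sum_{S\in E_H} w_H(S)\,\delta_\text{Steiner}(S).
\end{equation}
Combining these two inequalities with nonnegativity of weights, the ratio defining the sparsest cut diversity-relaxation can only decrease (or stay the same) when we pass from $\delta$ to $\delta_\text{Steiner}$, hence $(V,\delta_\text{Steiner})$ attains an objective at least as good as $(V,\delta)$, and one may assume the optimizer is a Steiner diversity.

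There is essentially no obstacle here: the entire argument is the symmetric counterpart of the proof of Theorem \ref{thm:FlowCutGapDiversity}, with the roles of supply and demand reversed and the minimality of the diameter diversity replaced by the maximality of the Steiner diversity. The only delicate point to double check is that the induced-metric-preserving replacement is legitimate for the relaxation (it is, since both diversities satisfy the three diversity axioms by Theorem \ref{thm:ExtremalResults}), and that an approximately optimal $(V,\delta)$ yields an approximately optimal $(V,\delta_\text{Steiner})$ with the same approximation factor, which is immediate from the two displayed inequalities above.
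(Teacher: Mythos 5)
Your proposal is correct and matches the paper's proof essentially step for step: fix the numerator using $r_G=2$ and the shared induced metric, bound the denominator via the maximality of the Steiner diversity in $\mathcal{D}_{(V,d)}$ from Theorem \ref{thm:ExtremalResults}, and conclude the ratio can only improve. Nothing substantive differs.
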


\begin{proof}
Let $U\in E_G$ be arbitrary. Since $r_G = 2$ then $|U| = 2$ and we let $U = \{u,v\}$. Then it follows that

\begin{equation}\label{eqn:CharacterEqn3} 
    \delta(U) = d(u,v) = \delta_\text{Steiner}(U)
\end{equation} where the  equalities follow by the fact that both $\delta,\delta_\text{Steiner})$ have induced metric $d$. From this we have that \begin{equation}\label{eqn:CharacterEqn4} 
\sum_{U\in E_G}w_G(U)\delta(U) = \sum_{U\in E_G}w_H(S)\delta_\text{Steiner}(U)
\end{equation}

By Theorem \ref{thm:ExtremalResults}, $(V,\delta_\text{Steiner})$ is the maximal diversity of the family $\mathcal{D}_{(V,d)}$ (diversities with induced metric $d$) and so it follows that

\begin{equation}\label{eqn:CharacterEqn5} 
\sum_{S\in E_H}w_H(S)\delta(S) \leq \sum_{S\in E_H} w_H(S)\delta_\text{Steiner}(S)
\end{equation}

Then (\ref{eqn:CharacterEqn4}) and (\ref{eqn:CharacterEqn5}) imply that

\begin{equation}
\frac{\sum_{U\in E_G}w_G(U)\delta_\text{Steiner}(U)}{\sum_{S\in E_H}W_H(S)\delta_\text{Steiner}(S)} \leq \frac{\sum_{U\in E_G}w_G(U)\delta(U)}{\sum_{S\in E_H}w_H(S)\delta(S)}
\end{equation}

Therefore, the optimal diversity for the sparsest cut diversity-relaxation $(V,\delta)$ can be assumed to be the Steiner diversity $(V,\delta_\text{Steiner})$. This completes the proof.
\end{proof}

\subsection{Algorithmic Implications}\label{sec:AlgorithmicImplications}

Based on this approach, we present a polynomial-time approximation algorithm for the case where the supply and demand hypergraphs are arbitrary hypergraphs with ranks $r_G$ and $r_H$, respectively.

\begin{reptheorem}{thm:GeneralAlgSparsestCut_orig}
Let $G=(V,E_G,w_G)$ be a supply hypergraph with rank $r_G$ and $H = (V,E_H,w_H)$ be a demand hypergraph with rank $r_H$. Then there is a randomized polynomial-time $O(\min\{r_G,r_H\}\log{n}\log{r_H})$-approximation algorithm for the sparsest cut of $G$ and $H$.
\end{reptheorem}

\begin{proof}
According to Theorem \ref{thm:SparsestCutAlGGeneral} there is a polynomial-time $O(\alpha_{\text{LP}}\alpha_{\text{div}}\alpha_{\text{emb}})$-approximation algorithm for the sparsest cut of $G$ and $H$, where $\alpha_{\text{LP}}$, $\alpha_{\text{div}}$, and $\alpha_{\text{emb}}$ are as defined in Theorem \ref{thm:SparsestCutAlGGeneral}.

According to Corollary \ref{cor:LPRelaxApproxAlg}, there is an $O(\log{r_H})$-approximation algorithm for the LP Relaxation (\ref{eqn:LP1}), hence $\alpha_{\text{LP}} = O(\log{r_H})$. 

We let $(V,\delta)$ be an optimal solution to the sparsest cut diversity-relaxation (\ref{eqn:ApproxAlgEqn2}). We note that $(V,\delta)$ is a hypergraph Steiner diversity corresponding to a rank $r_G$ hyperedge-weighted hypergraph, namely $(V,E_G, w)$ where $w(U) = d_U$ and $\{d_U\}_{U\in E_G}$ are from an optimal solution to LP Relaxation (\ref{eqn:LP1}). Then according to Corollary \ref{cor:HSteinerDivEmbed} there exists a randomized polynomial-time $O(r_G\log{n})$-distortion embedding of $(V,\delta)$ into the $\ell_1$ diversity.

Alternatively, according to Theorem \ref{thm:FlowCutGapDiversity} $(V,\delta)$ is a $r_H$-diameter diversity. Then according to Corollary \ref{cor:kdiamdivembed} there is a randomized polynomial-time $O(r_H\log{n})$-distortion embedding of $(V,\delta)$ into the $\ell_1$ diversity. Thus, we can choose whether to embed $(V,\delta)$ into $\ell_1$ as a hypergraph Steiner diversity or a $r_H$-diameter diversity based on whether $r_G$ or $r_H$ is smaller. Therefore, $\alpha_{\text{emb}} = O(\min\{r_G,r_H\}\log{n})$. 

Since the two embeddings, Corollary \ref{cor:HSteinerDivEmbed} and \ref{cor:kdiamdivembed} only require the induced metric space of $(V,\delta)$ then we only need to compute $\delta(A)$ for $A\in\mathcal{P}(V)$ where $|A| = 2$. Thus, by Corollary \ref{cor:HSteinerDivApproxAlg} we have that $\alpha_{\text{emb}} = O(\log{2}) = O(1)$. 

Hence, we have a randomized polynomial-time $O(\min\{r_G,r_H\}\log{n}\log{r_H})$-approximation algorithm for the sparsest cut problem in $G$ and $H$. This completes the proof
\end{proof}

This is the first randomized polynomial-time approximation algorithm for the setting where $G$ and $H$ are arbitrary hypergraphs. An immediate corollary of this result is an $O(\log{n})$-approximation algorithm for the case where the demand hypergraph is simply a graph

\begin{cor}
Let $G=(V,E_G,w_G)$ be a supply hypergraph and $H = (V,E_H,w_H)$ be a demand graph, that is it has rank $r_H = 2$. Then there is a randomized polynomial-time $O(\log{n})$-approximation algorithm for the sparsest cut of $G$ and $H$.
\end{cor}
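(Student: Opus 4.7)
The plan is to invoke Theorem \ref{thm:GeneralAlgSparsestCut_orig} in the special regime $r_H = 2$. That theorem already supplies a randomized polynomial-time $O(\min\{r_G, r_H\}\log n\log r_H)$-approximation for arbitrary hypergraph supply and demand, so substituting $r_H = 2$ immediately gives $\min\{r_G,2\} \le 2 = O(1)$ and $\log r_H = O(1)$, collapsing the approximation factor to $O(\log n)$. In this sense the corollary is essentially a one-line consequence of the main theorem.

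To expose which ingredient actually drives the bound, I would walk through the three factors in Theorem \ref{thm:SparsestCutAlGGeneral} under the assumption $r_H = 2$. For the embedding factor $\alpha_{\text{emb}}$, note that by Corollary \ref{cor:HAGraph} the optimal diversity produced by the LP relaxation (\ref{eqn:LP1}) may be assumed to be a diameter diversity, and then Theorem \ref{thm:DiamDivEmbeddingNew_orig} supplies a randomized polynomial-time embedding of such a diversity into the $\ell_1$ diversity with distortion $O(\log n)$. For $\alpha_{\text{LP}}$, the approximate separation oracle for hypergraph Steiner constraints from Chapter \ref{ch:HSP} incurs an $O(\log r_H) = O(1)$ overhead when $r_H = 2$ (indeed, with two terminals the separation reduces to an exact shortest-hyperpath computation in $G$). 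The diversity-extraction factor $\alpha_{\text{div}}$ is also $O(1)$, since the hypergraph Steiner diversity of equation (\ref{eqn:ApproxAlgEqn5}) is read off directly from the LP solution. Multiplying the three factors via Theorem \ref{thm:SparsestCutAlGGeneral} delivers the claimed $O(\log n)$-approximation.

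There is no real obstacle: the only point of care is to take the diameter-diversity (i.e.\ $r_H$-diameter) route to the $\ell_1$ embedding rather than the hypergraph Steiner route of Corollary \ref{cor:HSteinerDivEmbed}, since the latter would introduce an undesired factor of $r_G$ which may be large. Using Corollary \ref{cor:HAGraph} and Theorem \ref{thm:DiamDivEmbeddingNew_orig} in tandem ensures the embedding distortion depends only on $n$, which is precisely what makes the $\min\{r_G, r_H\}$ factor harmless here.
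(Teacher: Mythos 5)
Your first paragraph is exactly the paper's proof: the corollary follows immediately by plugging $r_H = 2$ into Theorem~\ref{thm:GeneralAlgSparsestCut_orig}, which collapses $\min\{r_G,r_H\}$ and $\log r_H$ to $O(1)$. The remaining paragraphs spell out the alternative derivation via Corollary~\ref{cor:HAGraph} and the diameter-diversity embedding of Theorem~\ref{thm:DiamDivEmbeddingNew_orig}, which the paper also mentions (but does not carry out) in the remark following the corollary; both routes are sound and consistent with the text.
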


\begin{proof}
This corollary follows immediately by Theorem \ref{thm:GeneralAlgSparsestCut_orig}  and the fact that $r_H = 2$. 
\end{proof}

\vspace*{0.2cm}
For this setting where $G$ is a hypergraph and $H$ is a graph, there is a randomized polynomial-time $O(\sqrt{\log{r_G}\log{n}}\log\log{n})$-approximation algorithm due to Louis \cite{louis2015HgraphMarkov} which is based on an SDP relaxation. However, among LP-based approaches our algorithm is the first to attain an $O(\log{n})$-approximation when $H$ is an arbitrary graph. Specifically, Kapralov et al. \cite{kapralov2021towards} attain an $O(\log{n})$-approximation when $H$ has uniform demands.
We note that this corollary can be proven alternatively by our characterization of the optimal diversity to the sparsest cut diversity-relaxation (\ref{eqn:ApproxAlgEqn2}) being a diameter diversity for the case when the demand hypergraph is a graph, Corollary \ref{cor:HAGraph}. 

Our characterization Theorem \ref{thm:GAGraph} also underpins a randomized polynomial-time $O(\log{n})$-approximation algorithm for the case where the supply hypergraph is a graph. 

\begin{reptheorem}{ApproxAlgGaGraph}
Let $G=(V,E_G,w_G)$ be a supply graph, that is it has rank $r_G = 2$, and $H = (V,E_H,w_H)$ be a demand hypergraph. Then there is a randomized polynomial-time $O(\log{n})$-approximation algorithm for the sparsest cut of $G$ and $H$.
\end{reptheorem}

\begin{proof}
According to Theorem \ref{thm:SparsestCutAlGGeneral} there is a polynomial-time $O(\alpha_{\text{LP}}\alpha_{\text{div}}\alpha_{\text{emb}})$-approximation algorithm for the sparsest cut of $G$ and $H$, where $\alpha_{\text{LP}}$ $\alpha_{\text{div}}$, and $\alpha_{\text{emb}}$ are as defined in Theorem \ref{thm:SparsestCutAlGGeneral}.

We recall that for each $S\in E_H$ the LP Relaxation (\ref{eqn:LP1}) may have exponentially many constraints of the form \begin{equation}
    \{\sum_{U\in t}d_U \geq y_S\}_{t\in \mathcal{T}_{(G,S)}}
\end{equation} Approximately separating over these constraints amounts to approximating the minimum cost Steiner tree for the nodes $S$. Therefore, we can approximately separate over these constraints in polynomial time using a polynomial-time $O(1)$-approximation algorithm for the minimum-cost Steiner tree problem \cite{kou1981fast,takahashi1980approximate, wu1986faster, byrka2010improved}. Hence $\alpha_{\text{LP}}=O(1)$.

Let $(V,\delta)$ be the (approximately) optimal diversity of the sparsest cut diversity-relaxation. Then according to Theorem \ref{thm:GAGraph} $(V,\delta)$ is a Steiner diversity, and moreover, we can compute $\delta(A)$ for any $A\subseteq V$ in polynomial time up to a factor of $O(1)$, again, by a polynomial-time $O(1)$-approximation algorithm for the Steiner tree problem. Therefore, $\alpha_{\text{div}} = O(1)$. 

Finally, according to Theorem \ref{thm:SteinerDivEmbedding}, there is a randomized polynomial-time $O(\log{n})$-distortion embedding of $(V,\delta)$, a Steiner diversity, into $\ell_1$. Hence $\alpha_{\text{emb}} = O(\log{n})$. 

Hence, we have a randomized polynomial-time $O(\log{n})$-approximation algorithm for the sparsest cut problem in $G$ and $H$. This completes the proof.

\end{proof}

The previous state-of-the-art algorithm for the setting where $G$ is a graph and $H$ is a hypergraph is a polynomial-time $O(\log{n}\log{(|E_H|r_H)}) $-approximation algorithm due to Plotkin et al. \cite{plotkin1993bounds}. Our $O(\log{n})$-approximation is a notable improvement due to the fact that $|E_H|$ may be exponentially large.

In the next section we establish the remaining diversity embedding theorems needed to obtain the preceding results.

\section{Low Distortion Embeddings of Diversities into $\ell_1$ }
\label{ch:embeddings}

The framework from Chapter \ref{ch:ApproxAlg} establishes that approximation for sparsest cut is directly linked to the distortion of embedding a diversity into the $\ell_1$ diversity.  Here we discuss a general investigation originally posed  by Bryant and Tupper \cite{BTopen}.

\begin{prob}\label{prob:BTorig}
Let $(X,\delta)$ be an arbitrary diversity where $|X| = n$. What is the minimum distortion with which $(X,\delta)$ can be embedded into an $\ell_1$ diversity?
\end{prob}

We strengthen their question about existential bounds by additionally asking for a tractable bound, which implicitly requires  the dimension $m$ of the $\ell_1$ diversity $(\mathbb{R}^m,\delta_1)$ to be bounded by a polynomial factor. We formalize this as follows. 

\begin{prob}
Let $(X,\delta)$ be an arbitrary diversity where $|X| = n$. What is the minimum distortion with which $(X,\delta)$ can be embedded into an $\ell_1$ diversity in polynomial-time. 
\end{prob}

A  polynomial-time $O(n)$ distortion embedding is achievable, provided that the induced metric of the diversity can be queried in polynomial-time. 

\begin{theorem}[Theorem 1 in \cite{BTopen}]\label{thm:nDistorEmbed}
Let $(X,\delta)$ be an arbitrary diversity where $|X| = n$. Then there is an embedding of $(X,\delta)$ into the $\ell_1$ diversity, $(\mathbb{R}^n,\delta_1)$, with distortion $n$.
\end{theorem}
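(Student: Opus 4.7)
The plan is to use a Fr\'echet-style embedding based on the metric $d(x,y) = \delta(\{x,y\})$ induced by the diversity. Enumerate $X = \{x_1, \ldots, x_n\}$ and define $f : X \to \mathbb{R}^n$ coordinatewise by $f(y)_i = d(x_i, y)$. Each coordinate is a distance to a fixed reference point in $X$ and requires a single oracle query of $\delta$ on a pair, so $f$ is polynomial-time computable whenever the induced metric is polynomial-time queryable.

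For the upper bound I would apply the triangle inequality on the induced metric together with the extremal bound $\delta_{\text{diam}} \le \delta$ from Theorem~\ref{thm:ExtremalResults}: for every $i$ and every $u,v \in A$, $|d(x_i,u) - d(x_i,v)| \le d(u,v) \le \delta_{\text{diam}}(A) \le \delta(A)$, and summing the $n$ coordinate contributions yields $\delta_1(f(A)) \le n\,\delta(A)$.

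The main step is the lower bound, where the naive one-coordinate argument is too weak. Picking a diametrical pair $p,q \in A$ and using the coordinate $i$ with $x_i = p$ yields only $\delta_1(f(A)) \ge \delta_{\text{diam}}(A)$, which combined with the Steiner upper bound $\delta(A) \le \delta_{\text{Steiner}}(A) \le (|A|-1)\,\delta_{\text{diam}}(A)$ (Theorem~\ref{thm:ExtremalResults} together with the observation that a spanning tree on $A$ inside $(X,d)$ realizes a Steiner structure of weight at most $(|A|-1)\,\delta_{\text{diam}}(A)$) gives only $O(n^2)$ distortion. The fix is to average over all coordinates indexed by elements of $A$: for each $x_i \in A$, the $i$-th coordinate contribution equals $\max_{v \in A} d(x_i,v)$ (the minimum is zero, attained at $v=x_i$), and this is at least $\delta_{\text{diam}}(A)/2$ by applying the triangle inequality to the diametrical pair $p,q$. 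Summing over the $|A|$ coordinates indexed by $A$ gives $\delta_1(f(A)) \ge |A|\,\delta_{\text{diam}}(A)/2 \ge |A|\,\delta(A)/(2(|A|-1)) \ge \delta(A)/2$.

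Combining with the upper bound yields $\delta(A)/2 \le \delta_1(f(A)) \le n\,\delta(A)$, i.e., distortion $O(n)$ as claimed, which after rescaling $f$ by a constant is stated as distortion $n$. The principal obstacle is exactly the factor-of-two slack in the half-diameter lower bound; a sharper choice of reference points (for example, always including a diametrical pair among the coordinates indexed by $A$) can be used to tighten constants but is not needed for the $O(n)$ bound.
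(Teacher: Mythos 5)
Your embedding $f(x)=(d(x_1,x),\ldots,d(x_n,x))$ and your upper bound are the same as the paper's. The lower bound is where you diverge, and where the argument is weaker than it needs to be. The paper applies the diversity triangle inequality directly: fixing any $a_0\in A$ and peeling off one element at a time with $C=\{a_0\}$ gives
\begin{equation}
\delta(A)\ \le\ \sum_{x\in A} d(x,a_0)\ \le\ \sum_{x\in A}\max_{a,b\in A}\bigl|d(x,a)-d(x,b)\bigr|\ \le\ \delta_1(f(A)),
\end{equation}
where the middle step uses $\max_{b\in A}(-d(x,b))=0$ (attained at $b=x$). This yields $c_1=1$, $c_2=n$, hence distortion exactly $n$. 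Your route instead bounds each coordinate indexed by $A$ below by $\delta_{\text{diam}}(A)/2$ and then compares $\delta$ against $\delta_{\text{Steiner}}\le(|A|-1)\delta_{\text{diam}}$; each of these steps is correct, but the half-diameter bound introduces a genuine factor of $2$, so you land at $\delta(A)/2\le\delta_1(f(A))\le n\delta(A)$, i.e., distortion $2n$.

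The sentence claiming this can ``after rescaling $f$ by a constant'' be stated as distortion $n$ is wrong: distortion $c_1c_2$ is invariant under scaling $f\mapsto\lambda f$, since both $c_1$ and $c_2$ scale by $1/\lambda$ and $\lambda$ respectively in offsetting ways. To actually obtain distortion $n$ you would need to remove the factor-of-two slack, and the cleanest way to do that is precisely the chained triangle inequality $\delta(A)\le\sum_{x\in A}d(x,a_0)$ that the paper uses, rather than passing through the diameter/Steiner sandwich. Your observation that one could ``always include a diametrical pair among the coordinates'' won't help either, because the coordinate set is fixed in advance as all of $X$ and must work for every $A$ simultaneously. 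For the downstream uses in this paper the distinction between $n$ and $2n$ is immaterial, but as a proof of the theorem as stated your argument falls short by a constant.
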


As noted in \cite{BTopen}, there is an $\Omega(\log{n})$ lower bound for distortion of $\ell_1$ embeddings. However, they ask whether there exist lower distortion (specifically $O(\sqrt{n})$) embeddings based on a diversity's induced metric. In Section~\ref{sec:Inapprox} we show this is not possible, at least not via a tractable embedding. 
We close the section with a positive result. We give an improved  diameter diversity embedding result. Its extension to $k$-diameter diversities is  needed in  the proof of Theorem~\ref{thm:GeneralAlgSparsestCut_orig}.

\subsection{An Inapproximability Result for Diversity Embeddings}\label{sec:Inapprox}

In this section we provide an inapproximability result for diversity embeddings into $\ell_1$. In short, our result states that there are diversities which cannot be embedded into $\ell_1$ with distortion smaller than $\Omega(n)$ by a polynomial-time algorithm that only queries the induced metric of the diversity, unless P=NP. A consequence of this result is that the $O(n)$-distortion $\ell_1$ embedding, Theorem \ref{thm:nDistorEmbed}, is asymptotically optimal among algorithms that query only the induced metric of a diversity. We start by stating a conjecture of Bryant and Tupper \cite{BTopen} whose algorithmic refinement is disproved by our inapproximability result.

\begin{conj}[Restatement of Bryant and Tupper's Conjecture in \cite{BTopen}]\label{conj:BTconjecture}
There exists an $O(\sqrt{n})$-distortion diversity embedding into $\ell_1$ that is based solely on the induced metric of a diversity.
\end{conj}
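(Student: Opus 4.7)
The plan is to construct an embedding $f:X\to\mathbb{R}^m$ that uses only the induced metric $d(x,y)=\delta(\{x,y\})$ of the input diversity $(X,\delta)$, and to analyze its distortion on arbitrary subsets $A\subseteq X$ against $\delta$. The natural candidate is a Fr\'{e}chet-type embedding whose coordinates are indexed by subsets $S_1,\ldots,S_m\subseteq X$, given by $f_{S_i}(x)=d(x,S_i)$. Inspired by Bourgain's construction, I would sample $O(\log n)$ density scales and, at scale $j$, draw $O(\log n)$ independent random subsets in which each point of $X$ is included with probability $2^{-j}$, then normalize each scale's contribution appropriately.

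The contraction inequality $\delta_1(f(A))\leq c_2\,\delta(A)$ is the easier direction. Since $\delta(A)\geq\delta_{\text{diam}}(A)$ by Theorem~\ref{thm:ExtremalResults}, it suffices to bound each coordinate's contribution $\max_{a,b\in A}|d(a,S_i)-d(b,S_i)|$ by $\delta_{\text{diam}}(A)$, which follows from the triangle inequality for $d$. Summing over the $O(\log n)$ sampled scales then yields $c_2 = O(\log n)$, roughly matching the loss in Bourgain's metric embedding.

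The expansion inequality $\delta(A)\leq c_1\,\delta_1(f(A))$ is the crux. Because $\delta(A)$ can grow as large as $\delta_{\text{Steiner}}(A)$ while $f$ only ``sees'' the diameter of $A$, the embedding must amplify $\delta_1(f(A))$ by aggregating contributions from many coordinates. The idea to push would be that for a random subset $S$ of calibrated density, the Fr\'{e}chet contributions $|d(a,S)-d(b,S)|$ aggregate across the $O(\log n)$ sampled scales so that the resulting $\delta_1(f(A))$ grows by an $\Omega(\sqrt{n})$ factor over $\delta_{\text{diam}}(A)$ in the worst case. Combining this amplification with the extremal inequality $\delta(A)\leq\delta_{\text{Steiner}}(A)$ and carefully balancing subset sizes, one would hope to absorb the Steiner-to-diameter gap into a $\sqrt{n}$ factor, producing overall distortion $O(\sqrt{n})$.

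The main obstacle is informational: the induced metric $d$ does not determine $\delta$ on subsets of size at least three, so the same $f$ must simultaneously serve every member of $\mathcal{D}_{(X,d)}$, including both the minimal $\delta_{\text{diam}}$ and the maximal $\delta_{\text{Steiner}}$. Since the Steiner-to-diameter ratio can reach $|A|-1=n-1$ on extremal metrics (for instance, the Steiner diversity of a uniform star, with $A$ the full set of leaves), any valid embedding must somehow average these two extremes down to a $\sqrt{n}$ gap; quantifying this averaging is the key technical difficulty and is precisely where I expect the analysis to be fragile. Furthermore, any affirmative proof of the conjecture would necessarily be \emph{existential} rather than algorithmic, since Theorem~\ref{thm:InapproxResult_orig} of this paper rules out polynomial-time embeddings with distortion better than $O(n^{1-\epsilon})$ that query only the induced metric; a plausible route is therefore a non-constructive, probabilistic argument that exhibits favourable random subsets $\{S_i\}$ without explaining how to compute them.
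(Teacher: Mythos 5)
You should first be clear that the statement you are proving is a \emph{conjecture}: the paper does not prove it and takes no position on its truth beyond refuting its algorithmic refinement (Conjecture \ref{conj:algBTconjecture}) via Theorem \ref{thm:InapproxResult_orig} and Corollary \ref{InapproxResultSimplified}. So there is no proof in the paper to compare against, and your proposal would have to stand as new mathematics; as written it does not. The decisive gap is the ``amplification'' step, which is not only unproven but false for the construction you describe. In the Bourgain/LLR-style embedding of Theorem \ref{thm:LLR2} the coordinates are scaled Fr\'echet maps whose total weight is $O(1)$, and by Proposition \ref{prop:Frechet} each coordinate's spread over a set $A$ is at most $\delta_{\text{diam}}(A)$; hence $\delta_1(f(A)) = O(\delta_{\text{diam}}(A))$ --- this is exactly the computation in the paper's proof of Theorem \ref{thm:DiamDivEmbeddingNew_orig}, and it leaves no room for an $\Omega(\sqrt{n})$ boost. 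Consequently, for a diversity whose values are Steiner-like, e.g.\ the Steiner diversity of the uniform metric where $\delta_{\text{Steiner}}(X)=n-1$ while $\delta_{\text{diam}}(X)=1$ (Theorem \ref{thm:ExtremalResults}), the expansion direction alone forces distortion $\Omega(n/\operatorname{polylog} n)$, not $O(\sqrt{n})$. What the conjecture really demands is that the single metric-determined quantity $\delta_1(f(A))$ land near the geometric mean $\sqrt{\delta_{\text{diam}}(A)\,\delta_{\text{Steiner}}(A)}$ \emph{simultaneously for every} $A$ (this is also why $\Omega(\sqrt{n})$ is unavoidable for metric-only embeddings and why $O(\sqrt n)$ would be tight); a global rescaling or ``balancing of subset sizes'' cannot accomplish this, since distortion is invariant under scaling, and your sketch offers no mechanism for the required per-set calibration.

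There is also an internal tension with the paper's own hardness result that undermines your proposed route. Sampling random subsets at $O(\log n)$ density scales and evaluating $d(x,S_i)$ is a randomized polynomial-time procedure that queries only the induced metric; if it achieved distortion $O(\sqrt{n})$ with high probability, feeding it into the reduction used to prove Theorem \ref{thm:InapproxResult_orig} (via the independent set diversity and Theorem \ref{thm:InapproxIS}) would yield a randomized polynomial-time $O(\sqrt{n})$-approximation for the independent set problem, which is ruled out under standard complexity assumptions. You correctly note that any affirmative resolution must be non-algorithmic, but the construction you then propose is efficiently samplable, so it cannot be the vehicle for a proof; a genuine attempt would have to produce embeddings that are provably not computable (or not verifiable) from the metric in polynomial time, for instance via a non-constructive choice of cut weights realizing the geometric-mean profile, and nothing in the current sketch addresses that. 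As it stands, the conjecture remains open, the best metric-only upper bound is the $O(n)$ of Theorem \ref{thm:nDistorEmbed}, and your proposal neither closes the gap nor identifies a plausible new mechanism for doing so.
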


Bryant and Tupper conjecture the existence of a diversity embedding that attains low-distortion, specifically $O(\sqrt{n})$, and that this embedding utilizes only a diversity's induced metric. However, Bryant and Tupper do not insist on any algorithmic requirements, notably time complexity. We provide the following algorithmic refinement of their conjecture. 

\begin{conj}[Algorithmic Refinement of Conjecture \ref{conj:BTconjecture}]\label{conj:algBTconjecture}
There exists a polynomial-time $O(\sqrt{n})$-distortion diversity embedding into $\ell_1$ that is based solely on the induced metric of a diversity.
\end{conj}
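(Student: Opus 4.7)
The plan is to try to upgrade Bourgain's classical metric embedding into a diversity embedding of the required distortion, and to control distortion separately on small and on large subsets via the extremal bounds of Theorem~\ref{thm:ExtremalResults}. Concretely, given a diversity $(X,\delta)$ with induced metric $(X,d)$ and $|X|=n$, I first apply Bourgain's embedding $f:X\to\mathbb{R}^{O(\log^2 n)}$ so that $f$ is an $O(\log n)$-distortion embedding of $(X,d)$ into $(\ell_1, \|\cdot\|_1)$, which is computable in polynomial time and depends only on the induced metric, as required. The goal is then to show that $f$, viewed as a map into the $\ell_1$ diversity, has distortion $O(\sqrt n)$.

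The natural distortion analysis proceeds by combining two inequalities. For any $A\subseteq X$, the $\ell_1$ diversity $\delta_1(f(A))$ equals the minimum Steiner tree weight of $f(A)$ in $\ell_1$, hence lies between $\mathrm{diam}_{\ell_1}(f(A))$ and $(|A|-1)\cdot\mathrm{diam}_{\ell_1}(f(A))$. Combined with Bourgain, this sandwiches $\delta_1(f(A))$ between $\Omega\bigl(\tfrac{1}{\log n}\bigr)\delta_{\mathrm{diam}}(A)$ and $O(|A|\log n)\delta_{\mathrm{diam}}(A)$, while Theorem~\ref{thm:ExtremalResults} sandwiches $\delta(A)$ between $\delta_{\mathrm{diam}}(A)$ and $\delta_{\mathrm{Steiner}}(A)\le(|A|-1)\delta_{\mathrm{diam}}(A)$. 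Plugging these in gives only an $O(n\log n)$-distortion bound, so a refinement is needed. I would balance the two regimes by treating subsets of size $|A|\le\sqrt n$ with the above analysis (giving $O(\sqrt n\log n)$ distortion directly) and subsets with $|A|>\sqrt n$ by augmenting $f$ with additional coordinates derived from the metric: for a polynomial-size collection of random $\sqrt n$-sized landmark sets $L\subseteq X$, append the coordinate $x\mapsto d(x,L)$. A covering/net argument in $(X,d)$ should, for large $A$, produce a pair of landmark coordinates whose spread across $A$ is comparable to the $\sqrt n$-diameter diversity $\delta_{\sqrt n\text{-diam}}(A)$ introduced in Definition~\ref{defn:kdiamdiv}; an iterated triangle inequality of the form $\delta(A)\le\lceil|A|/\sqrt n\rceil\cdot\delta_{\sqrt n\text{-diam}}(A)\le\sqrt n\cdot\delta_{\sqrt n\text{-diam}}(A)$ would then close the gap. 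The embedding used in the proof is the concatenation of the Bourgain map and these landmark coordinates, with an overall rescaling so that the two regimes contribute comparably.

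The hard part will be the large-set regime. Because we are restricted to querying only the induced metric $d$, all we know about $\delta(A)$ for $|A|>\sqrt n$ is $\delta_{\mathrm{diam}}(A)\le \delta(A)\le\delta_{\mathrm{Steiner}}(A)$, and on the metric side $\delta_{\mathrm{Steiner}}(A)$ can exceed $\delta_{\mathrm{diam}}(A)$ by a factor of $\Omega(|A|)$. To attain $O(\sqrt n)$ rather than $O(n)$ distortion, one must somehow certify from $d$ alone that $\delta(A)$ is close to $\delta_{\mathrm{Steiner}}(A)$ whenever $\ell_1$-Steiner weight is high, or else produce additional $\ell_1$ coordinates whose spread matches $\delta_{\mathrm{Steiner}}(A)$ within a $\sqrt n$ factor. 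This appears to require a polynomial-time construction of $\ell_1$ coordinates that simultaneously approximate metric Steiner weights on every subset, which is essentially a metric Steiner tree preservation statement and in general seems very difficult. Indeed, the next section of the paper (Theorem~\ref{thm:InapproxResult_orig}) indicates that this obstacle cannot be overcome without violating $\mathrm{P}\ne\mathrm{NP}$, so the proof plan above is expected to stall exactly at this landmark-coordinate step: while small sets can plausibly be handled, the large-set regime likely admits no metric-only polynomial-time certificate with distortion better than $\Omega(n^{1-\epsilon})$.
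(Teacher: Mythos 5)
You were asked to prove a statement that the paper does not prove --- it refutes it. Conjecture~\ref{conj:algBTconjecture} is stated only as an algorithmic refinement of Bryant and Tupper's Conjecture~\ref{conj:BTconjecture}, and the paper's contribution is precisely to disprove it: Theorem~\ref{thm:InapproxResult_orig} (with $p=0$, giving Corollary~\ref{InapproxResultSimplified}) shows that no polynomial-time embedding based solely on the induced metric can achieve distortion $O(n^{1-\epsilon})$ for any $\epsilon>0$ unless P$=$NP, and in particular $O(\sqrt n)$ is unattainable. Your proposal's own conclusion --- that the plan must stall at the large-set regime and that this obstruction is exactly what Theorem~\ref{thm:InapproxResult_orig} formalizes --- is therefore the correct reading of the situation; there is no proof to complete, and your self-diagnosis of where any metric-only construction breaks (certifying from $d$ alone whether $\delta(A)$ sits near $\delta_{\text{diam}}(A)$ or near $\delta_{\text{Steiner}}(A)$ on large sets) is essentially the same gap that the paper's reduction from independent set exploits: the independent set diversity has an induced metric that is (essentially) trivial, yet $\delta_{IS}(V)=\mathrm{ISP}(G)$, so any low-distortion metric-only embedding would approximate independent set.

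Two smaller corrections to your write-up. First, $\delta_1(f(A))$ does not equal the minimum Steiner tree weight of $f(A)$ in $\ell_1$; by Definition~\ref{defn:ell1Diversity} it is the sum of coordinatewise spreads, which by Theorem~\ref{thm:ExtremalResults} merely lies between the $\ell_1$ diameter and the $\ell_1$ Steiner weight of $f(A)$ --- your sandwich bounds survive, but the identification is wrong. Second, your closing remark that ``small sets can plausibly be handled'' understates the obstruction: Theorem~\ref{thm:InapproxResult_orig} rules out distortion $O(n^{1-\epsilon})$ even for algorithms allowed to query $\delta$ on sets of cardinality $O(\log^p n)$ for any fixed $p$, so enriching the metric with any polylogarithmic-size queries does not rescue the landmark-coordinate step either.
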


We refute this refinement with the following inapproximability result.

\begin{reptheorem}{thm:InapproxResult_orig}
For any $p \geq 0$ and for any $\epsilon > 0$, there does not exist a polynomial-time diversity $\ell_1$ embedding that queries a diversity on sets of cardinality at most $O(\log^p{n})$ and achieves a distortion of $O(n^{1-\epsilon})$, unless P=NP.
\end{reptheorem}

This theorem statement is quite cumbersome and so we provide the following corollary which more clearly disproves Conjecture \ref{conj:algBTconjecture}.

\begin{cor}\label{InapproxResultSimplified}
For any $\epsilon >0$, there does not exist a polynomial-time diversity $\ell_1$ embedding that is based solely on the induced metric of a diversity with a distortion of $O(n^{1-\epsilon})$. 
\end{cor}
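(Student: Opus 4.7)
My plan is to derive Corollary \ref{InapproxResultSimplified} as an immediate specialization of Theorem \ref{thm:InapproxResult_orig}. The only conceptual step is translating the phrase \emph{based solely on the induced metric of a diversity} into the query-complexity language of the theorem. By definition, the induced metric of a diversity $(X,\delta)$ is obtained by evaluating $\delta$ on two-element subsets, namely $d(u,v) = \delta(\{u,v\})$. An embedding algorithm that uses only the induced metric is therefore one whose access to $\delta$ is restricted to queries on subsets of cardinality exactly $2$.

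Since $2 \le \log^p n$ for every $p \ge 0$ and every sufficiently large $n$, any polynomial-time embedding of this form lies within the class of embeddings ruled out by Theorem \ref{thm:InapproxResult_orig} (one can take $p = 0$). Invoking the theorem with the same $\epsilon > 0$ immediately gives that no such polynomial-time embedding can achieve distortion $O(n^{1-\epsilon})$ unless $\mathrm{P} = \mathrm{NP}$, which is precisely the statement of the corollary. Thus the proof is a one-line reduction and requires no additional machinery.

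The real obstacle is absorbed into the proof of Theorem \ref{thm:InapproxResult_orig} itself, not into the corollary. I would expect that proof to proceed by a reduction from an NP-hard decision problem: one constructs a family of pseudo-diversities whose restrictions to subsets of cardinality at most $O(\log^p n)$ are polynomial-time indistinguishable across yes- and no-instances, yet whose values on larger sets encode the answer to the NP-hard instance in such a way that any $\ell_1$ embedding with distortion better than $O(n^{1-\epsilon})$ would reveal which side of the reduction one is on. The subadditive-to-diversity construction of Lemma \ref{lemma:subadditive} is a natural tool for promoting a carefully engineered nonnegative, monotone, subadditive set function into a bona fide pseudo-diversity, and the hypergraph Steiner diversity of Definition \ref{defn:HSteinerDiversity} is a plausible combinatorial vehicle for the hardness source. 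But none of this machinery is needed to pass from the theorem to the corollary: the corollary follows at once from the observation above about the cardinality of queried sets.
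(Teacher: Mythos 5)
Your proof matches the paper's: one specializes Theorem~\ref{thm:InapproxResult_orig} to $p=0$, observing that an embedding that uses only the induced metric queries $\delta$ on cardinality-$2$ sets, which have cardinality $O(1)=O(\log^{0}n)$. One small slip worth noting is that the inequality $2\le\log^{p}n$ fails at $p=0$, but this is harmless because the theorem's bound is the big-$O$ quantity $O(\log^{p}n)$ and $2=O(1)$, so the parenthetical ``take $p=0$'' is the argument that actually carries the proof.
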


Therefore, Conjecture \ref{conj:algBTconjecture} is disproved. An interesting observation is that existing diversity embeddings into $\ell_1$ are both computable in polynomial-time and are computed solely using the induced metric of a diversity. Notably, these include Bryant and Tupper's $O(n)$-distortion embedding of an arbitrary diversity into $\ell_1$, Theorem \ref{thm:nDistorEmbed}, and the two $O(\log{n})$-distortion embeddings of the diameter and Steiner diversities into $\ell_1$, Theorems \ref{thm:SteinerDivEmbedding} and \ref{thm:DiamDivEmbeddingNew_orig}. This naturally posits the observation that if one were to improve upon the $O(n)$-distortion of embedding an arbitrary diversity into $\ell_1$, say for a hypergraph Steiner diversity, one must construct an algorithm that utilizes the value of the diversity on sets of arbitrary size. 

The proof of Theorem \ref{thm:InapproxResult_orig} rests upon a reduction from the notorious independent set problem to the problem of embedding a diversity into $\ell_1$. In Section \ref{Sec:IndependentSet} we introduce the independent set problem, state an inapproximability result for it, and define the independent set diversity. In Section \ref{sec:omitted} we give the proof of Theorem \ref{thm:InapproxResult_orig}. 

\subsubsection{The Independent Set Diversity}\label{Sec:IndependentSet}

In this section we introduce the independent set problem, state an inapproximability result for it, and conclude with defining the independent set diversity.

\begin{defn}[Independent Set]\label{defn:IndependentSetDiversity}
Given a graph $G=(V,E)$, a subset of nodes $S\subseteq V$ is an independent set of $G$ if for all $u,v\in S$ there does not exist an edge $(u,v)\in E$.
\end{defn}

Thus, the independent set problem is simply the problem of computing the maximum cardinality of an independent set of a graph.

\begin{defn}[Independent Set Problem]
Given a graph $G=(V,E)$, the independent set problem, ISP$(G)$, is defined as
\begin{equation}
   \text{ISP}(G) = \max \{|S|: \text{ $S$ is an independent set of $G$}\}
\end{equation}
\end{defn}

For any $\epsilon >0$, the independent set problem is inapproximable up to a factor of $O(n^{1-\epsilon})$, unless $P=NP$ \cite{arora1998proof, hastad1996clique}.

\begin{theorem}[Inapproximability of the Independent Set Problem]\label{thm:InapproxIS}
For any $\epsilon > 0 $, there does not exist a polynomial-time approximation algorithm for the independent set problem with an approximation factor smaller than $O(n^{1-\epsilon})$, unless P=NP.
\end{theorem}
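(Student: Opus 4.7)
The plan is to prove this inapproximability via the PCP theorem and an FGLSS-style reduction from probabilistic verifiers to graphs whose independence number encodes the maximum acceptance probability. The two landmark ingredients are the PCP characterization of NP due to Arora, Lund, Motwani, Safra, Sudan, and Szegedy, and H\aa stad's sharp analysis of verifiers with vanishing ``free-bit'' complexity.

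First I would invoke the PCP theorem: every language $L \in \text{NP}$ admits a polynomial-time verifier $V$ using $r = O(\log n)$ random bits, reading $q = O(1)$ proof positions, with perfect completeness ($x \in L \Rightarrow \exists$ proof $\pi$ making $V$ accept with probability $1$) and soundness $1/2$ ($x \notin L \Rightarrow$ every proof is accepted with probability at most $1/2$). Next, I would apply the FGLSS reduction: form a graph $G_x$ whose vertices are pairs $(\rho, a)$, where $\rho$ is a random string and $a$ is an assignment to the $q$ proof positions queried on $\rho$ which causes $V$ to accept; join two vertices by an edge if and only if they assign different values to some shared queried position. An independent set in $G_x$ corresponds to a collection of random strings on which a single consistent proof makes $V$ accept, so $\alpha(G_x) / 2^{r}$ equals the optimum acceptance probability. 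Since $|V(G_x)| = \text{poly}(n)$, this already yields a constant-factor inapproximability gap for the independent set problem.

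To amplify the gap to $n^{1-\epsilon}$ I would combine Raz's parallel repetition theorem with H\aa stad's long-code verifier. Starting from a two-prover one-round Label-Cover instance obtained from the PCP theorem, parallel repetition drives the soundness error to $2^{-\Omega(r)}$, while a long-code inner verifier, analyzed via Fourier analysis over $\mathbb{F}_2^{N}$, achieves ``free-bit complexity'' $f \to 0$. Plugging this stronger verifier into the FGLSS construction produces a graph whose independence-number gap is roughly $n^{1/(1+f) - o(1)}$, which for $f \to 0$ becomes $n^{1-\epsilon}$ for any prescribed $\epsilon > 0$. Complementing the graph (passing from $G_x$ to $\bar{G}_x$) turns clique into independent set with the same vertex count, transferring the clique inapproximability bound to independent set and completing the argument modulo the PCP and parallel-repetition black boxes.

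The main obstacle is the analytic heart of H\aa stad's construction: the design and tight soundness analysis of the long-code inner verifier. Bounding the soundness requires delicate Fourier-analytic inequalities on the boolean hypercube, in particular control of the low-level Fourier mass and influences of the functions the provers claim are long-code encodings of their answers, combined with Raz's highly non-trivial parallel repetition bound for two-prover games. Once these two deep inputs are taken as black boxes, the FGLSS reduction and the passage from clique to independent set via graph complementation are essentially routine combinatorial steps, and the polynomial blow-up in $|V(G_x)|$ preserves the quantitative form of the hardness bound.
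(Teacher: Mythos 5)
The paper does not actually prove this statement: it imports it as a known black box, citing exactly the two sources your sketch is built on (the PCP theorem of Arora--Lund--Motwani--Safra--Sudan--Szegedy and H\aa stad's clique inapproximability), so your proposal follows essentially the same route as the paper's implicit one --- PCP verifier, FGLSS graph, parallel repetition, long-code inner verifier with vanishing free-bit complexity, and complementation to pass from clique to independent set.

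One caveat is worth flagging, since it concerns the exact complexity assumption in the statement you are proving. H\aa stad's original $n^{1-\epsilon}$ hardness for clique is obtained via randomness-based gap amplification (randomized graph products/dispersers applied to the FGLSS graph), and as such it is stated under NP $\not\subseteq$ ZPP; under the purely deterministic assumption P $\neq$ NP his paper only gives roughly $n^{1/2-\epsilon}$. To get the full $n^{1-\epsilon}$ factor ``unless P $=$ NP,'' as the theorem (and the paper) asserts, you need Zuckerman's later derandomization of the amplification step via explicit dispersers, or you must weaken the hypothesis to ZPP $\neq$ NP. Your outline, which plugs the low-free-bit verifier directly into FGLSS and reads off $n^{1/(1+f)-o(1)}$, silently assumes this derandomized amplification; adding that citation (or adjusting the assumption) closes the only real gap in the argument, the rest being the standard machinery you correctly identify as black boxes.
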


In order to construct the independent set diversity we require the following technical lemma that characterizes an independent set function as being nonnegative, increasing, and subadditive.

\begin{lemma}\label{lemma:ISLemma}[see Appendix \ref{sec:omitted}]
Let $G = (V,E)$ be a graph. We define the independent set function $f_{IS}:2^V\to \mathbb{Z}_{\geq 0}$ as

\begin{equation}
    f_{IS}(A) = \max\{|S|: \text{$S\subseteq A$, $S$ is an independent set of $G$}\}
\end{equation}

Then the set function $f_{IS}$ is nonnegative, increasing, and subadditive.
\end{lemma}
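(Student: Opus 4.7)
The plan is to verify each of the three claimed properties of $f_{IS}$ in turn, exploiting the hereditary nature of independent sets throughout. This is essentially a routine structural verification rather than a deep argument, so I expect no serious obstacle; the main point is just to observe that the maximum-independent-set function inherits its good properties from the fact that every subset of an independent set is itself independent.

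First I would establish nonnegativity by noting that the empty set $\emptyset \subseteq A$ is vacuously an independent set, so $f_{IS}(A) \geq |\emptyset| = 0$ for every $A \subseteq V$. Next, for monotonicity, suppose $A \subseteq B$ and let $S^\star \subseteq A$ be an independent set witnessing $f_{IS}(A) = |S^\star|$. Since $S^\star \subseteq A \subseteq B$, the set $S^\star$ is also an independent set contained in $B$, so $f_{IS}(B) \geq |S^\star| = f_{IS}(A)$.

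For subadditivity, fix $A,B \subseteq V$ and let $S^\star \subseteq A \cup B$ be a maximum independent set in $A \cup B$, so that $f_{IS}(A \cup B) = |S^\star|$. Define $S_A = S^\star \cap A$ and $S_B = S^\star \cap B$. Each of $S_A$ and $S_B$ is a subset of the independent set $S^\star$, so neither contains an edge of $G$, making them independent sets contained in $A$ and $B$ respectively. Hence $|S_A| \leq f_{IS}(A)$ and $|S_B| \leq f_{IS}(B)$. Finally, since $S^\star \subseteq A \cup B$, we have $S^\star = S_A \cup S_B$, and therefore
\begin{equation}
f_{IS}(A \cup B) \;=\; |S^\star| \;=\; |S_A \cup S_B| \;\leq\; |S_A| + |S_B| \;\leq\; f_{IS}(A) + f_{IS}(B),
\end{equation}
which is the desired subadditivity. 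Combining the three properties completes the proof.

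The only mild subtlety worth flagging is that subadditivity here depends on taking a single optimal witness in $A \cup B$ and splitting it, rather than combining separate optima from $A$ and $B$ (which would fail, since union of independent sets need not be independent). The split direction works precisely because independence is closed under taking subsets.
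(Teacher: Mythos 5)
Your proof is correct and follows essentially the same approach as the paper: split a single optimal independent set in $A\cup B$ into its traces on $A$ and $B$, then invoke the hereditary (downwards-closed) property of independent sets. The only difference is a small technical shortcut: the paper first disjointifies by setting $A' = A\setminus B$ and $B' = B$ so that the cardinality split is an equality, whereas you skip this and simply use $|S_A \cup S_B| \leq |S_A| + |S_B|$; both are valid, and yours is a bit more streamlined.
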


Next, we define the independent set diversity. By Lemmas \ref{lemma:subadditive} and  \ref{lemma:ISLemma} it follows that this  is in fact a diversity.

\begin{defn}[Independent Set Diversity]
Let $G= (V,E)$ be a graph and let $f_{IS}$ be defined as  in Lemma \ref{lemma:ISLemma}. Then we define the independent set diversity, $(V,\delta_IS)$, as

\begin{equation}
    \delta_{IS}(A) = \begin{cases}
    f_{IS}(A) & \text{ if $|A|\geq 2$} \\
    0 & \text{ otherwise}
    \end{cases}
\end{equation}
\end{defn}

\begin{proof}
We prove that $(V,\delta_{IS})$ is in fact a diversity. According to Lemma \ref{lemma:ISLemma}, $f_{IS}$ is nonnegative, increasing, and subadditive set function over the ground set $V$. Then, according to Lemma \ref{lemma:subadditive} and the definition of $(V,\delta_{IS})$ it follows that $(V,\delta)$ is a (pseudo)-diversity. In fact, $(V,\delta_{IS})$ is a diversity since for any $A\subseteq V$ where $|A| \geq 2$ it follows that $\delta_{IS}(A) = f_{IS}(A) \geq 1 > 0$.
\end{proof}

Theorem \ref{thm:InapproxResult_orig} now follows by a reduction of the independent set problem to the problem of computing a low-distortion diversity embedding. We point the reader to  Appendix \ref{sec:omitted} for the complete proof.

\subsection{An $O(\log n)$ Diameter Diversity Embedding}\label{sec:DiamDivEmbed}

    In this section we prove  Theorem \ref{thm:DiamDivEmbeddingNew_orig} in Section \ref{sec:DiamEmbeddingNew}. The distortion achieved by the latter embedding is asymptotically optimal given the $\Omega(\log{n})$ lower bound.

\subsubsection{Fr\'echet Embeddings}

    Our $\ell_1$ embedding of the diameter diversity is based off polynomial-time implementations of Bourgain's original $O(\log{n})$-distortion metric embedding into $\ell_1$ \cite{bourgain1985lipschitz}. 
    
    \begin{theorem}[Restatement of Proposition 1 of  \cite{bourgain1985lipschitz}]\label{thm:Bourgain}
    Let $(X,d)$ be a finite metric space where $|X| = n$. There exists an embedding of $(X,d)$ into the $\ell_1$ metric $(\mathbb{R}^k,d_1)$, where $k\in O(2^n)$, with distortion $O(\log{n})$.
    \end{theorem}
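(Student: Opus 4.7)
The plan is to use Bourgain's classical Fr\'echet-style embedding. For each nonempty $S \subseteq X$ I would define a coordinate $\varphi_S(x) := d(x, S) = \min_{s \in S} d(x, s)$; the embedding $f : X \to \mathbb{R}^{2^n - 1}$ then stacks these $2^n - 1$ coordinates, each assigned an appropriate weight (uniform within each size class $|S| = j$, then scaled by $1/\log n$ across $\log n$ dyadic scales). This matches the stated dimension bound $k \in O(2^n)$. Non-expansion is routine: for every $S$ the reverse triangle inequality gives $|\varphi_S(x) - \varphi_S(y)| \leq d(x,y)$, so any nonnegative weighted sum yields $d_1(f(x), f(y)) \leq C \cdot d(x,y)$ for an absolute constant $C$, which I absorb into the normalization.

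The heart of the argument is the non-contraction bound $d_1(f(x), f(y)) \geq \Omega(d(x,y) / \log n)$ for every pair $x, y \in X$. I would proceed by a dyadic-scale argument. Fix $x, y$, and for each $t = 0, 1, \ldots, L$ with $L := \lfloor \log n \rfloor$ define $\rho_t$ to be the smallest $r$ such that both $|B(x,r)| \geq 2^t$ and $|B(y,r)| \geq 2^t$, truncated at $d(x,y)/4$. The key probabilistic lemma asserts that if $S$ is sampled by including each point of $X$ independently with probability $2^{-t}$, then with constant probability $S$ intersects the smaller of $B(x,\rho_{t-1})$ and $B(y,\rho_{t-1})$ while avoiding the larger of the two balls of radius $\rho_t$ (or symmetrically). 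Whenever this event occurs, $|\varphi_S(x) - \varphi_S(y)| \geq \rho_t - \rho_{t-1}$. Since assigning uniform weights to all subsets of a fixed cardinality exactly realizes the expectation under this product sampling distribution, the scale-$t$ contribution telescopes across $t$ to $\Omega(\rho_L) = \Omega(d(x,y))$; dividing by the $L \asymp \log n$ scales yields the claimed distortion.

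The main obstacle is the probabilistic lemma itself, which is the technical core of Bourgain's original argument. One must show that sampling at rate $2^{-t}$ (a) hits a set of size $\geq 2^t$ with constant probability and (b) simultaneously misses a disjoint set of size $\leq 2^t$ with constant probability; the calibration of sampling probability against ball cardinality must be just right, and the two events must be shown to be simultaneously achievable with constant probability (handled by partitioning on which of the two balls around $x, y$ is the smaller). Additional care is needed for boundary scales where $\rho_t$ saturates at $d(x,y)/4$ or where one of the balls $B(x, \rho_t)$, $B(y, \rho_t)$ already exhausts $X$ before $t$ reaches $L$, so that the telescoping sum still reaches $\Omega(d(x,y))$ rather than being truncated prematurely. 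Finally, since the statement is purely existential with $k \leq 2^n$, I would use the deterministic weighted enumeration over all nonempty subsets rather than invoking any derandomization machinery.
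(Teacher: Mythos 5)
The paper does not prove this theorem: it is cited directly to Bourgain's paper, and the text then immediately passes to the polynomial-time Linial--London--Rabinovich form (Theorem \ref{thm:LLR2}), which is the version actually used in the later embedding arguments. So the comparison here is against the classical proof you are reconstructing, not against anything in the paper.

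Your skeleton is the right one: Fr\'echet coordinates $d(\cdot,S)$ over all nonempty $S\subseteq X$, weights within each size class chosen to simulate Bernoulli$(2^{-t})$ sampling for $t=1,\dots,O(\log n)$, non-expansion from the reverse triangle inequality, and a telescoping sum over dyadic radii $\rho_0\le\cdots\le\rho_L$ truncated near $d(x,y)/4$. However, the hitting/missing lemma as you have written it has the inequality running the wrong way. You ask that a Bernoulli$(2^{-t})$ sample avoid \emph{the larger of the two balls of radius $\rho_t$}. If that ball has many more than $2^t$ points, the avoidance probability $(1-2^{-t})^{|B|}$ tends to zero and is not bounded below by a constant, so the lemma as stated fails. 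The correct event targets the \emph{smaller} one: by minimality of $\rho_t$, at least one of the \emph{open} balls $B^\circ(x,\rho_t)$, $B^\circ(y,\rho_t)$ has fewer than $2^t$ points --- say $B^\circ(y,\rho_t)$ --- and the event is that $S$ hits $B(x,\rho_{t-1})$ (which has at least $2^{t-1}$ points, so a rate-$2^{-t}$ sample hits with constant probability) while missing $B^\circ(y,\rho_t)$ (which has fewer than $2^t$ points, so the sample misses with constant probability). The two balls are disjoint because $\rho_{t-1}+\rho_t<d(x,y)$, so the two events are over disjoint coordinates of the product sample and therefore independent. Note also that which center's $\rho_{t-1}$-ball you hit is dictated by which center realizes the minimality of $\rho_t$, not by comparing cardinalities at scale $\rho_{t-1}$ as your phrasing suggests. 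Your ``(or symmetrically)'' clause and the later remark about partitioning on the smaller ball show you have the right instinct, but since this lemma is the entire engine of the non-contraction bound, the direction of the size comparison needs to be stated correctly for the argument to close.
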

    
    We remark that Bourgain's embedding is a scaled Fr\'echet embedding.

    \begin{defn}
    Let $(X,d)$ be a metric space. A Fr\'echet embedding is a map $f:(X,d)\to (\mathbb{R}^k,d')$ where each coordinate, $f_i:X\to \mathbb{R}$, of the embedding is defined as
    \begin{equation}
        f_i(x) = d(x,A_i) = \min_{y\in A_i } \text{ } d(x,y) 
    \end{equation} for some nonempty $A_i\subseteq X$. 
    \end{defn}
    
    Due to the triangle inequality of metrics, a Fr\'echet embedding is coordinate-wise non-expansive, which we define and prove below.
    
    \begin{prop}\label{prop:Frechet}
    Let $(X,d)$ be a metric space and let $f:(X,d)\to(\mathbb{R}^k,d')$ be a Fr\'echet embedding. Then for any coordinate, $i \in \{1,2,\ldots, k\}$ and any $x,y\in X$, it follows that
    
    \begin{equation}
        |f_i(x) - f_i(y)| \leq d(x,y)
    \end{equation}
    \end{prop}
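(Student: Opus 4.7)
The plan is to prove this directly from the triangle inequality of the metric $d$, since the claim is simply that each Fréchet coordinate $f_i(\cdot) = d(\cdot, A_i)$ is $1$-Lipschitz with respect to $d$. This is a classical observation and no additional machinery is needed.

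First I would fix an arbitrary coordinate $i \in \{1,\ldots,k\}$ and an arbitrary pair $x,y \in X$. By the definition of a Fr\'echet embedding there is a nonempty $A_i \subseteq X$ with $f_i(x) = \min_{z \in A_i} d(x,z)$ and similarly for $f_i(y)$. The key step is: for every $z \in A_i$, the triangle inequality gives
\begin{equation}
d(x,z) \leq d(x,y) + d(y,z).
\end{equation}
Taking the minimum of the right-hand side over $z \in A_i$ (which is attained since $A_i$ is nonempty and, in the finite setting relevant here, finite) yields
\begin{equation}
f_i(x) \;=\; \min_{z \in A_i} d(x,z) \;\leq\; d(x,y) + \min_{z \in A_i} d(y,z) \;=\; d(x,y) + f_i(y),
\end{equation}
so $f_i(x) - f_i(y) \leq d(x,y)$. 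Swapping the roles of $x$ and $y$ and repeating the argument gives $f_i(y) - f_i(x) \leq d(x,y)$, and combining the two bounds yields $|f_i(x) - f_i(y)| \leq d(x,y)$, as claimed.

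There is no real obstacle here: the entire content of the proposition is a one-line consequence of the triangle inequality applied to the point-to-set distance function. The only minor care to take is ensuring that the minimum in $d(\cdot, A_i)$ is well-defined; this is immediate since $A_i$ is required to be nonempty (and is finite in the context where these embeddings are deployed).
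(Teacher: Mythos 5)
Your proof is correct and follows essentially the same route as the paper's: both arguments reduce to the standard fact that the point-to-set distance $d(\cdot, A_i)$ is $1$-Lipschitz, via the triangle inequality. The paper fixes explicit minimizers $u, u' \in A_i$ and argues WLOG on the sign of $d(x,u)-d(y,u')$, while you instead take a minimum over $z \in A_i$ on both sides of $d(x,z) \le d(x,y) + d(y,z)$ and then symmetrize; these are two minor presentational variants of the same one-line argument.
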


The original embedding due to Bourgain is an existential result that is algorithmically intractable. Linial, London, and Rabinovich \cite{linial1995geometry}  provided a randomized polynomial-time implementation  by sampling a relatively small subset of the  of coordinate maps 
    
    \begin{equation}
        \{ f_i(x) = d(x,A_i) \}_{A_i \subseteq X}
    \end{equation} in order to achieve a randomized polynomial-time complexity.

The fact that Bourgain's embedding is a scaled Fr\'echet embedding is key since  Theorem \ref{thm:DiamDivEmbeddingNew_orig} is based off the following implementation of Bourgain's embedding.

\begin{theorem}[Lemma 3 in \cite{aumann1998log}]\label{thm:LLR2}
Let $(X,d)$ be a metric space with $|X| = n$. Then there exists an embedding, $f:X\to \mathbb{R}^{O(\log^2{n})}$, of $(X,d)$ into the $\ell_1$ metric $(\mathbb{R}^{O(\log^2{n})},d_1)$ with distortion $O(\log{n})$. That is,

\begin{equation}\label{eqn:LLR2Eqn1}
    \frac{1}{O(\log{n})}d(x,y) \leq \|f(x) - f(y)\|_1 \leq d(x,y)
\end{equation} Furthermore, the embedding $f$ is a scaled Fr\'echet embedding where for an arbitrary coordinate $i\in\{1,2,\ldots, O(\log^2{n})$\}, $f_i$ is defined as

\begin{equation}\label{eqn:LLR2Eqn2}
    f_i(x) = \frac{1}{O(\log^2{n})}d(x,A_i)
\end{equation}
where $A_i\subseteq X$.
\end{theorem}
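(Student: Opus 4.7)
The plan is to follow the classical Bourgain construction as implemented by Linial, London, and Rabinovich, keeping careful track of the dimension and the scaling so that the final map has the form prescribed in \eqref{eqn:LLR2Eqn2}. First I would build the coordinates in two nested loops: for each scale $i \in \{0,1,\ldots,\lceil \log_2 n \rceil\}$ and for each of $t = \Theta(\log n)$ independent trials $j \in \{1,\ldots,t\}$, sample a set $A_{i,j} \subseteq X$ by including each point independently with probability $2^{-i}$, and then set
\begin{equation}
f_{i,j}(x) \;=\; \frac{1}{C\log^2 n}\, d(x, A_{i,j})
\end{equation}
for an absolute constant $C$ chosen at the end of the analysis. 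The ambient dimension is $(1+\lceil \log_2 n\rceil)\cdot t = O(\log^2 n)$, and the map is manifestly a scaled Fréchet embedding in the sense required.

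The non-expansion side of \eqref{eqn:LLR2Eqn1} is immediate: Proposition \ref{prop:Frechet} tells us that each unscaled coordinate contracts, i.e.\ $|d(x,A_{i,j}) - d(y,A_{i,j})| \le d(x,y)$. Summing over the $O(\log^2 n)$ coordinates and dividing by $C\log^2 n$ gives $\|f(x)-f(y)\|_1 \le d(x,y)$ once $C$ is chosen large enough. No probability is needed here.

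The harder direction is the contraction lower bound $\|f(x)-f(y)\|_1 \ge d(x,y)/O(\log n)$, which is the Bourgain heart of the argument. Fix a pair $x,y \in X$ and, following Bourgain, define nested radii $\rho_i$ by letting $\rho_i$ be the smallest $r$ such that $\min\{|B(x,r)|,|B(y,r)|\} \ge 2^i$, truncated at $r = d(x,y)/2$ so that $B(x,\rho_i)$ and $B(y,\rho_i)$ remain disjoint. The core probabilistic claim is that at scale $i$, a random set $A_{i,j}$ has constant probability simultaneously to (a) meet $B(x,\rho_{i-1})$ (a ball of $\lesssim 2^{i-1}$ points, each hit with probability $2^{-i}$) and (b) miss $B(y,\rho_i)$ (a ball of $\le 2^i$ points), or vice versa; in either case $|d(x,A_{i,j}) - d(y,A_{i,j})| \ge \rho_i - \rho_{i-1}$. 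Telescoping, $\sum_i (\rho_i - \rho_{i-1}) = \Omega(d(x,y))$, so in expectation each single trial contributes $\Omega(d(x,y))$ to the unscaled sum $\sum_i |d(x,A_{i,j}) - d(y,A_{i,j})|$. Averaging the $t = \Theta(\log n)$ independent trials at each scale and applying a Chernoff bound makes this conclusion hold with probability $1 - n^{-3}$ for the specific pair $(x,y)$, and a union bound over the $\binom{n}{2}$ pairs then gives simultaneous success. This produces an unscaled lower bound of $\Omega(\log n \cdot d(x,y))$ (we are summing $t = \Theta(\log n)$ contributions of $\Omega(d(x,y))$ at each scale in expectation, but the telescoping is across scales so per sample we get $\Omega(d(x,y))$, and the factor $\log n$ comes from the $t$ trials). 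After dividing by $C\log^2 n$, this delivers exactly $\|f(x)-f(y)\|_1 \ge d(x,y)/O(\log n)$.

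The main obstacle, as always in Bourgain-type arguments, is the probabilistic lower bound at step three: one must verify that the two events ``meet $B(x,\rho_{i-1})$'' and ``miss $B(y,\rho_i)$'' genuinely occur together with constant probability at the chosen scale, and that the telescoping of $\rho_i - \rho_{i-1}$ really sums to $\Omega(d(x,y))$ even after the truncation at $d(x,y)/2$. Everything else, including the randomized polynomial-time guarantee, follows from the fact that sampling the $O(\log^2 n)$ sets $A_{i,j}$ and computing each $d(x,A_{i,j})$ takes $\mathrm{poly}(n)$ time.
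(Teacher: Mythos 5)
The paper does not actually prove Theorem~\ref{thm:LLR2}; it is invoked as a citation to Lemma~3 of Aumann and Rabani~\cite{aumann1998log}, which in turn rests on Linial--London--Rabinovich's polynomial-time implementation of Bourgain's embedding~\cite{linial1995geometry, bourgain1985lipschitz}. Your reconstruction is precisely that standard argument, and it is correct in all essentials: you get the right ambient dimension $(1+\lceil\log_2 n\rceil)\cdot\Theta(\log n) = O(\log^2 n)$, the coordinate-wise Lipschitz bound from Proposition~\ref{prop:Frechet} gives the non-expansion side after normalizing by $C\log^2 n$, and Bourgain's probabilistic lower bound with the nested radii $\rho_i$, constant-probability hit/miss events, telescoping to $\Omega(d(x,y))$, and Chernoff plus a union bound over pairs yields the $\Omega(\log n\cdot d(x,y))$ lower bound on the unscaled sum, hence $\Omega(d(x,y)/\log n)$ after the division by $C\log^2 n$. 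You are also right to flag the two delicate points — the simultaneous hit/miss probability at each scale and the survival of the telescoping under the truncation at $d(x,y)/2$ — as the genuine technical heart of the argument; both go through exactly as you anticipate, using the fact that at radius $\rho_{i-1}$ the relevant (open) ball has fewer than $2^{i-1}$ points while the closed ball at $\rho_i$ has at least $2^i$, so that $\rho_0 = 0$ and the last $\rho_i$ equals $d(x,y)/2$, giving the full telescope. In short, your proposal fills in the proof the paper omits by citation, following the same route the cited sources take.
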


Scaling the Fr\'echet embedding, in the above theorem, is a necessary step. In fact, the original embedding of Bourgain is also scaled proportionally to the exponentially large dimension of the $\ell_1$ metric space being embedded into. 

\subsubsection{The $O(\log{n})$-Distortion Embedding}\label{sec:DiamEmbeddingNew}

\begin{reptheorem}{thm:DiamDivEmbeddingNew_orig}
Let $(X,\delta_{\text{diam}})$ be a diameter diversity with $|X| = n$
and induced metric $d$. Then there exists a randomized polynomial-time embedding of $(X,\delta_{\text{diam}})$ into the $\ell_1$ diversity $(\mathbb{R}^{O(\log^2{n})},\delta_1)$ with (an optimal) distortion $O(\log{n})$.
\end{reptheorem}

\begin{proof}
By Theorem \ref{thm:LLR2}, we let $f:X\to\mathbb{R}^{O(\log^2{n})}$ be a scaled Fr\'echet embedding of $(X, d)$ into the $\ell_1$ metric $(\mathbb{R}^{O(\log^2{n})}, d_1)$ with distortion $O(\log{n})$. We then define our diversity embedding from $(X,\delta_{\text{diam}})$ to the $\ell_1$ diversity $(\mathbb{R}^{O(\log^2{n})}, \delta_1)$ to be simply the map $f$. Given that Theorem \ref{thm:LLR2} guarantees that $f$ is computable in randomized polynomial-time, it remains to verify that the corresponding diversity embedding attains a distortion of $O(\log{n})$. That is, for any $A\in\mathcal{P}(X)$, we show that

\begin{equation}
    \frac{1}{O(\log{n})}\delta_{\text{diam}}(A) \leq \delta_1(f(A)) \leq \delta_{\text{diam}}(A)
\end{equation}

We choose $x,y\in A$ such that 
\begin{equation}\label{eqn:OurEmbedProof1}
    \delta_\text{diam}(A) = \max_{a,b\in A}d(a,b) = d(x,y)
\end{equation} We begin with the first inequality.

\begin{align}
    \frac{1}{O(\log{n})}\delta_{\text{diam}}(A)  & = \frac{1}{O(\log{n})}\max_{u,v\in A} d(u,v) & \text{ by definition of a diameter diversity} \\
    & = \frac{1}{O(\log{n})} d(x,y) & \text{by choice of $x,y$} \\
    & \leq \|f(x) -f(y)\|_1 & \text{ by (\ref{eqn:LLR2Eqn1}), the metric embedding $f$} \\
    & = \sum_{i=1}^{O(\log^2{n})}|f_i(x) - f_i(y)| & \text{ by definition of the $\ell_1$ metric} \\
    & \leq\sum_{i=1}^{O(\log^2{n})}\max_{a,b\in A}|f_i(a)-f_i(b)| \\
    & = \delta_1(f(A)) & \text{by definition of the $\ell_1$ diversity}
\end{align}

This completes the first inequality. Then, for each $i\in \{1, 2, \ldots, O(\log^2{n})\}$ we let $a_i,b_i\in A$ be chosen such that
\begin{equation}
    |d(a_i,A_i) - d(b_i,A_i)| = \max_{a,b\in A}|d(a,A_i) - d(b,A_i)|
\end{equation} We continue with the second inequality.

\begin{align}
    \delta_1(f(A)) & = \sum_{i=1}^{O(\log^2{n})} \max_{a,b\in A}|f_i(a) - f_i(b)| & \text{by definition of the $\ell_1$ diversity} \\
    & =  \frac{1}{O(\log^2{n})}\sum_{i=1}^{O(\log^2{n})}\max_{a,b\in A}|d(a,A_i) - d(b,A_i)| & \text{by (\ref{eqn:LLR2Eqn2}), definition of $f$}\\
    & = \frac{1}{O(\log^2{n})}\sum_{i=1}^{O(\log^2{n})}|d(a_i,A_i) - d(b_i,A_i)| & \text{by choice of $a_i,b_i$'s} \\
    & \leq \frac{1}{O(\log^2{n})}\sum_{i=1}^{O(\log^2{n})}d(a_i,b_i) & \text{by Proposition \ref{prop:Frechet}} \\
    & \leq \frac{1}{O(\log^2{n})}\sum_{i=1}^{O(\log^2{n})}\max_{a,b\in A}d(a,b) & \text{by $a_i,b_i\in A$} \\
    & = \max_{a,b\in A}d(a,b) \\
    & = \delta_{\text{diam}}(A) & \text{by definition of the diameter diversity}
\end{align}

This completes the proof
\end{proof}

\subsubsection{A $k$-Diameter Diversity Embedding}

In this section we provide a proof of the fact that a $k$-diameter diversity can be embedded into the $\ell_1$ diversity with distortion $O(k\log{n})$.

At a high level, this $O(k\log{n})$-distortion embedding follows by approximating a $k$-diameter diversity with a diameter diversity, incurring an $O(k)$ approximation factor. Or in other words, embedding a $k$-diameter diversity into a diameter diversity with distortion $O(k)$. Then, the additional $O(\log{n})$ factor is incurred by embedding this diameter diversity into the $\ell_1$ diversity. 

\begin{theorem}\label{thm:kDiamToDiam}[see Appendix \ref{sec:omitted}]
Let $(X,\delta_{\text{$k$-diam}})$ be a $k$-diameter diversity, where $k\in \mathbb{Z}_{\geq 0}$. Then there is a diameter diversity into which $(X,\delta_{\text{$k$-diam}})$ can be embedded with distortion $O(k)$, in polynomial-time.
\end{theorem}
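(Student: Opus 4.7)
The plan is to show that the identity map $f: X \to X$ is an embedding of $(X, \delta_{\text{$k$-diam}})$ into the diameter diversity built on its own induced metric. Let $d$ denote the induced metric of $(X, \delta_{\text{$k$-diam}})$, so $d(x,y) = \delta_{\text{$k$-diam}}(\{x,y\})$, and let $(X, \delta_{\text{diam}})$ be the diameter diversity associated with $(X, d)$ via Definition~\ref{defn:DiameterDiversity}. This is computable in polynomial time since we need only one oracle query per pair of points to build $d$.

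For the non-contractive direction (lower bound), fix any $A \in \mathcal{P}(X)$. By monotonicity of $\delta_{\text{$k$-diam}}$ (Proposition~\ref{prop:Monotonicity}), for every $x, y \in A$ we have $d(x,y) = \delta_{\text{$k$-diam}}(\{x,y\}) \leq \delta_{\text{$k$-diam}}(A)$. Taking the maximum over pairs yields $\delta_{\text{diam}}(A) \leq \delta_{\text{$k$-diam}}(A)$.

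For the non-expansive direction (upper bound with factor $O(k)$), fix $A$ and any $B = \{x_1, \ldots, x_j\} \subseteq A$ with $j \leq k$. The plan is to iterate the triangle inequality (axiom 3 of Definition~\ref{defn:Diversity}) with a singleton as the intersection set: applying it with $A' = \{x_1, \ldots, x_{j-1}\}$, $B' = \{x_j\}$, and $C' = \{x_{j-1}\}$ gives
\begin{equation}
\delta(B) \leq \delta(A') + \delta(\{x_{j-1}, x_j\}) = \delta(\{x_1, \ldots, x_{j-1}\}) + d(x_{j-1}, x_j).
\end{equation}
Iterating this peeling-off step $j - 1$ times telescopes to $\delta(B) \leq \sum_{i=1}^{j-1} d(x_i, x_{i+1}) \leq (k-1)\max_{x,y \in B} d(x,y) \leq (k-1)\delta_{\text{diam}}(A)$. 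Here $\delta$ denotes the underlying diversity of which $\delta_{\text{$k$-diam}}$ is the $k$-diameter truncation; when $|B| \leq k$ we have $\delta_{\text{$k$-diam}}(B) = \delta(B)$ by Definition~\ref{defn:kdiamdiv}. Taking the maximum over all $B \subseteq A$ with $|B| \leq k$ gives $\delta_{\text{$k$-diam}}(A) \leq (k-1)\,\delta_{\text{diam}}(A)$.

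Combining the two bounds yields $\frac{1}{k-1}\delta_{\text{$k$-diam}}(A) \leq \delta_{\text{diam}}(A) \leq \delta_{\text{$k$-diam}}(A)$, i.e.\ distortion $O(k)$. The only subtlety I anticipate is getting the iterated triangle inequality bookkeeping right, specifically choosing $C' \subseteq A'$ at each step so that $A' \cup C' = A'$ and the recurrence actually decreases the set size by one; everything else is routine. Composing this embedding with the $O(\log n)$-distortion diameter embedding of Theorem~\ref{thm:DiamDivEmbeddingNew_orig} then yields the $O(k\log n)$ corollary used in the proof of Theorem~\ref{thm:GeneralAlgSparsestCut_orig}.
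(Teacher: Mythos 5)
Your proposal is correct and follows essentially the same approach as the paper: identify the diameter diversity on the induced metric, note that $\delta_{\text{diam}}(A) \leq \delta_{\text{$k$-diam}}(A)$ by monotonicity/minimality, and bound $\delta(B)$ for a witness $B$ with $|B|\le k$ by iterated triangle inequality over at most $k-1$ pairs, each at most $\delta_{\text{diam}}(A)$. The only cosmetic difference is that you sum over a path $\{x_i,x_{i+1}\}$ while the paper sums over a star $\{v_1,v_i\}$; both telescoping schemes close cleanly and give the same $O(k)$ factor.
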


Then the main result of this section follows as a corollary of Theorems \ref{thm:kDiamToDiam} and \ref{thm:DiamDivEmbeddingNew_orig}. 

\begin{cor}\label{cor:kdiamdivembed}
Let $(X,\delta_{\text{$k$-diam}})$ be a $k$-diameter diversity, where $k\in \mathbb{Z}_{\geq 0}$. Then $(X,\delta_{\text{$k$-diam}})$ can be embedded into the $\ell_1$ diversity with distortion $O(k\log{n})$, in randomized polynomial-time.
\end{cor}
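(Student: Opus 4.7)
The plan is to mirror the proof strategy of Corollary \ref{cor:HSteinerDivEmbed}, namely composing two known embeddings whose distortions multiply. First I would invoke Theorem \ref{thm:kDiamToDiam} to obtain a polynomial-time map $f : (X, \delta_{\text{$k$-diam}}) \to (X, \delta_{\text{diam}})$ into some diameter diversity $(X,\delta_{\text{diam}})$ with distortion $O(k)$. Then I would apply Theorem \ref{thm:DiamDivEmbeddingNew_orig} to get a randomized polynomial-time map $g : (X, \delta_{\text{diam}}) \to (\mathbb{R}^{O(\log^2 n)}, \delta_1)$ with distortion $O(\log n)$. The proposed embedding is the composition $g \circ f$.

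To verify the distortion, I would take an arbitrary $A \in \mathcal{P}(X)$ and chain the two-sided distortion bounds guaranteed by Definition~\ref{defn:EmbeddingOfDiv}. Writing the distortion of $f$ as $c_1^{(f)} c_2^{(f)} = O(k)$ and that of $g$ as $c_1^{(g)} c_2^{(g)} = O(\log n)$, applying the inequalities for $f$ on $A$ followed by those for $g$ on the image $f(A)$ yields
\begin{equation}
\frac{1}{c_1^{(f)} c_1^{(g)}}\, \delta_{\text{$k$-diam}}(A) \;\leq\; \delta_1(g(f(A))) \;\leq\; c_2^{(g)} c_2^{(f)}\, \delta_{\text{$k$-diam}}(A),
\end{equation}
so the overall distortion is the product $O(k) \cdot O(\log n) = O(k \log n)$.

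Finally, I would note that computability in randomized polynomial time follows because the composition of a polynomial-time algorithm with a randomized polynomial-time algorithm remains randomized polynomial-time, and the dimension $O(\log^2 n)$ of the target space is polynomially bounded. There is no real obstacle here; the work has already been packaged into Theorems \ref{thm:kDiamToDiam} and \ref{thm:DiamDivEmbeddingNew_orig}, and the corollary is essentially a bookkeeping exercise combining them in the same manner as Corollary~\ref{cor:HSteinerDivEmbed} combines Theorems~\ref{thm:HSteinerToSteiner} and \ref{thm:SteinerDivEmbedding}.
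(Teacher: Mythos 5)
Your proposal is correct and follows the paper's proof exactly: compose the $O(k)$-distortion embedding into a diameter diversity from Theorem~\ref{thm:kDiamToDiam} with the $O(\log n)$-distortion $\ell_1$ embedding from Theorem~\ref{thm:DiamDivEmbeddingNew_orig}, and observe that distortions multiply and the composition remains randomized polynomial time. This is the same bookkeeping argument, mirroring Corollary~\ref{cor:HSteinerDivEmbed}, that the paper gives.
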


\begin{proof}
We let $f:(X,\delta_{\text{$k$-diam}})\to (X,\delta_\text{diam})$ be a polynomial-time $O(k)$-distortion embedding of $(X,\delta_\text{$k$-diam})$ into some diameter diversity $(X,\delta_\text{diam})$, due to Theorem \ref{thm:kDiamToDiam}. We let $g:(X,\delta_{diam})\to (\mathbb{R}^m,\delta_1)$ be a randomized polynomial-time $O(\log{n})$-distortion embedding of $(X,\delta_\text{diam})$ into the $\ell_1$ diversity for some dimension $m$, due to Theorem \ref{thm:DiamDivEmbeddingNew_orig}. Then the map $g\cdot f$ is an embedding of $(X,\delta_{\text{$k$-diam}})$ into the $\ell_1$ diversity with distortion $O(k\log{n})$. Moreover, this embedding is computable in randomized polynomial-time. This completes the proof.
\end{proof}

\section{The Minimum Cost Hypergraph Steiner Problem}
\label{ch:HSP}

In this chapter we give an asymptotically optimal approximation algorithm for the minimum cost hypergraph Steiner problem. As far as we know,  this is  the first approximation algorithm for this problem (bar the case where $S=V$). 

\begin{defn}[Minimum Cost Hypergraph Steiner Problem (HSP)]\label{defn:HSP} 

Let $G = (V,E,w)$ be a hypergraph with nonnegative hyperedge weights $w:E\to\mathbb{R}_+$. For a set of \textit{Steiner nodes} $S\subseteq V$, we define $\mathcal{T}_S$ to be the collection of connected subhypergraphs of $G$ that contain the Steiner nodes $S$. Then the minimum cost hypergraph Steiner Problem, $HSP(G,S)$ is defined as

\begin{equation}
HSP(G,S) = \min_{t\in \mathcal{T}_S} \sum_{U\in t} w(U)
\end{equation} 

For convenience we may refer to the minimum cost hypergraph Steiner problem as simply the hypergraph Steiner problem, hence the use of the abbreviation HSP.

\end{defn}

This problem is a natural generalization of both the Steiner tree problem in graphs and the minimum cost spanning subhypergraph problem (MSSP) \cite{baudis2000approximating}, and yet surprisingly, it has not been explicitly investigated. We give the following logarithmic approximation. 

\begin{reptheorem}{thm:HSP_orig}
There exists a polynomial time $O(\log{n})$-approximation algorithm for the minimum cost hypergraph Steiner problem. Specifically, for a hypergraph $G = (V,E,w)$ with nonnegative hyperedge weights $w:E\to\mathbb{R}_+$ and for a set of Steiner nodes $S\subseteq V$, $HSP(G,S)$ can be approximated up to a factor of $O(\log{|S|})$ in polynomial time.
\end{reptheorem}

We remark that this is tight due to the reduction from set cover used in \cite{baudis2000approximating}.

Before proving the result, we note that the minimum cost hypergraph Steiner problem is a computational bottleneck for our approach of utilizing diversity embeddings for the sparsest cut problem in hypergraphs (see Appendix~\ref{sec:applications} for more details).
It arises in the following two computational problems.

\begin{enumerate}
    \item Given a hypergraph Steiner diversity $(V,\delta_\mathcal{H})$ defined by a hypergraph $H=(V,E,w)$ with nonnegative hyperedge weights $w:E\to\mathbb{R}_+$, for any $S\subseteq V$, the diversity value $\delta_\mathcal{H}(S)$ is precisely defined as $\delta(S) = HSP(H,S)$. If one is given $(V,\delta_\mathcal{H})$ by the hypergraph $H$ then to (approximately) query $\delta_\mathcal{H}(S)$ one must (approximately) solve $HSP(H,S)$.
    
    \item The sparsest cut LP Relaxation (\ref{eqn:LP1}) has polynomially many variables but exponentially many constraints. Specifically, for each demand $S\in E_H$, there may be an exponential number of constraints of the form
    
    \begin{equation}
        \sum_{U\in t} d_U \geq y_S, \forall t\in \mathcal{T}_{(G,S)}.
    \end{equation}
    
   Hence, obtaining polytime solutions to the relaxation requires, at  a polynomial time (approximate) separation oracle for these constraints. This amounts to solving a minimum cost hypergraph Steiner problem for each $S$. 
    
\end{enumerate}

\subsection{Proof of Theorem \ref{thm:HSP_orig}}

We proceed by reducing to the minimum cost node-weighted Steiner tree problem defined as follows.

\begin{defn}
Let $G=(V,E,w)$ be a graph with nonnegative node weights $w:V\to\mathbb{R}_+$. For a set of Steiner nodes $S\subseteq V$ we define $\mathcal{T}_S$ to be the set of minimally connected subgraphs of $G$ containing the Steiner nodes $S$, or simply subtrees of $G$ spanning $S$.  Then the minimum cost node-weighted Steiner tree problem (NSTP), NSTP$(G,S)$, is defined as

\begin{equation}
    \text{NSTP}(G,S) = \min_{t\in\mathcal{T}_S} \sum_{v \in \cup_{e\in t} e} w(v)
\end{equation}
\end{defn}

Klein and Ravi \cite{kleinravi} establish the following approximation algorithm for this problem. 

\begin{theorem}[Restatement of Theorem 1.1 of \cite{kleinravi}]\label{thm:KleinRavi} Let $G=(V,E,w)$ be a graph with nonnegative node weights $w:V\to\mathbb{R}_+$. For a set of Steiner nodes $S\subseteq V$ there is a polynomial time $O(\log{|S|})$-approximation algorithm for the minimum-cost node-weighted Steiner tree problem NSTP$(G,S)$.
\end{theorem}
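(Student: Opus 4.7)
The plan is to follow the classical approach of Klein and Ravi: introduce a structural primitive called a \emph{spider}, prove that any node-weighted Steiner tree decomposes into cheap spiders each spanning at least two terminals, and then design a greedy algorithm that at each iteration adds the subtree of minimum quotient cost. A set-cover style harmonic-sum analysis then yields the $O(\log|S|)$ bound.

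A \emph{spider} is a tree with at most one vertex of degree $\geq 3$, called its \emph{center}; all other internal vertices have degree two, so the spider is a union of internally disjoint paths (\emph{legs}) from the center to its \emph{feet}. The first step is a \textbf{Spider Decomposition Lemma}: any node-weighted tree $T$ with at least two leaves can be decomposed into a family of spiders, each with at least two legs, whose feet partition the leaves of $T$ and whose total node weight is at most a constant times $w(T)$. I would prove this by repeatedly cutting $T$ at its branching vertices: between any two consecutive branching points lies a path that becomes the leg of some spider, and each node is charged to at most two spiders, so the total weight blows up by at most a factor of $2$.

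With this lemma in hand, the algorithm maintains a partition $\mathcal{C} = \{C_1, \ldots, C_t\}$ of the terminals $S$ into currently-connected components, starting with singletons. Working in the contracted graph where each $C_i$ becomes a single super-node of weight $0$, the algorithm finds a subtree $P$ minimizing the \emph{quotient cost} $w(P)/(c(P) - 1)$, where $c(P)$ is the number of components touched by $P$; it then adds $P$ to the current solution and merges the touched components. To bound the quotient cost of the chosen $P$ at iteration with $t$ remaining components, apply the Spider Decomposition Lemma to the \emph{current optimal Steiner tree} in the contracted instance. It decomposes into spiders of total weight $O(\mathrm{OPT})$ that collectively perform $t - 1$ merges, so an averaging argument yields a spider (hence a candidate $P$) of quotient cost $O(\mathrm{OPT}/(t-1))$. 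Charging the weight added at each iteration to the corresponding merges and summing $\sum_{t=2}^{|S|} 1/(t-1) = O(\log|S|)$ gives total cost $O(\mathrm{OPT} \log |S|)$.

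The main obstacle is computing, or approximating, the minimum quotient-cost subtree in polynomial time. For a fixed candidate center $v$, the quantity $w(P)/(c(P)-1)$ over spiders rooted at $v$ is minimized by choosing the $c$ cheapest node-weighted paths from $v$ to distinct components and then optimizing over $c \geq 2$. Exact vertex-disjointness of legs is NP-hard, but one can approximate by taking the $c$ cheapest shortest paths from $v$ to components while ignoring overlap; any shared nodes (other than $v$ itself) only overcount the cost by at most a constant factor, so the approximate quotient cost is within a constant of the true optimum over spiders centered at $v$. Running a node-weighted Dijkstra from each candidate $v \in V$ to every component, sorting the resulting distances, and sweeping $c$ yields a polynomial-time oracle for the quotient-cost search; plugging the resulting constant-factor loss into the harmonic-sum analysis preserves the overall $O(\log|S|)$ approximation.
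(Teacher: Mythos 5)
First, a point of comparison: the paper does not prove this statement at all. Theorem \ref{thm:KleinRavi} is imported verbatim as Theorem 1.1 of Klein and Ravi \cite{kleinravi} and used as a black box (via the reduction in Lemma \ref{lemma:HSPReduction}). What you have written is essentially a reconstruction of Klein and Ravi's own argument (spiders, minimum quotient-cost greedy selection, harmonic-sum charging), so you are re-deriving the cited result rather than taking a different route from the paper.

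As written, though, your sketch has two defects, one of which would make a step fail. Your Spider Decomposition Lemma covers only the \emph{leaves} of $T$, but the averaging step needs the spiders to account for every current terminal (every contracted component super-node), and these need not be leaves of the optimal tree: if the contracted optimum is a path through $t$ super-nodes, only two of them are leaves, your decomposition produces a single two-footed spider, and the averaging gives a quotient bound of $O(\mathrm{OPT})$ instead of $O(\mathrm{OPT}/t)$, which destroys the harmonic sum. Klein and Ravi's lemma is stated exactly so as to avoid this: the tree decomposes into \emph{vertex-disjoint} nontrivial spiders such that every terminal is a foot \emph{or the center} of some spider (and vertex-disjointness means no factor-$2$ weight loss is needed either). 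Second, your claim that overlapping legs ``overcount the cost by at most a constant factor'' is false in general --- all $c$ shortest paths out of $v$ may share a single expensive subpath, giving overcount $\Theta(c)$ --- but it is also unnecessary. What the per-iteration charging needs are two one-sided inequalities: the sum $w(v)+\sum_{i=1}^{c} d(v,C_i)$ of the $c$ cheapest node-weighted distances is an upper bound on the weight of the union actually added to the solution, and when $v$ and the components $C_1,\dots,C_c$ are the center and feet of a spider from the decomposition of the contracted optimum, each $d(v,C_i)$ is at most the weight of the corresponding leg, so the surrogate quotient is at most that spider's quotient. With the decomposition lemma corrected to cover all terminals and the overlap argument replaced by these two inequalities, your outline matches the known proof and yields the stated $O(\log|S|)$ bound.
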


We now reduce the minimum-cost hypergraph Steiner problem to the minimum-cost node-weighted Steiner tree problem.

\begin{lemma}\label{lemma:HSPReduction}
Let $G=(V,E,w)$ be a hypergraph with nonnegative hyperedge weighted $w:E\to \mathbb{R}_+$ and let $S\subseteq V$ be a set of Steiner nodes. Then the minimum-cost hypergraph Steiner problem HSP$(G,S)$ can be reduced to an instance of the minimum-cost node-weighted Steiner tree problem in polynomial time.
\end{lemma}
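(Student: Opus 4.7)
The plan is to reduce HSP to NSTP via the standard ``star-expansion'' construction, where each hyperedge becomes a new node whose weight equals the original hyperedge's weight and whose neighbors in the new graph are precisely the vertices contained in the hyperedge. Concretely, given $G=(V,E,w)$ and $S\subseteq V$, I would build a graph $G'=(V',E',w')$ with $V'=V\cup E$ (viewing $E$ as a disjoint set of new nodes), an edge $\{v,U\}\in E'$ for every pair with $v\in V$, $U\in E$, $v\in U$, and node weights $w'(v)=0$ for $v\in V$ and $w'(U)=w(U)$ for $U\in E$. The Steiner terminals in $G'$ are the same set $S\subseteq V\subseteq V'$. This construction is clearly polynomial in the size of $G$.

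Next I would prove equivalence of objective values, which is the substance of the reduction. Starting from a feasible solution $t\in\mathcal{T}_S$ of $\mathrm{HSP}(G,S)$ with hyperedges $U_1,\ldots,U_k$, I would consider the subgraph of $G'$ induced by $\{U_1,\ldots,U_k\}\cup\bigcup_i U_i$ and show that it is connected and contains $S$: connectivity of the subhypergraph means any two of its vertices lie in a sequence of hyperedges with consecutive hyperedges sharing a vertex, which translates directly into a path through the corresponding hyperedge-nodes in $G'$. Any spanning subtree of this induced subgraph is feasible for $\mathrm{NSTP}(G',S)$ and has node-weight cost exactly $\sum_{i=1}^k w(U_i)$, since vertex-nodes carry weight $0$. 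Hence $\mathrm{NSTP}(G',S)\le \mathrm{HSP}(G,S)$.

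Conversely, given a feasible tree $t'$ for $\mathrm{NSTP}(G',S)$, let $T\subseteq E$ be the set of hyperedge-nodes appearing in $t'$. Every vertex of $t'$ other than those in $V$ lies in $T$, and the cost of $t'$ is $\sum_{U\in T}w(U)$. I would argue that $T$, viewed as a set of hyperedges in $G$, induces a connected subhypergraph of $G$ that contains $S$: any $s,s'\in S$ are joined in $t'$ by a path whose internal non-$V$ nodes are hyperedges in $T$ and which alternates between vertex-nodes and hyperedge-nodes, providing exactly a chain of hyperedges in $T$ witnessing connectivity between $s$ and $s'$ in the subhypergraph. Therefore $T\in\mathcal{T}_S$ for $G$, giving $\mathrm{HSP}(G,S)\le\mathrm{NSTP}(G',S)$. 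Combining both inequalities yields equality of the optimal values, and the same translation turns any $\alpha$-approximate NSTP solution on $G'$ into an $\alpha$-approximate HSP solution on $G$, completing the reduction.

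The main subtlety, rather than an obstacle, is being careful about the correspondence between connectivity in a hypergraph and connectivity in its star-expansion; in particular one must verify that taking a spanning tree in $G'$ of the induced subgraph does not drop any hyperedge-node (since only hyperedge-nodes carry weight, any such node that is a leaf can be pruned without cost but the argument must ensure cost is preserved, not increased). This is handled by noting that pruning zero-cost vertex-nodes is free and pruning a hyperedge-node only removes it from $T$, which can only help the HSP cost, so the inequality direction is preserved in both directions and the reduction is exact.
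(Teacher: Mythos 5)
Your proposal is correct and uses the same star-expansion construction and the same two-directional cost-comparison argument as the paper; the only cosmetic difference is that you explicitly pass to a spanning subtree of the induced subgraph in the HSP $\to$ NSTP direction, whereas the paper builds the full star and leaves the tree-taking implicit. Nothing substantive differs.
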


\begin{proof}
We first construct the instance of the minimum-cost node-weighted Steiner tree problem to `simulate' HSP$(G,S)$. We may trivially assume $|S| \geq 2$. For each hyperedge $U\in E$ we define an associated node $v_U$ and we denote these nodes by  $V_E := \{v_U: U\in E\}$

We then define $V' = V\cup V_E$. Next, for each $v_U\in V_E$ we create an edge between $v_U$ and each $v\in U$. Specifically, we define this collection of edges $E'$ as

\begin{equation}
    E' = \{\{v,v_U\}: v\in U\}
\end{equation}

Finally, we define the nonnegative node-weights $w':V'\to\mathbb{R}_+$ as 

\begin{equation}
    w'(v) = \begin{cases}
        0 & \text{ if } v\in V \\
        w(U) & \text{ if } v=v_U \text{ for some } v_U\in V_E
    \end{cases}
\end{equation}

Thus, our node-weighted graph is defined to be $G'=(V',E',w')$ and the instance of the minimum-cost node-weighted Steiner tree problem that we are reducing to is NSTP$(G',S)$. It is easy to see that this reduction can be computed in polynomial time with respect to $G$. It remains to prove the correctness of this reduction. 

We first prove that HSP$(G,S) \geq $ NSTP$(G',S)$. Let $t$ be subset of hyperedges of $G$ that correspond to a connected subhypergraph of $G$ that contains the Steiner nodes $S$. We define a feasible solution to NSTP$(G',S)$ as

\begin{equation}
    t' = \{\{v,v_U\}: v\in U, U \in t\}
\end{equation}

Then the objective value of this solution to NSTP$(G',S)$ is 

\begin{equation}
    \sum_{v\in \cup_{e\in t'}e} w'(v) = \sum_{U\in t} w'(v_U) = \sum_{U\in t} w(U)
\end{equation} where the first equality follows by the fact that $w(v) = 0$ for all $v\in V$ and the second equality follows by the fact that $w'(v_U)=  w(U)$ for all $v_U\in V_E$. Thus, the objective value of $t'$ is at most that of $t$ in HSP$(G,S)$. It remains to show that $t'$ is a feasible solution to NSTP$(G',S)$, that is $t'$ is connected and $t'$ contains the nodes $S$. 

We first prove the latter condition on $t'$. Let $v\in S$ be arbitrary. Since $t$ is subhypergraph of $G$ that contains the nodes $S$ there is some $U\in t$ such that $v\in U$. Since $t'$ is connected and $|S| \geq 2$, there is an edge $(v,v_U)\in t'$ and so $t'$ contains $v$. 

As for the former condition on $t'$, we let $v,v' \in \cup_{e\in t'} e$ be two arbitrary but distinct nodes of the graph defined by $t'$. Without loss of generality, we can assume that $v,v' \in V$ since 
there is always an edge from any $v_U$ to a $v \in U$.  Since $t$ is a connected subhypergraph of $G$, there is a sequence of hyperedges $U_1,U_2,\ldots,U_k$ such that $v\in U_1$, $v'\in U_k$, and for each $i \in \{1,2,\ldots,k-1\}$, $U_i \cap U_{i+1} \neq \emptyset$. By the last property of $U_1,U_2,\ldots,U_k$, we can create a sequence of nodes $v_1,v_2,\ldots,v_k$ such that for each $i\in \{1,2,\ldots,k-1\}, v_i \in U_i\cap U_{i+1}$. Then, by the construction of $t'$, it follows that the sequence 

\begin{equation}
    v,v_{U_1},v_1,v_{U_2},v_2,\ldots,v_{k-1},v_{U_k},v'
\end{equation} of nodes of $G'$ forms a path in the subgraph $t'$ of $G'$. Hence HSP$(G,S) \geq $ NSTP$(G',s)$.

We now prove the other inequality, HSP$(G,S) \leq $ NSTP$(G',s)$. Let $t'$ be a connected subgraph of $G'$ containing the Steiner nodes $S$.
 We define a feasible solution to HSP$(G,S)$ as  $t = \{U: v_U \in V_E\cap t'\}$. Then the objective value of this solution to HSP$(G,S)$ is

\begin{equation}
    \sum_{U\in t}w(U) = \sum_{v_U\in V_E\cap t'}w'(v_U) = \sum_{v\in t'} w'(v)
\end{equation} where the first equality follows by the fact that $w(U) = w'(v_U)$ for all $v_U\in V_E\cap t'$ and the second equality follows by the fact that $w'(v) = 0$ for all $v\in V$. Thus, the objective value of $t$ to HSP$(G,S)$ is at least as small as that of $t'$ to NSTP$(G',S)$. It remains to show that $t$ is a feasible solution to HSP$(G,S)$, that is, $t$ is connected and $t$ contains the Steiner nodes $S$.

We first prove the latter condition on $t$. Let $v\in S$ be arbitrary. Due to the fact that $t'$ contains the Steiner nodes $S$, and in particular $v$, and that $t'$ is connected there must be some edge $(v,v')\in t'$. Furthermore, $G'$ is a bipartite graph by construction, with the partitions $V\subseteq V'$ and $V_E\subseteq V'$, and since $v\in S\subseteq V$ then $v'\in V_E$. Therefore, $v' = v_U$ for some $v_U\in V_E$ where $U$ is a hyperedge of $G$ that contains the node $v$. Finally, $U\in t$ by construction of $t$ and so the subhypergraph $t$ contains the node $v$. 

As for the former condition on $t$, we let $v,v'\in \cup_{U\in t}U$ be two arbitrary but distinct nodes of the subhypergraph defined by $t$. Since $t'$ is a connected bipartite subgraph of $G'$ there is a sequence of nodes $v_1,v_{U_1}, v_2,v_{U_2},\ldots, v_{U_{k-1}},v_k$ where $v= v_1$, $v'=v_k$ and for each $i\in\{1,2,\ldots,k-1\}$ there are edges $(v_i,v_{U_i})$ and $(v_i,v_{U_{i+1}})\in t'$. Hence $v_i \in U_i\cap U_{i+1}$. Or in other words, the sequence of hyperedge $U_1,U_2,\ldots, U_{k-1}$ forms a path from $v\in U_1$ to $v\in U_{k-1}$, where the membership of $v\in U_1$ and $v\in U_{k-1}$ follows by construction of the edges of $G'$. Hence $t$ is connected
and we have proved HSP$(G,S) \leq$ NSTP$(G',S)$.

\end{proof}

Theorem \ref{thm:HSP_orig} now follows immediately by this reduction and the Klein and Ravi algorithm, Theorem \ref{thm:KleinRavi}.

\begin{singlespace}
\raggedright
\bibliographystyle{abbrvnat}
\bibliography{biblio}
\end{singlespace}

\appendix

\section{Omitted Proofs}
\label{sec:omitted}

\begin{repproposition}{prop:Monotonicity}
(Pseudo) diversities are monotone increasing. That is, if $(X, \delta)$ is a (pseudo) diversity then for any $A,B\in \mathcal{P}(X)$ we have that

\begin{equation}
    A\subseteq B \Rightarrow \delta(A) \leq \delta(B)
\end{equation}
\end{repproposition}

\begin{proof}
We let $A,B\in \mathcal{P}(X)$ be arbitrary and we assume that $A\subseteq B$. Then we let $B\setminus A = \{x_1, x_2, \ldots, x_k\}$. We define $A_0 = A$ and for each $i\in \{1, 2, \ldots, k\}$ we define $A_i = A \cup \{x_1, x_2, \ldots, x_i\}$. For any arbitrary $i\in \{0, 1, 2, \ldots, k-1\}$ we show that $\delta(A_i) \leq \delta(A_{i+1})$. 

\begin{align}
    \delta(A_i) & \leq \delta(A_i \cup \{x_{i+1}\}) + \delta(\{x_{i+1}\}) & \text{by the triangle inequality of diversities} \\
& = \delta(A_i \cup \{x_{i+1}\}) & \text{$\delta(\{x_{i+1}\}) = 0$ since $|\{x_{i+1}\}| = 1$} \\
& = \delta(A_{i+1}) 
\end{align}

Since $A_k = B$, the above result implies that 

\begin{equation}
    \delta(A) \leq \delta(A_1) \leq \delta(A_2) \leq \ldots \leq \delta(A_{k-1}) \leq \delta(B)
\end{equation} thus completing the proof.
\end{proof}

\begin{reptheorem}{thm:MinimalityOfkDiameterDiversity}
Given a (pseudo) diversity $(X,\delta)$ and $k\in \mathbb{Z}_{\geq 0}$ we let $(X,\delta_{\text{$k$-diam}})$ be the $k$-diameter (pseudo) diversity of $(X,\delta)$. Then for any $(X,\delta')\in \mathcal{D}_{(X,\delta,k)}$ and any $A\in\mathcal{P}(X)$ it follows that

\begin{equation}
    \delta_{\text{$k$-diam}}(A) \leq \delta'(A)
\end{equation} Or in other words $(X,\delta_{\text{$k$-diam}})$ is the minimal (pseudo) diversity of the family $\mathcal{D}_{(X,\delta,k)}$.
\end{reptheorem}

\begin{proof}
For an arbitrary $k\in\mathbb{Z}_{\geq 0}$, we let $(X,\delta_{\text{$k$-diam}})$ be the $k$-diameter diversity of $(X,\delta)$. First, we argue that $(X,\delta_{\text{$k$-diam}})$ is in fact a member of $\mathcal{D}_{(X,\delta,k)}$. We consider an arbitrary $A\in\mathcal{P}(X)$ such that $|A| \leq k$. Then it follows that

\begin{align}
\delta_{\text{$k$-diam}}(A) & = \max_{B\subseteq A: |B|\leq k}\delta(B) & \text{by definition of a $k$-diameter diversity} \\
& = \delta(A) & \text{by $|A|\leq k$ and diversities being increasing, Proposition \ref{prop:Monotonicity}}
\end{align}

Hence, $\forall A\in\mathcal{P}(X)$ such that $|A| \leq k$ it follows that $\delta_{\text{$k$-diam}}(A) = \delta(A)$ and so $(X,\delta_{\text{$k$-diam}})\in \mathcal{D}_{(X,\delta,k)}$. Next, we establish the minimality of $(X,\delta_{\text{$k$-diam}})$ with regards to this family of diversities. We let $(X,\delta') \in \mathcal{D}_{(X,\delta,k)}$ and $A\in\mathcal{P}(X)$ be arbitrary. Then it follows that

\begin{align}
\delta_{\text{$k$-diam}}(A) & = \delta(B) & \text{for some $B\subseteq A$ such that $|B|\leq k$} \\
& = \delta'(B) & \text{by definition of $(X,\delta')\in \mathcal{D}_{(X,\delta,k)}$} \\
& \leq \delta'(A) & \text{by diversities being increasing, Proposition \ref{prop:Monotonicity}}
\end{align}

This completes the proof.

\end{proof}

\begin{replemma}{lemma:subadditive}
Let $X$ be a set and let $f:2^X\to\mathbb{R}$ be a nonnegative, increasing, and subadditive set function. Then $(X,\delta)$ is a pseudo-diversity where $\delta$ is defined as 

\begin{equation}
    \delta(A) = \begin{cases}  f(A) & \text{if $|A| \geq 2$} \\ 0 & \text{otherwise} \end{cases}
\end{equation}
\end{replemma}

\begin{proof}
Let $\delta$ be defined as in the lemma statement. We begin by showing that $(X,\delta)$ satisfies the first two axioms of a pseudo-diversity. By the nonnegativity of $f$, $\delta$ is likewise nonnegative by its construction, thus $(X,\delta)$ satisfies the first axiom of pseudo-diversities. Likewise by construction, $\delta(A) = 0$ if $|A| \leq 1$ and thus $(X,\delta)$ satisfies the second axiom. 

In order to prove that $(X,\delta)$ satisfies the third axiom of a pseudo-diversity we first establish that $\delta$ is an increasing set function. Let $A\subseteq B\in\mathcal{P}(X)$ be arbitrary. For the case where $|A| \leq 1$ then 

\begin{equation}
    \delta(A) = 0 \leq \delta(B)
\end{equation} by the nonnegativity of $\delta$. As for the case where $|A| \geq 2$ then 

\begin{equation}
    \delta(A) = f(A) \leq f(A\cup B) = \delta(A\cup B)
\end{equation} where the two equalities follow by the fact that $|A|,|A\cup B| \geq 2$ and the inequality follows by the fact that $f$ is increasing. This concludes the proof that $\delta$ is an increasing set function.

We now prove that $(X,\delta)$ satisfies the third axiom of pseudo-diversities. Let $A,B,C\subseteq X$ be arbitrary where $C\neq \emptyset$. We consider several cases.

\begin{enumerate}
    \item The first case is where one of $A$ and $B$ is empty. Without loss of generality, we assume that $A = \emptyset$. Then,
    \begin{align}
        \delta(A\cup B) & = \delta(B) \\
        & \leq \delta(B\cup C) & \text{by $\delta$ being increasing} \\
        & \leq \delta(A\cup C) + \delta(B\cup C) & \text{by $\delta$ being nonnegative}
    \end{align}
   
    \item The second case that we consider is where $|A| = |B| = 1$. There are two subcases that we consider. The first is where $A=B$ in which case $|A\cup B| = 1$ and so
    
    \begin{align}
         \delta(A\cup B) & = 0 & \text{by $|A\cup B| = 0$} \\
         & \leq \delta(A\cup C) + \delta(B\cup C) & \text{by $\delta$ being nonnegative}
    \end{align}
    
    Then the second subcase is where $A\neq B$ and so $|A\cup B| \geq 2$. Additionally, we divide this subcase into two more sub-subcases. The first sub-subcase is where $A\cap C = \emptyset$ and $B\cap C = \emptyset$, in which case $|A\cup C|,|B\cup C|\geq 2$ and so 
    
    \begin{align}
         \delta(A\cup B)& = f(A\cup B) & \text{by $|A\cup B| \geq 2$}  \\
         & \leq f(A) + f(B) & \text{by subadditivity of $f$ } \\
         & \leq f(A\cup C) + f(B\cup C) & \text{by $f$ being increasing} \\
         & = \delta(A\cup C) + \delta(B\cup C) & \text{ by $|A\cup C|,|B\cup C| \geq 2$}
    \end{align}
    
    The second sub-subcase is where, without loss of generality, $B \subseteq C$, and so 
    
    \begin{align}
         \delta(A\cup B)& \leq \delta(A\cup C) & \text{by $\delta$ being increasing} \\
         & \leq \delta(A\cup C) + \delta(B\cup C) & \text{by $\delta$ being nonnegative}
    \end{align}
    
    \item The final case is where, without loss of generality, $|A| \geq 2$ and $|B| \geq 1$. Furthermore, we consider two subcases. The first is where $|B\cup C| = 1$ in which case $B=C$ and so
    
    \begin{align}
        \delta(A\cup B) & =  f(A\cup B) & \text{by $|A\cup B|\geq 2$} \\
        & = f(A\cup C) & \text{by $B=C$} \\
        & = \delta(A\cup C) & \text{by $|A\cup C| \geq 2$} \\
        & \leq \delta(A\cup C) + \delta(B\cup C) & \text{by $\delta$ being nonnegative}
    \end{align}
    
    Then the second subcase is where $|B\cup C| \geq 2$. Then,
    
    \begin{align}
         \delta(A\cup B) & = f(A\cup B) & \text{by $|A\cup B|\geq 2$} \\
         & \leq f(A) + f(B) & \text{by $f$ being subadditive} \\
         & \leq f(A\cup C) + f(B\cup C) & \text{by $f$ being increasing} \\
        & = \delta(A\cup C) + \delta(B\cup C) & \text{by $|A\cup C|,|B\cup C| \geq 2$}
    \end{align}
    
\end{enumerate}

This completes the proof.
\end{proof}

\begin{reptheorem}{thm:kDiamToDiam}
Let $(X,\delta_{\text{$k$-diam}})$ be a $k$-diameter diversity, where $k\in \mathbb{Z}_{\geq 0}$. Then there is a diameter diversity into which $(X,\delta_{\text{$k$-diam}})$ can be embedded with distortion $O(k)$, in polynomial-time.
\end{reptheorem}

\begin{proof}
We let $(X,d)$ be the induced metric space of $(X,\delta_{\text{$k$-diam}})$. We let $(X,\delta_{\text{diam}})\in\mathcal{D}_{(X,d)}$ be the corresponding diameter diversity whose induced metric space is $(X,d)$. It suffices to show that $(X,\delta_{\text{$k$-diam}})$ embeds into $(X,\delta_{\text{diam}})$ with distortion $k$. That is, for any $A\in\mathcal{P}(X)$ it suffices to show that

\begin{equation}
    \delta_{\text{diam}}(A) \leq \delta_{\text{$k$-diam}} \leq k \delta_{\text{diam}}
\end{equation}

We let $A\in\mathcal{P}(X)$ be arbitrary and we choose $B\subseteq A$ where $|B|\leq k$ and \begin{equation}
    \delta_{\text{$k$-diam}}(A) = \delta_{\text{$k$-diam}}(B)
\end{equation} Without loss of generality we enumerate the elements of $B$ as \begin{equation}
    B = \{v_1,v_2,\ldots,v_j\}
\end{equation} where $j \leq k$. Then it follows that

\begin{align}
    \delta_{\text{diam}}(A) & \leq \delta_{\text{$k$-diam}}(A) & \text{by minimality of the diameter diversity, Theorem \ref{thm:ExtremalResults}} \\
    & = \delta_{\text{$k$-diam}}(B) & \text{by choice of $B$} \\
    & \leq \sum_{i=2}^j d(v_1,v_i) & \text{by the triangle inequality, Proposition \ref{prop:Monotonicity}}\\
    & \leq \sum_{i=2}^j\max_{u,v\in B} d(u,v) \\
    & \leq k \max_{u,v\in B}d(u,v) & \text{by $j\leq k$} \\
    & = k \delta_{\text{diam}}(B) & \text{by definition of the diameter diversity} \\
    & \leq k \delta_{\text{diam}}(A) & \text{by $B\subseteq A$ and diversities being increasing, Proposition \ref{prop:Monotonicity}}
\end{align}

The induced metric space of $(X,d)$ of $(X,\delta_{\text{$k$-diam}})$ can be computed in polynomial-time. Then the corresponding diameter diversity $(X,\delta_\text{diam})$ can be computed in polynomial-time given the $(X,d)$. Hence, this embedding is polynomial-time computable. This completes the proof.
\end{proof}

\begin{replemma}{lemma:ISLemma}
Let $G = (V,E)$ be a graph. We define the independent set function $f_{IS}:2^V\to \mathbb{Z}_{\geq 0}$ as

\begin{equation}
    f_{IS}(A) = \max\{|S|: \text{$S\subseteq A$, $S$ is an independent set of $G$}\}
\end{equation}

Then the set function $f_{IS}$ is nonnegative, increasing, and subadditive.
\end{replemma}

\begin{proof}
By construction, $f_{IS}$ is clearly nonnegative. As for the increasing property of $f_{IS}$, it suffices to show that for any any arbitrary $A\subseteq B\subseteq V$, $f_{IS}(A) \leq f_{IS}(B)$. We let $A\subseteq B\subseteq V$ be arbitrary and we suppose that $S\subseteq A$ is an independent set of $G$ such that $f_{IS}(A) = |S|$. Since $S\subseteq A\subseteq B$, it also follows that $|S| \leq f_{IS}(B)$. Hence $f_{IS}(A) \leq f_{IS}(B)$ and thus $f_{IS}$ is increasing.

Next, to prove that $f_{IS}$ is subadditive we first argue that independent sets of $G$ are downwards closed. That is, if $S\subseteq V$ is an independent of $G$, then for any $S'\subseteq S$ it follows that $S'$ is an independent set of $G$. Let $u,v\in S'$ be arbitrary, since $u,v\in S'\subseteq S$ and $S$ is an independent set of $G$ then $(u,v)\not\in E$ and so $S'$ is also an independent set of $G$. 

Now we proceed to prove that $f_{IS}$ is subadditive. Let $A,B\subseteq V$ be arbitrary. It suffices to show that $f_{IS}(A\cup B) \leq f_{IS}(A) + f_{IS}(B)$. We define $A' = A \setminus B$ and $B' = B$. Then $A'\cap B' = \emptyset$ and $A'\cup B' = A\cup B$. Hence,

\begin{equation}\label{eqn:ISProof1}
    f(A\cup B) = f(A'\cup B')
\end{equation}

We let $S_{A'\cup B'}\subseteq A'\cup B', S_{A'}\subseteq A', S_{B'}\subseteq B'$ be independent sets of $G$ such that

\begin{equation}\label{eqn:ISProof2}
    f_{IS}(A'\cup B') = |S_{A'\cup B'}|, \text{ } f_{IS}(A') = |S_{A'}|, \text{ } f_{IS}(B') = |S_{B'}|
\end{equation}

Since $S_{A'\cup B'} \subseteq A'\cup B'$ then

\begin{equation}\label{eqn:ISProof3}
    S_{A'\cup B'} = (S_{A'\cup B'} \cap A') \cup (S_{A'\cup B'} \cap B')
\end{equation}

Moreover, since $A'\cap B' = \emptyset$ then 

\begin{equation}\label{eqn:ISProof4}
    |S_{A'\cup B'}| = |S_{A'\cup B'} \cap A'| + |S_{A'\cup B'} \cap B'|
\end{equation}

We also note that by the downwards closed property of independent sets, $S_{A'\cup B'}\cap A'$ and $S_{A'\cup B'}\cap B'$ are independent sets of $G$. Furthermore, this implies that

\begin{equation}\label{eqn:ISProof5}
    |S_{A'\cup B'}\cap A'| \leq f_{IS}(A') \text{ and } |S_{A'\cup B'}\cap B'| \leq f_{IS}(B')
\end{equation}

Putting everything together, we have that

\begin{align}
    f_{IS}(A\cup B) & = f_{IS}(A'\cup B') & \text{ by (\ref{eqn:ISProof1}}) \\
    & = |S_{A'\cup B}| & \text{ by (\ref{eqn:ISProof2})} \\
    & = |S_{A'\cup B'} \cap A'| + |S_{A'\cup B'} \cap B'| & \text{ by (\ref{eqn:ISProof4})} \\
    & \leq f_{IS}(A') + f_{IS}(B') & \text{ by (\ref{eqn:ISProof5})} \\
    & \leq f_{IS}(A) + f_{IS}(B) & \text{ by the fact that $f_{IS}$ is increasing}
\end{align}
This completes the proof.
\end{proof}

\begin{reptheorem}{thm:InapproxResult_orig}
For any $p \geq 0$ and for any $\epsilon > 0$, there does not exist a polynomial-time diversity $\ell_1$ embedding that queries a diversity on sets of cardinality at most $O(\log^p{n})$ with a distortion of $O(n^{1-\epsilon})$, unless P=NP.
\end{reptheorem}

\begin{proof}
For this proof, we assume that P$\neq$NP, and for the sake of contradiction, we assume the negation of the theorem statement. That is, we suppose that for some $p \geq 0$ and for some $\epsilon > 0$ there is a polynomial-time diversity embedding into $\ell_1$ that queries the diversity on sets of cardinality at most $O(\log^p{n})$ with a distortion of $O(n^{1-\epsilon})$.

We let $G = (V,E)$ be an arbitrary graph with $|V| = n$. We let $(V,\delta_{IS})$ be the corresponding independent set diversity as defined by Definition \ref{defn:IndependentSetDiversity}. $(V,\delta_{IS})$ is defined implicitly by the graph $G$ and, at this point, we are not insisting on being able to explicitly compute $\delta_{IS}(A)$ for any $A\subseteq V$.  We note that by definition of $(V,\delta_{IS})$ we have that

\begin{equation}
    \delta_{IS}(V) = \max\{|S|: \text{$S\subseteq V$, $S$ is an independent set of $G$}\} = \text{ISP}(G)
\end{equation}

Therefore, any approximation of $\delta_{IS}(V)$ yields an approximation of the instance of the independent set problem, ISP$(G)$. In particular, we show how an algorithm that embeds a diversity into $\ell_1$ yields an approximation algorithm for $\delta_{IS}(V)$ and in turn for ISP$(G)$.

We let $f:V\to \mathbb{R}^d$, for some $d\in\mathbb{Z}_{\geq 0}$, be the diversity embedding of $(V,\delta_{IS})$ into $(\mathbb{R}^d, \delta_1)$ given by our algorithm at the beginning of this proof. By assumption we can compute $f$ in polynomial-time, provided we can query $\delta_{IS}(A)$ for sets $A\subseteq V$ where $|A| \in O(\log^p{n})$ in polynomial-time. Although, $(V,\delta_{IS})$ is defined implicitly by the graph $G$, we can compute $\delta_{IS}(A)$ for every $A\subseteq V$ where $|A|\in O(\log^p{n})$ in polynomial-time, specifically $O(n^p)$, using a brute-force enumeration of all independent sets of $G$ that are subsets of $A$. Therefore, $f$ is computable in polynomial-time.

Furthermore, the distortion of the embedding $f$ is of the factor $O(n^{1-\epsilon})$. Specifically, we have the following guarantee on the value of $\delta_{IS}(V)$ with respect to $\delta_1(f(V))$,

\begin{equation}
    \frac{1}{c_1}\delta_{IS}(V) \leq \delta_1(f(V)) \leq c_2\delta_{IS}(V)
\end{equation} where $c_1,c_2 > 0 $ and $c_1c_2 = O(n^{1-\epsilon})$. Given that $\delta_{IS}(V) = \text{ISP}(G)$ and that $f$ is computable in polynomial-time, this implies that we have a polynomial-time approximation algorithm for the independent set problem with an approximation factor of $O(n^{1-\epsilon})$. However, this contradicts Theorem \ref{thm:InapproxIS}. This completes the proof.

\end{proof}

\section{Applications of the Hypergraph Steiner Problem}
\label{sec:applications}

In this section we provide two applications of our approximation algorithm for the minimum cost hypergraph Steiner problem, Theorem \ref{thm:HSP_orig}. The first application is that of approximately computing a hypergraph Steiner diversity given a hyperedge-weighted hypergraph.

\begin{cor}\label{cor:HSteinerDivApproxAlg}
Let $H=(V,E,w)$ be a hypergraph with node set $V$, hyperedge set $E$, and nonnegative hyperedge weights $w:E\to\mathbb{R}_+$. Let $(V,\delta_\mathcal{H})$ be the corresponding hypergraph Steiner diversity defined by $H$. Then for any $A\in\mathcal{P}(V)$, $\delta(A)$ can be computed up to a factor of $O(\log{|A|})$ in polynomial time. 
\end{cor}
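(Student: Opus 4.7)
The plan is to observe that Corollary \ref{cor:HSteinerDivApproxAlg} is essentially an immediate reformulation of Theorem \ref{thm:HSP_orig} through Definition \ref{defn:HSteinerDiversity}. Recall that the hypergraph Steiner diversity is defined by
\[
    \delta_{\mathcal{H}}(A) \;=\; \min_{t \in \mathcal{T}_A} \sum_{U \in t} w(U)
\]
whenever $|A| \geq 2$, and $\delta_{\mathcal{H}}(A) = 0$ otherwise. By the definition of $\mathcal{T}_A$ (the collections of hyperedges forming a connected subhypergraph containing $A$), this quantity is exactly $\text{HSP}(H, A)$ from Definition \ref{defn:HSP}, with the Steiner nodes taken to be $A$ itself.

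First I would handle the degenerate cases $|A| \leq 1$ separately: here $\delta_{\mathcal{H}}(A) = 0$ by definition, which is trivially computable in constant time and vacuously satisfies any multiplicative approximation guarantee. For the main case $|A| \geq 2$, I would directly invoke Theorem \ref{thm:HSP_orig} on the input $(H, A)$, producing in polynomial time a value $\hat{\delta}$ satisfying
\[
    \delta_{\mathcal{H}}(A) \;\leq\; \hat{\delta} \;\leq\; O(\log |A|) \cdot \delta_{\mathcal{H}}(A),
\]
which is precisely the claimed approximation.

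There is no substantive obstacle here: the corollary is simply the observation that querying a hypergraph Steiner diversity at a subset $A$ is definitionally the same computational problem as solving HSP on $(H, A)$, and Theorem \ref{thm:HSP_orig} already provides an $O(\log |S|)$-approximation for that problem with $S = A$. The only thing worth noting is that the approximation factor is $O(\log |A|)$ rather than $O(\log |V|)$, which follows because the Klein--Ravi-based bound of Theorem \ref{thm:HSP_orig} is stated in terms of the number of terminals $|S|$, not the size of the ground set.
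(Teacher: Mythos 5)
Your proof is correct and takes essentially the same approach as the paper: identify $\delta_{\mathcal{H}}(A)$ with $\mathrm{HSP}(H,A)$ by definition and invoke Theorem~\ref{thm:HSP_orig}. The explicit handling of the trivial case $|A|\leq 1$ is a small addition the paper omits, but the substance is identical.
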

\begin{proof}
This corollary follows immediately from Theorem \ref{thm:HSP_orig} and by the fact that for any $A\in\mathcal{P}(V)$ we have that \begin{equation}
    \delta(A) = \min_{t\in \mathcal{T}_{(H,A)}}\sum_{U\in t}w(U) = HSP(H,A)
\end{equation} and so computing $\delta(A)$ is equivalent to solving $HSP(H,A)$.
\end{proof}

The second application of Theorem \ref{thm:HSP_orig} is that LP Relaxation (\ref{eqn:LP1}) can be approximated up to an approximation factor of $O(\log{r_H})$ in polynomial time.

\begin{cor}\label{cor:LPRelaxApproxAlg}
Let $G=(V,E_G,w_G)$ be a supply hypergraph and let $H=(V,E_H,w_H)$ be a demand hypergraph with rank $r_H$. Then the sparsest cut LP Relaxation (\ref{eqn:LP1}) can be be approximated up to a factor of $O(\log{r_H})$ in polynomial time. 
\end{cor}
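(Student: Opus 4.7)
The plan is to apply the ellipsoid method to LP Relaxation~(\ref{eqn:LP1}) using an approximate separation oracle derived from Theorem~\ref{thm:HSP_orig}. The LP has polynomially many variables $\{d_U\}_{U\in E_G} \cup \{y_S\}_{S \in E_H}$, so the only obstruction to polynomial-time solvability is the exponential family of constraints
\[
\sum_{U \in t} d_U \geq y_S \qquad \forall S \in E_H,\; t \in \mathcal{T}_{(G,S)}.
\]
For a fixed $S$, verifying all these constraints amounts to checking whether $\min_{t \in \mathcal{T}_{(G,S)}} \sum_{U \in t} d_U \geq y_S$, which is precisely the minimum cost hypergraph Steiner problem HSP$(G,S)$ with hyperedge weights $\{d_U\}$ and Steiner nodes $S$.

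Given a candidate solution $(\bar d, \bar y)$, the plan is to run the algorithm of Theorem~\ref{thm:HSP_orig} on $(G, S, \bar d)$ for each $S \in E_H$. Since $|S| \leq r_H$, this returns, in polynomial time, a subhypergraph $t_S \in \mathcal{T}_{(G,S)}$ with $\sum_{U \in t_S} \bar d_U \leq \alpha \cdot \mathrm{HSP}(G, S)$ for some $\alpha = O(\log r_H)$. If $\sum_{U \in t_S} \bar d_U < \bar y_S$ for some $S$, the corresponding constraint is violated and can be returned to the ellipsoid method as a hyperplane; otherwise $\mathrm{HSP}(G, S) \geq \bar y_S / \alpha$ for every $S$, so the scaled solution $(\bar d, \bar y / \alpha)$ is \emph{truly} feasible for the LP. This is the standard ``approximate separation implies approximate optimization'' reduction.

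Concretely, the plan is to run the ellipsoid method with the above oracle on the LP. Either a truly optimal $(d^*, y^*)$ is found (in the unlikely case that no violations are ever detected at the relevant precision), or the method terminates with some $(\bar d, \bar y)$ for which the oracle certifies that $(\bar d, \bar y/\alpha)$ satisfies all LP constraints. Since the LP is a minimization, and the objective $\sum_U w_G(U) d_U$ depends only on $\bar d$, the value attained is within a factor $\alpha = O(\log r_H)$ of the true optimum: the LP optimum scaled down by a factor $\alpha$ is feasible under our relaxed oracle, so the oracle-feasible set has optimum at most $\alpha \cdot \mathrm{OPT}_{\mathrm{LP}}$, while our recovered $(\bar d, \bar y/\alpha)$ is genuinely LP-feasible with the same objective value $\sum_U w_G(U) \bar d_U$.

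The main obstacle is a minor technical one: care is needed to ensure the LP has a polynomially bounded encoding of an approximately optimal vertex, so that the ellipsoid method runs in polynomial time with the Steiner-based oracle. This follows from the assumption at the end of Section~\ref{sec:lp} that solutions have polynomially bounded encoding size, combined with the standard argument that $\alpha$-approximate separation suffices to $\alpha$-approximate the LP optimum. Putting these ingredients together yields the claimed $O(\log r_H)$-approximation for LP Relaxation~(\ref{eqn:LP1}).
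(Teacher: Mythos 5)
Your approach matches the paper's: separate the exponentially many hypergraph Steiner tree constraints with the $O(\log r_H)$-approximate oracle from Theorem~\ref{thm:HSP_orig} (using $|S| \leq r_H$), and run the ellipsoid method, losing the oracle's approximation factor. The paper states this in one line; your write-out of the "approximate separation implies approximate optimization" reduction is the right idea, but the specific scaling is off: the rescaled point $(\bar d, \bar y/\alpha)$ is \emph{not} truly LP-feasible because the normalization constraint $\sum_{S} w_H(S)\, y_S \geq 1$ degrades to $\geq 1/\alpha$. The correct fix is to scale $\bar d$ up instead: from $\mathrm{HSP}(G,S,\bar d) \geq \bar y_S/\alpha$ for all $S$, the point $(\alpha\bar d, \bar y)$ satisfies every constraint of LP~(\ref{eqn:LP1}), including normalization, with objective $\alpha \sum_U w_G(U)\bar d_U$; since the relaxed oracle's feasible region contains the true feasible region, $\sum_U w_G(U)\bar d_U \leq \mathrm{OPT}_{\mathrm{LP}}$, giving the $O(\log r_H)$-approximation.
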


\begin{proof}
LP Relaxation (\ref{eqn:LP1}) has polynomially many variables and, in general, exponentially many constraints. Specifically, for each $S\in E_H$ the following set of constraints may be exponentially large

\begin{equation}\label{eqn:HSPEqn1}
    \{\sum_{U\in t} d_U \geq y_S\}_{t\in \mathcal{T}_{(G,S)}}
\end{equation} Separating over these constraints amounts to the decision problem \begin{equation}
    HSP(G,S) = \min_{t\in\mathcal{T}_{(G,S)}}\sum_{U\in t}d_U \geq y_S
\end{equation} We can use Theorem \ref{thm:HSP_orig} to approximate HSP(G,S) up to an $O(\log{r_H})$ factor, as $|S|\leq r_H$. Hence, up to a factor of $O(\log{r_H})$, we can verify whether the set of constraints (\ref{eqn:HSPEqn1}) are approximately satisfied, and if not, we can find an approximate separating hyperplane. Then by the ellipsoid algorithm we can solve the LP Relaxation (\ref{eqn:LP1}) up to an $O(\log{r_H})$ approximation factor in polynomial time.
\end{proof}


\end{document}